\theoremstyle{plain}
\newtheorem{theorem}{Theorem}
\theoremstyle{definition}
\newtheorem{assumption}{Assumption}
\newtheorem{proposition}{Proposition}
\theoremstyle{definition}
\newtheorem{remark}{Remark}
\def\@copyrightspace{\relax}
\begin{document}

\title{Terminal-Angle-Constrained Guidance based on Sliding Mode Control for UAV Soft Landing on Ground Vehicles}

\author{Sashank~Modali,~Satadal~Ghosh,~Sujit~P.B.
\thanks{Sashank Modali and Satadal Ghosh are with the Department of Aerospace Engineering at Indian Institute of Technology Madras, Chennai, 600036 India (e-mail: ae16b031@smail.iitm.ac.in; satadal@iitm.ac.in).}
\thanks{Sujit P.B. is with the Department of Electrical Engineering and Computer Science at  Indian Institute of Science Education and Research Bhopal, Bhopal, 462066 India (e-mail: sujit@iiserb.ac.in).}}

\maketitle

\begin{abstract}
In this paper the problem of guidance formulation for autonomous soft landing of unmanned aerial vehicles on stationary, moving, or accelerating / maneuvering ground vehicles at desired approach angles in both azimuth and elevation is considered. Nonlinear engagement kinematics have been used. While integrated nonlinear controllers have been developed in the literature for this purpose, in practical implementations the controller inputs often need modification of the existing autopilot structure, which is challenging. In order to avoid that a higher-level guidance algorithm is designed in this paper leveraging sliding mode control-based approach. In the presented guidance formulation, target-state-dependent singularity can be avoided in the guidance command. The effectiveness of the presented guidance law is verified with numerical simulation studies. However, since the algorithm in its basic form is found to demand high guidance command at large distances from maneuvering ground targets, a two-phase guidance is presented next to avoid this problem and validated with numerical simulations. Finally, the efficacy of the modified guidance algorithm is validated by Software-In-The-Loop simulations for a realistic testbed. 

\end{abstract}

\begin{IEEEkeywords}
Autonomous Landing , Guidance , Sliding mode control , Stationary, Moving or Accelerating target, Approach angle
\end{IEEEkeywords}

\section{Introduction}
Unmanned aerial vehicles (UAVs) have become essential for civilian applications \cite{geng2013mission,waharte2010supporting} due to simplicity in their operations and ease of availability in the market. Often they are used in critical applications like search and rescue \cite{searchandrescue}, product delivery 
operations \cite{uavdelivery}, etc. These vehicles often need to land accurately in constrained environments while performing these operations. Consequently, the current practice of manual intervention during landing may not be feasible in such circumstances. Further, due to environmental restrictions or threats in current applications of UAVs, especially in civilian domain, terminal/approach angle constrained landing is desired, in which it is of interest to approach the target following a pre-specified direction \cite{allaspectstationaryiac}. For example, in applications like the autonomous delivery of commodities to a moving truck, or shifting of items in the shop floor of a manufacturing unit, achieving a desired terminal angle is crucial for mission success. In such cases, executing the landing is challenging even with manual intervention. Therefore, there is a need to develop autonomous landing solutions for UAVs for accurate and soft landing on stationary as well as on moving (accelerating or non-accelerating) platforms at desired approach angles.

There are two components in landing -- detection of landing site (a.k.a. target) and tracking of the target \cite{gautam2014survey}. For detection of landing site, different kinds of sensing and data processing mechanisms have been presented in the literature. Transformation-invariant Hu-moments \cite{hu-og} applied on on-board camera-fed target images were leveraged in \cite{saripalli2002vision}, \cite{chandra2019hu} for this purpose, while scale-invariant feature transform of target image was used for feature matching with landing zone in \cite{cesetti2009vision}. Altimeter-feature matching and edge detection in target image were fused in \cite{campoy1} for target detection, while target image and LIDAR information were fused in \cite{arora2013infrastructure} for the same purpose. On the other hand, GPS-based identification of landing site was utilized in \cite{gps2,gps3,skulstad2015autonomous}. However, detection of landing site is beyond the scope of this paper. It is assumed in this paper that highly accurate data about the states of the target and the UAV itself are available to the UAV.

Once the target information is acquired, the next task is to design a tracking guidance and control mechanism for the aerial vehicle so that it can land on the target. To this end, guidance and control modules have been developed in the literature in both integrated \cite{ahmed2008backstepping, vlantis2015quadrotor, ghommam2017autonomous, leevisualservoing} and independent ways. An integrated backstepping controller for landing developed for flapping rotor-blade dynamics of rotary wing vehicles was presented in \cite{ahmed2008backstepping}. A nonlinear Model Predictive Control (MPC)-based approach was presented in \cite{vlantis2015quadrotor} for landing on a rover moving on an inclined platform. An adaptive tracking control scheme was developed in \cite{ghommam2017autonomous} using backstepping and dynamic surface control for quad-rotor landing. In \cite{leevisualservoing}, visual servoing was realized based on adaptive sliding mode control for the purpose of landing.

Though integrated guidance and control blocks show good performance, they often require modification of existing autopilot systems, which in practice could be challenging. Moreover, they often restrict the application of the design to a particular type of vehicle. Therefore, as an alternative, an on-board computer is planned to be utilized that determines suitable reference commands, known as guidance commands, which are sent to the autopilot. Since guidance commands are higher level commands, they only provide a reference to the lower level controllers, which in effect tries to achieve the desired guidance command. In the literature, UAV landing guidance has been formulated in several ways. Leveraging Hu-moments, proportional controller-based landing guidance was presented in \cite{saripalli2002vision}, \cite{chandra2019hu}. Glide-slope to stationary landing point from a critical altitude was studied in \cite{barber2007autonomous}, while a time-to-go-based polynomial guidance law was presented in \cite{bangguidance} for the landing of rotary and fixed-wing aerial vehicles. A spiral landing trajectory was generated in \cite{yoon2009spiral} by a pseudo-pursuit guidance law. In \cite{kim2013fully}, a landing guidance algorithm was devised using Proportional Navigation (PN) \cite{PN-og,PN-og2} in the longitudinal plane dynamics and $L_1$ guidance \cite{l1} in the lateral plane dynamics. PN and proportional-derivative controllers were coupled to develop a landing guidance scheme in \cite{borowczyk2017autonomous}, while a pure pursuit-based guidance scheme was framed in \cite{gautam2017autonomous} for quad-rotor landing on moving target.

In applications related to UAV landing, most of the existing literature has considered stationary landing platforms, while moving platforms have been considered in some of the literature \cite{campoy1, vlantis2015quadrotor, leevisualservoing, borowczyk2017autonomous, gautam2017autonomous}, but most of the considered problems involved a restricted class of targets such as non-maneuvering targets or targets modelled with linearized kinematics. To the best of the authors' knowledge, problems related to soft landing on a comprehensive class of targets that even includes maneuvering and accelerating targets as well have not yet been addressed in the literature. Thus, the problem of interest in this paper is to formulate a unified approach to landing guidance for a UAV on all types of ground targets, i.e., stationary, nonmaneuvering, maneuvering with constant speed or accelerating ground targets while considering nonlinear engagement kinematics for landing.

Additionally, in the literature related to UAV landing, the terminal direction objective has mainly been addressed in directed-runway landing \cite{gps3}, or net-recovery landing \cite{bangguidance, kim2013fully, yoon2009spiral}. This problem has been more explored in guidance literature, where several approaches have been studied for terminal/approach/impact angle control such as optimal control \cite{optimal-tac}, sliding mode control \cite{kumarNSM, kumar2012sliding}, and PN \cite{biasedpng, biasedpng2, compositehighspeedIAC, allaspectstationaryiac}. Finally, most of the literature related to rotary UAV landing except a few \cite{gautam2017autonomous, ghommam2017autonomous, slanted, Paris2019DynamicLO}, have been on vertical landing only, and this is not ideal in terms of time taken for touchdown and control effort. 
Moreover, the consideration of terminal angle (both azimuth and elevation angles) control as an objective proves to be helpful in field-of-view-constrained scenarios, and thus, also helps in eliminating any specific need for vertical landing. To this end, a novel guidance scheme inspired by the sliding mode philosophy is presented in this paper for landing a UAV on stationary, moving and accelerating / maneuvering targets at desired approach angles (both azimuth and elevation angles). Suitable consideration of multiple sliding variables also leads to avoiding target-state-dependent singularity in the guidance command, which is otherwise usually present in Sliding mode control-based guidance literature. When implemented in scenarios with a large initial range from the target, the proposed guidance scheme leads to high desired guidance commands, especially in the cases with highly maneuvering targets. To obviate this problem, the guidance scheme is further modified to be split into two phases. In the first phase, the desired azimuth angle is set as a constant, while in the second phase, which is initiated when the range in the horizontal plane is smaller than a pre-fixed threshold, the desired azimuth angle is selected relative to the target's heading angle.

The rest of the paper is organized as follows. First, the landing problem is described in Section \ref{sec:pf}. Then, the landing guidance law is designed and a detailed discussion on the synthesized guidance algorithm is presented in Section \ref{sec:guidancelaw}. The effectiveness of the proposed guidance law is demonstrated using simulation studies over different kinds of ground target platforms, and the results are presented in Section \ref{sec:results}. To avoid high guidance commands for maneuvering targets at large initial ranges, a two-phase guidance scheme is next presented in Section \ref{sec:twophase} along with numerical point mass simulations and their results. Then, to depict the algorithm's efficacy in realistic rotary vehicle test-bed, Software-In-The-Loop simulations are conducted, and the results are presented in Section \ref{sec:SITL}. Finally, conclusions and possible future works are discussed in Section \ref{sec:conclusions}.

\section{Problem Formulation}\label{sec:pf}
\subsection{Goal for Landing Guidance}
In this paper, the problem considered is on autonomous soft landing, in which a UAV is considered to initiate its movement from an arbitrary initial position in three dimensional (3-D) space not very far from an Unmanned Ground Vehicle (UGV or target) moving on the horizontal (x-y) plane in an inertial xyz-frame. A guidance algorithm is to be developed for the UAV's motion to soft-land (land smoothly) on the UGV while achieving desired approach angles with respect to both the xy-plane and the z-axis.

An illustrative engagement scenario is shown in Fig. \ref{fig:geometry}. Here, $R, R_{xy}, R_z$ represent the distance between the target and the UAV, its component when projected on the xy-plane and the vertical range between the UAV and the target, respectively. The UAV's speed, flight path angle, and heading angle in the xy-plane are denoted as $V_{p}$, $\gamma$, and $\alpha_{p}$, respectively. Besides these, $\theta$ denotes the angle between the line of sight (LOS) from the pursuer to the target and the xy-plane, and represents the elevation angle of the pursuer relative to the target. 
Further, $\psi$ denotes the angle between the orthogonal projection of the LOS onto the xy-plane and the reference x-axis. It is referred to as the azimuth angle in this paper. Here, $\gamma$ and $\theta$ are defined such that they lie in the interval, [$-\frac{\pi}{2}$, $\frac{\pi}{2}$] rad. In general, the (-$\pi$, $\pi$] rad convention of angles is followed in this paper. The UGV's speed and heading angle are denoted by $V_{t}$ and $\alpha_{t}$, respectively. The kinematics of the system is decomposed into in inertial xy-plane and vertical direction (inertial z-axis) for analytical convenience.

The objective of the guidance law is to achieve the followings:
\begin{align} \label{eq:goal}
    &\lim_{t \to \infty} R_{xy} = 0; \:\:
    \lim_{t \to \infty} R_{z} = 0;\:\:
    \lim_{t \to \infty} \dot{R}_{xy} = 0; \:\:
    \lim_{t \to \infty} \dot{R_{z}} = 0;\nonumber \\
    &\lim_{t \to \infty} (\psi - \alpha_{t}) =  \zeta_{des}; \:\:
    \lim_{t \to \infty} \theta = \theta_{des}
\end{align}

Note that the first four objectives in \eqref{eq:goal} are required for soft landing. The other two objectives dictate desired approach angles, which are defined here as desired LOS angles, one being the angle between the orthogonal projection of the LOS on the xy-plane and the heading angle of the target ($\psi$ - $\alpha_t$), and the other being the elevation angle ($\theta$). The desired values for these angles are represented by $\zeta_{des} \ (=\psi_{des}-\alpha_t)$ and $\theta_{des}$, respectively.

\begin{figure}[h!]
\begin{center}
\includegraphics[width=0.4\textwidth]{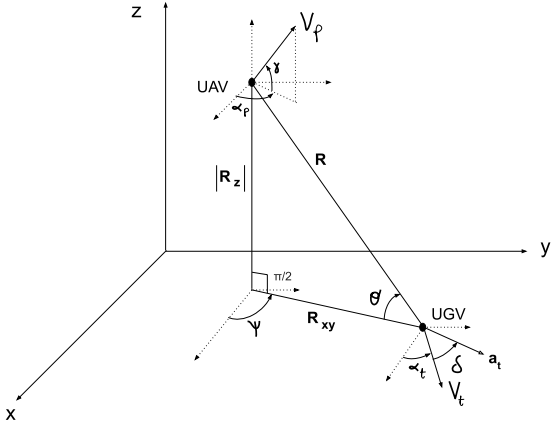}
\caption{UAV-target engagement geometry for landing} 
\label{fig:geometry}
\end{center}
\end{figure}

\subsection{Equations of Motion}

The 3-D engagement kinematics of the UAV and the ground target (UGV) are represented below in a control-affine form in Eqs. \eqref{eq:Rxydot} - \eqref{eq:input_vector}, where $\dot{V_{p}}$,  $\dot{\alpha_{p}}$, and $\dot{\gamma}$ are the guidance control inputs. 
\begin{align}
\dot{R}_{xy}=V_{t}\cos(\alpha_{t}-\psi) -  V_{p}\cos(\gamma)\cos(\alpha_{p} - \psi) \label{eq:Rxydot} &\\
\dot{R_{z}}=-V_{p}\sin(\gamma) \label{eq:Rzdot}&\\
\dot{\psi}=\frac{1}{R_{xy}}(V_{t}\sin(\alpha_{t}-\psi) - V_{p}\cos(\gamma)\sin(\alpha_{p} - \psi)) \label{eq:psidot}&\\
\dot{V_{t}}=a_t \cos(\delta)
\label{eq:vtdot}&\\
\dot{\alpha_{t}}=a_t \sin(\delta)
\label{eq:alphatdot}&\\
\left[\dot{V_{p}}\quad\dot{\alpha_{p}}\quad\dot{\gamma}\right]^T=U
\label{eq:input_vector}&
\end{align}

In addition, the kinematics of variables $\theta$ $(= \tan^{-1}(-R_z/R_{xy}))$ and $R$ $(= \sqrt{R_{xy}^2 + R_z^2})$ depend on Eqs. \eqref{eq:Rxydot} and \eqref{eq:Rzdot}, and are derived as follows :- 

\begin{align}
dot{R} = \frac{1}{R}(R_{xy}\dot{R}_{xy} + R_{z}\dot{R_{z}})
\label{eq:Rdot}&\\
\dot{\theta}=\frac{1}{R}(V_{p} \ \sin(\gamma) \ \cos(\theta) + \dot{R}_{xy} \ \sin(\theta)) \label{eq:thetadot}&
\end{align}

Here, $a_t$ denotes the target's acceleration, and $\delta$ denotes the angle between the target's thrust vector and it's heading vector at any instant, as shown in Fig. \ref{fig:geometry}.

\begin{assumption}
In this paper, it is assumed that $a_t$ and $\delta$ are piece-wise continuously differentiable in time. It is also assumed that both the components of acceleration ($a_t \cos(\delta)$, $a_{t} \sin(\delta)$), and the angular acceleration of the target ($\ddot{\alpha_t}$) at every instant are known with good accuracy.
\end{assumption}

\section{Guidance Law Design}\label{sec:guidancelaw}

\subsection{Background}\label{subsec:synthesis}

Note that for successful touchdown on the UGV, at which $R \to 0$, the collision course condition needs to be satisfied by the UAV with the UGV's motion. Moreover, in order to obtain the desired terminal velocity of the UAV, which comprises its speed ($\lim_{R \to 0} V_p$), heading angle ($\lim_{R \to 0} \alpha_p$), and flight path angle ($\lim_{R \to 0} \gamma$), for soft landing on the UGV, the soft landing objective should be considered as well. Now, the condition for collision course between the UAV and the UGV is given as,
\begin{align} \label{eq:colcourse}
    V_{t} \ \sin(\alpha_{t} - \psi) &= V_{p} \ \cos(\gamma) \ \sin(\alpha_{p} - \psi)
\end{align}
Next, for soft landing from Eq. \eqref{eq:goal} the followings need to be satisfied: $\lim_{R \to 0}\dot{R}_{xy}= 0 $ and $\lim_{R \to 0}\dot{R_{z}} = 0$. From Eqs. \eqref{eq:Rxydot} and \eqref{eq:Rzdot},
\begin{align}
 \label{eq:colcourse2}
\dot{R}_{xy} &= 0 \implies V_{t} \ \cos(\alpha_{t} - \psi) = V_{p} \ \cos(\gamma) \ \cos(\alpha_{p} - \psi)& \\
 \label{eq:colcourse3}
\dot{R}_{z} &= 0 \implies V_{p} \ \sin(\gamma) = 0&
\end{align}

If $V_t = 0$ when Eqs. \eqref{eq:colcourse}, \eqref{eq:colcourse2}, and \eqref{eq:colcourse3} are satisfied,

\begin{align}
\left.
\begin{array}{ll}
V_{p} \cos(\gamma) \cos(\alpha_p - \psi) = 0 \\
V_{p} \cos(\gamma) \sin(\alpha_p - \psi) = 0 \\
V_{p} \sin(\gamma) = 0
\end{array}
\right \}&\implies V_{p} = 0 \\
\intertext{
Else if $V_t \neq 0$,
}
\left.
\begin{array}{rr}
V_t \cos(\alpha_t - \psi) = V_{p} \cos(\gamma) \cos(\alpha_p - \psi) \\
V_t \sin(\alpha_t - \psi) = V_{p} \cos(\gamma) \sin(\alpha_p - \psi) \\
V_{p} \sin(\gamma) = 0
\end{array}
\right \}&\implies 
\begin{array}{ll}
     \gamma = 0\\
     V_{p} = V_{t}\\
     \alpha_p = \alpha_t
\end{array}
\end{align}
 
Thus, following Eq. \eqref{eq:goal} for soft landing and solving Eqs. \eqref{eq:Rzdot}, \eqref{eq:colcourse}, and \eqref{eq:colcourse2}, the desired terminal speed, heading angle, and flight path angle of the UAV are obtained as follows.
 \begin{align}
    \label{eq:desiredheadspeed}
     &{\lim_{R \to 0} (V_{p} - V_{t})=0} &\nonumber \\
     &{\lim_{R \to 0} (\alpha_{p} - \alpha_{t}) = 0 \ , \ \lim_{R \to 0} \gamma = 0} \text{, } &\text{ if } \lim_{R \to 0} V_t \neq 0
\end{align}

Besides these, from Eq. \eqref{eq:goal}, the terminal angle constraints are posed as,

\begin{equation}
    \label{eq:desiredapproach}
    \lim_{R \to 0} (\psi - \alpha_t) = \zeta_{des} \quad , \quad \lim_{R \to 0} \theta = \theta_{des}
\end{equation}

Note that as $R \to 0$, that is at touchdown, Eq. \eqref{eq:goal} and the combination of Eqs. \eqref{eq:desiredheadspeed} and \eqref{eq:desiredapproach} provide equivalent conditions for terminal angle-constrained soft landing of the UAV on the UGV.

\subsection{Synthesis of Guidance Command}\label{subsec:synthesis}

In the design of a sliding surface for control-affine systems, the number of sliding variables is taken to be equal to the number of control inputs (3 in this case), in order to obtain a linear system of equations with a unique solution. These sliding variables are selected such that desired terminal conditions and objectives of the guidance law are satisfied when the system is in sliding mode. Thus, the terminal requirements as mentioned in Eq. \eqref{eq:goal} form the basic consideration behind the formulation of sliding variables for deriving a suitable guidance law. Therefore, three sliding variables are considered with $R_{xy}$, $R_z + \tan(\theta_{des}) R_{xy}$, and $\psi - \alpha_{t} - \zeta_{des}$, respectively. Since second order derivatives of these quantities contain the control inputs $U$ to the UAV, we choose first order exponential decay dynamics for variables $R_{xy}$, $R_z + \tan(\theta_{des}) R_{xy}$, and $\psi - \alpha_{t} - \zeta_{des}$ on the sliding mode. This ensures that following the sliding mode dynamics these variables and their time derivatives converge to zero asymptotically. Thus, the sliding variables are considered as below.
\begin{small}
\begin{equation} \label{eq:Sdefn}
{S = 
\begin{bmatrix}
\dot{R}_{xy}+k_{a}R_{xy} \\ 
\dot{R_{z}}+\tan(\theta_{des})\dot{R}_{xy} + k_{b}(R_{z} + \tan(\theta_{des})R_{xy})  \\ 
(\dot{\psi} - \dot{\alpha_{t}}) + k_{c}(\psi - (\alpha_{t} + \zeta_{des}))
\end{bmatrix}
 = 
\begin{bmatrix}
(S_{1}) \\ 
(S_{2}) \\
(S_{3})
\end{bmatrix}
}
\end{equation}
\end{small}
where, $k_{a}$ , $k_{b}$ and $k_{c}$ are tuning parameters in the designed guidance law. Now, the guidance is applied such that the dynamics of sliding variables satisfies the following.
\begin{equation} \label{eq:Sdotdefn}
{\dot{S} = -
\begin{bmatrix}
k_1 & 0 & 0\\0 & k_2 & 0 \\0 & 0 & k_3
\end{bmatrix}
\begin{bmatrix}
(S_{1})^{n/m} \\ 
(S_{2})^{n/m} \\
(S_{3})^{n/m}
\end{bmatrix}
}
\end{equation}
where, $k_{1}$, $k_{2}$, $k_{3}$, $m$ and $n$ are other tuning parameters in the designed guidance law. Among all these tuning parameters, $m$ and $n$ should be odd and co-prime integers such that $0<n<m$, to ensure finite time convergence of the system dynamics to sliding mode. A discussion on selection of other tuning parameters has been presented in Section \ref{subsec:ParameterSelection}. 

\begin{theorem}\label{th:stability}
The guidance algorithm that ascertains Eq. \eqref{eq:Sdotdefn} for reaching the sliding surface $S=0$, where $S$ denotes the vector of sliding variables given in Eq. \eqref{eq:Sdefn} enables a UAV to successfully soft-land on a ground target at desired approach angles (both azimuth and elevation angles) asymptotically.
\end{theorem}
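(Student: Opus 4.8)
The plan is to split the argument into a \emph{reaching phase} and a \emph{sliding phase}, the standard decomposition for sliding-mode designs. First I would note that to ``ascertain'' the prescribed dynamics \eqref{eq:Sdotdefn}, one differentiates the sliding vector $S$ of \eqref{eq:Sdefn} once along the kinematics \eqref{eq:Rxydot}--\eqref{eq:input_vector}; since each component of $S$ already contains a first derivative of $R_{xy}$, $R_z$, or $\psi$, the quantity $\dot S$ depends affinely on the input $U=[\dot V_{p}\ \dot\alpha_{p}\ \dot\gamma]^T$, say $\dot S = f + B\,U$. Equating this to the right-hand side of \eqref{eq:Sdotdefn} and solving yields the guidance command $U = B^{-1}\big(-K\,[S_{1}^{n/m}\ S_{2}^{n/m}\ S_{3}^{n/m}]^T - f\big)$, so the command is well defined exactly when the input-coupling matrix $B$ is nonsingular. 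I would verify invertibility for the chosen sliding variables (this is the mechanism behind the claimed avoidance of target-state-dependent singularity) and then treat \eqref{eq:Sdotdefn} as enforced for the remainder of the argument.

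Second, for the reaching phase I would establish finite-time convergence to $S=0$ by a component-wise Lyapunov argument. Taking $V_{i}=\tfrac12 S_{i}^{2}$ and using \eqref{eq:Sdotdefn}, $\dot V_{i} = S_{i}\dot S_{i} = -k_{i}\,S_{i}^{\,1+n/m}$. Because $m,n$ are odd and coprime, $S_{i}^{n/m}$ preserves the sign of $S_{i}$, so $S_{i}^{\,1+n/m}=|S_{i}|^{\,1+n/m}\ge 0$ and hence $\dot V_{i} = -k_{i}\,(2V_{i})^{(1+n/m)/2}$. Since $0<n<m$ forces the exponent $(1+n/m)/2<1$, this is the standard finite-time stability inequality $\dot V_{i}\le -c_{i}V_{i}^{\beta}$ with $\beta<1$, which drives each $S_{i}$, and therefore $S$, to zero in finite time.

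Third, on the sliding surface $S=0$ I would read off the closed-loop behaviour of the physical variables. From $S_{1}=0$ we get $\dot R_{xy} = -k_{a}R_{xy}$, so $R_{xy}\to 0$ and $\dot R_{xy}\to 0$ exponentially. Writing $W := R_{z} + \tan(\theta_{des})R_{xy}$, the equation $S_{2}=0$ reads $\dot W = -k_{b}W$, so $W\to 0$ and $\dot W\to 0$; combined with $R_{xy},\dot R_{xy}\to 0$ this yields $R_{z}\to 0$ and $\dot R_{z}\to 0$. Finally, since $\zeta_{des}$ is constant, $S_{3}=0$ reads $\tfrac{d}{dt}(\psi-\alpha_{t}-\zeta_{des}) = -k_{c}(\psi-\alpha_{t}-\zeta_{des})$, giving $\psi-\alpha_{t}\to\zeta_{des}$. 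This already secures five of the six objectives in \eqref{eq:goal}.

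The main obstacle is the elevation-angle objective $\lim\theta=\theta_{des}$, because $\theta = \tan^{-1}(-R_{z}/R_{xy})$ is a ratio of two quantities that both vanish, so pointwise convergence of $R_{z}$ and $R_{xy}$ to zero does not by itself pin down the limiting angle. I would instead track the ratio directly: $-R_{z}/R_{xy} = \tan(\theta_{des}) - W/R_{xy}$, so the claim reduces to showing $W/R_{xy}\to 0$. Under the sliding dynamics $W = W(0)e^{-k_{b}t}$ and $R_{xy}=R_{xy}(0)e^{-k_{a}t}$, whence $W/R_{xy} = \big(W(0)/R_{xy}(0)\big)e^{-(k_{b}-k_{a})t}$, which tends to zero \emph{provided} $k_{b}>k_{a}$. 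Thus the elevation constraint is not automatic: it forces the parameter ordering $k_{b}>k_{a}$, tying this theorem to the tuning-parameter discussion in Section~\ref{subsec:ParameterSelection}. With that ordering, $-R_{z}/R_{xy}\to\tan(\theta_{des})$ and continuity of $\tan^{-1}$ give $\theta\to\theta_{des}$, completing all objectives of \eqref{eq:goal} and hence the claimed terminal-angle-constrained soft landing.
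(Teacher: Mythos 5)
Your proposal is correct, and it follows the paper's overall decomposition (finite-time reaching phase, then analysis on the sliding surface), but two of the key steps are carried out by genuinely different means. For the reaching phase you use the standard finite-time Lyapunov inequality $\dot V_i=-k_i(2V_i)^{(1+n/m)/2}$ with exponent less than one, whereas the paper integrates $\dot S_i=-k_i S_i^{n/m}$ in closed form to get $(S_i(t))^{(m-n)/m}=(S_i(0))^{(m-n)/m}-\tfrac{m-n}{m}k_i t$; the two are equivalent here, but the paper's explicit solution is reused later to define the reaching time $t_{reach_1}$ that drives the selection criterion for $k_1$, so it buys more downstream, while your inequality form is the one that generalizes to perturbed dynamics. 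For the elevation angle --- correctly identified by you as the only objective that does not follow immediately from decay of the sliding variables, since $\theta=\tan^{-1}(-R_z/R_{xy})$ is a ratio of vanishing quantities --- you solve the sliding-mode ODEs exactly and track $W/R_{xy}=\bigl(W(t_s)/R_{xy}(t_s)\bigr)e^{-(k_b-k_a)(t-t_s)}$, concluding $\tan\theta\to\tan(\theta_{des})$ with an explicit exponential rate; the paper instead derives the closed-loop angle dynamics $\dot\theta=(k_b-k_a)\cos^2(\theta)\bigl(\tan(\theta_{des})-\tan(\theta)\bigr)$ from Eqs. \eqref{eq:thetadot} and \eqref{eq:Sdefn} and applies a Lyapunov/LaSalle argument with $V=\tfrac12(\theta-\theta_{des})^2$. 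Both routes hinge on the same ordering $k_b>k_a$, which you rightly flag as a hypothesis forced by the elevation constraint (and which the paper assumes in the proof and justifies in its parameter-selection section); your version is more elementary and quantitative, while the paper's works directly at the level of the $\theta$ kinematics, which it also exploits in the FOV-constrained-landing discussion. Your opening remark on invertibility of the input-coupling matrix mirrors what the paper handles separately via $AU=B$ with $\det(A)=V_p^2\cos(\gamma)$ and Algorithm \ref{alg:inputsummary}; like the paper's proof, you ultimately treat enforcement of \eqref{eq:Sdotdefn} as given, which is exactly what the theorem's hypothesis states.
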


\begin{proof}
Recall from Eq. \eqref{eq:Sdotdefn} that the guidance command inputs have been designed in Section \ref{subsec:synthesis} such that the sliding variables follow: $\dot{S_{1}}=-k_1 \ (S_{1})^{n/m}$, $\dot{S_{2}}=-k_2 \ (S_{2})^{n/m}$ and $\dot{S_{3}}=-k_3 \ (S_{3})^{n/m}$.

Now, from Eq. \eqref{eq:Sdotdefn}, for $i=1,2,3$,
\begin{align}\label{eq:straj}
    {(S_{i}(t))^{\frac{(m-n)}{m}} = (S_{i}(0))^{\frac{(m-n)}{m}} -\frac{(m-n)}{m} \ k_{i} \ t} \nonumber \\
    \implies S_i(t) = 0 \ \ \forall \ \ t \geq \frac{m}{m-n}(S_{i}(0))^{\frac{(m-n)}{m}}
\end{align}
Clearly, the dynamics of the chosen sliding variables are finite time convergent, that is the sliding mode dynamics can be enforced in finite time. When the system is in Sliding mode ($S=0$), the following can be noted from Eqs. \eqref{eq:thetadot} and \eqref{eq:Sdefn},
\begin{align}
\label{eq:slidingmodedynamics}
    \dot{R_{xy}} &=& -k_a R_{xy}&  \nonumber \\
    \dot{R_{z}} + \tan(\theta_{des}) \dot{R_{xy}} &=& - k_b( R_z + \tan(\theta_{des}) R_{xy})& \nonumber \\
    \dot{\theta} &=& (k_b-k_a) \cos^2(\theta) (\tan(\theta_{des})-\tan(\theta))& \nonumber \\
    \dot{\psi} - \dot{\alpha_t} &=& k_c (\psi - \alpha_t - \zeta_{des} )&
\end{align}
Here, exponential decay of $R_{xy}$ implies that $R_{xy} \to 0$ and $\dot{R_{xy}} \to 0$ as $t \to \infty$. Then, exponential decay of $R_z + \tan(\theta_{des})R_{xy}$ along with $R_{xy} \to 0$ and $\dot{R_{xy}} \to 0$ implies that $R_{z} \to 0$ and $\dot{R_{z}} \to 0$ as $t \to \infty$. 

For the convergence of $\theta$, consider the Lyapunov function, V = $1/2 (\theta-\theta_{des})^2$. $\dot{V} = \dot{\theta} (\theta-\theta_{des})$. Assuming $k_b > k_a$, $sgn(\dot{V}) = sgn((\theta-\theta_{des})(\tan(\theta_{des}-\theta)) (\leq 0)$, where  sgn(.) denotes the signum function. Following LaSalle's invariance principle \cite{Khalil:1173048}, $\theta \to \theta_{des}$ as $t \to \infty$, which means that the desired elevation angle $(\theta_{des})$ is also achieved asymptotically. 

Finally, $\psi - (\alpha_t+\zeta_{des})$ decays exponentially, which implies that the desired azimuth angle is achieved asymptotically. 

Thus, it can be seen that the desired terminal requirements from \eqref{eq:goal} are satisfied when the chosen dynamics in \eqref{eq:Sdotdefn} are enforced on the Sliding variables defined in \eqref{eq:Sdefn}.
\end{proof}

\begin{remark}
Unlike the signum function of sliding variables, usually considered in conventional sliding mode-based guidance design, the form of sliding mode dynamics considered in Eq. \eqref{eq:Sdefn} helps to reduce chattering, and allows for a smooth finite time convergence of the sliding variables.
\end{remark}

\subsection{Guidance Command Inputs}\label{subsec:guidanceInputs}

Eqs. \eqref{eq:Rxydot} - \eqref{eq:input_vector}, \eqref{eq:Sdefn} and \eqref{eq:Sdotdefn}, when expanded lead to a system of three equations, expressed as $AU=B$, where $U\in\mathbb{R}_{3\times 1}$ is as given in Eq. \eqref{eq:input_vector}, and $A\in \mathbb{R}_{3\times 3}$ and $B\in\mathbb{R}_{3\times 1}$ are given by,
\begin{footnotesize}
\begin{equation}\label{eq:matrixformA}
A =
\begin{bmatrix}
 1 & 0 & 0 \\
 \tan(\theta_{des}) & 1 & 0 \\
 0 & 0 & 1
\end{bmatrix}
A_{p}
\end{equation}
\begin{multline} \label{eq:matrixformAp}
\text{where, } A_{p} =  \\ 
\begin{bmatrix}
-\cos(\alpha_{p} - \psi)\cos(\gamma) & V_{p}\sin(\alpha_{p}-\psi)\cos(\gamma) & V_{p}\cos(\alpha_{p}-\psi)\sin(\gamma)\\
 & & \\
-\sin(\gamma) & 0 & -V_{p}\cos(\gamma) \\ 
 & & \\
-\sin(\alpha_{p}-\psi)\cos(\gamma) & - V_{p}\cos(\alpha_{p}-\psi)\cos(\gamma) & V_{p}\sin(\alpha_{p}-\psi)\sin(\gamma)
\end{bmatrix}
\\\hspace{4em}
\end{multline}
\begin{multline}  \label{eq:matrixformB}
\text{and} \ \  B = \\
\begin{bmatrix}
 - k_{1} \ (S_{1})^{n/m} + V_{p} \ \sin(\alpha_{p} - \psi) \ \cos(\gamma)\dot{\psi} - \\ \dot{V_{t}} \ \cos(\alpha_{t} - \psi) + V_{t} \ \sin(\alpha_{t} - \psi)(\dot{\alpha_{t}} - \dot{\psi}) - \\  k_a \ (V_{t} \ \cos(\alpha_{t} - \psi) - V_{p} \ \cos(\gamma) \ \cos(\alpha_{p} - \psi)) \\
  \\ 
 \tan(\theta_{des})(V_{p} \ \sin(\alpha_{p} - \psi) \ \cos(\gamma)\dot{\psi} - \\ \dot{V_{t}} \ \cos(\alpha_{t} - \psi) + V_{t} \ \sin(\alpha_{t} - \psi)(\dot{\alpha_{t}} - \dot{\psi}) - \\  k_b \ (V_{t} \ \cos(\alpha_{t} - \psi) - V_{p} \ \cos(\gamma) \ \cos(\alpha_{p} - \psi))) - \\ k_{2} \ (S_{2})^{n/m} + k_{b} \ V_{p} \ \sin(\gamma) \\ 
  \\ 
 -R_{xy} \ k_{3} \ (S_{3})^{n/m} - k_{c} \ R_{xy} \ (\dot{\psi} - \dot{\alpha_{t}}) + \\ \ddot{\alpha_{t}} \ R_{xy} + \dot{R}_{xy}\dot{\psi} - V_{p} \ \cos(\alpha_{p} - \psi) \ \cos(\gamma) \ \dot{\psi} - \\ V_{t} \ \cos(\alpha_{t} - \psi) \ (\dot{\alpha_{t}} - \dot{\psi}) - \dot{V_{t}} \ \sin(\alpha_{t} - \psi)
\end{bmatrix}
\end{multline}
\end{footnotesize}

The guidance command inputs are obtained as a solution to the system of equations, $AU = B$. Note that $\det(A)= (V_{p})^{2} \cos(\gamma)$. When $V_p \neq 0$ and $\cos(\gamma) \neq 0$, the solution to the system of equations $AU=B$ is given by, 
\begin{align}
\label{eq:input}
U &= A^{-1} B    
\end{align}
Here, the determinant of matrix $A$, as shown above, doesn't depend on the target's states (heading and position) directly as $\det(A)=0$ if and only if $\cos(\gamma)=0$, which happens when flight path angle is $\pm\pi/2$, or $V_p=0$. Thus, the proposed guidance scheme is free from target-state-dependent singularity. 
Note that although choosing the specific sliding variables and their dynamics as given in Eqs. \eqref{eq:Sdefn} and \eqref{eq:Sdotdefn}, respectively, could avoid target-state-dependent singularity, it might encounter UAV-state-dependent singularity in the following situations: Landing on a stationary ground target, Vertical takeoff, and Target moving toward the UAV.

Thus, Eq. \eqref{eq:input} leads to a non-singular guidance command in most practical scenarios except a few mentioned above. These situations could generally be handled by suitable manipulation of the inputs. With all these considerations, the overall guidance algorithm is formulated in Algorithm \ref{alg:inputsummary}.

\begin{algorithm}[!h]
\caption{Landing Guidance Command Summary}
\label{alg:inputsummary}
\begin{algorithmic}
\STATE $[\dot{V_p}' \quad \dot{\alpha_p}' \quad \dot{\gamma}']= A^{-1}B$

\IF {$V_{p} < M_{1}$ \textbf{and} $\dot{V_{p}}' < 0 $}
    \STATE $\dot{V_{p}}' = 0$
\ENDIF
\IF {$\cos(\gamma) < M_{2}$ \textbf{and} $\dot{\gamma}'\gamma >0$}
    \STATE $\dot{\gamma}' = 0$
\ENDIF
\STATE $\dot{V_{p}}' = \max([ \ N_1, \  \dot{V_{p}}' \ ])$

\STATE $\dot{\alpha_p}' = \max([ \ N_2, \  \dot{\alpha_p} \ ])$

\STATE $\dot{\gamma}' \  \ = \max([ \ N_{3}, \  \dot{\gamma}' \ ])$

\STATE $U = [\dot{V_{p}}' \quad \dot{\alpha_p}' \quad \dot{\gamma}']^T$
\end{algorithmic}
\end{algorithm}
\FloatBarrier
Values of pre-specified constants are selected based on simulations. The selection should be done such that $M_{1}$ and $M_{2}$ are sufficiently small. Here, the values of $N_{1}$, $N_{2}$ and $N_3$ represent upper bounds for the control inputs ($\dot{V_p},\dot{\alpha_t},\dot{\gamma}$) derived from the guidance scheme. 

\subsection{Selection of Parameters and Discussion on Designed Guidance Law}\label{subsec:ParameterSelection}

Discussions on the landing guidance algorithm in this section would be based on the premise that the dynamics of the sliding variables are achievable.
Consider the criteria for the tuning of the guidance parameters, namely, $k_{1}$, $k_{2}$, $k_{3}$, $k_{a}$, $k_{b}$, $m$ and $n$. Expanding Eq. \eqref{eq:input}, we obtain the guidance commands as below.
\begin{align} \label{eq:expand1}
&\dot{V_{p}} = \nonumber \\
& \quad k_{1} \ (S_{1})^{n/m} \ \cos(\gamma) \ \cos(\alpha_{p}-\psi) + k_{2} \ (S_{2})^{n/m} \sin(\gamma)\nonumber \\
&+ k_{3}R_{xy} \ \cos(\gamma)sin(\alpha_{p}  -\psi)(S_{3})^{n/m} - k_{b} \ V_{p} \ (\sin(\gamma))^{2}\nonumber \\
&+ k_{c} \ R_{xy} \ (\dot{\psi} - \dot{\alpha_{t}}) \ \cos(\gamma) \ \sin(\alpha_{p} - \psi) \nonumber \\ 
&+ \cos(\gamma) (\ k_{a} \ \dot{R}_{xy} \ \cos(\alpha_{p} - \psi) - \ddot{\alpha_{t}} \ R_{xy} \ \sin(\alpha_{p} - \psi)) \nonumber \\
&-\dot{R_{xy}} \ \dot{\psi} \ \cos(\gamma) \ \sin(\alpha_{p}-\psi) + \dot{V_{t}} \ \cos(\gamma) \ \cos(\alpha_{p} - \alpha_{t}) \nonumber \\
&+ V_{t} \ \cos(\gamma) \ (\dot{\alpha_{t}} - \dot{\psi}) \ \sin(\alpha_{p} - \alpha_{t}) \nonumber \\
&- \sin(\gamma)(k_b - k_a)(\tan(\theta_{des})\dot{R}_{xy} + \tan(\theta_{des})k_1 (S_1)^{n/m})
\end{align}
\begin{align} \label{eq:expand2}
    \dot{\alpha_{p}} &= \nonumber \\
    &\frac{1}{V_{p} \ \cos(\gamma)} \ [ -k_{1} \ (S_{1})^{n/m} \ \sin(\alpha_{p} - \psi) + V_{p} \ \cos(\gamma) \ (\dot{\psi})\nonumber \\
    &+ k_{3} \ (S_{3})^{n/m} \ R_{xy} \ \cos(\alpha_{p} - \psi) - \ddot{\alpha_{t}} \ R_{xy} \ \cos(\alpha_{p}-\psi) \nonumber \\ 
    &+ k_{c} \ R_{xy} \ (\dot{\psi} - \dot{\alpha_{t}}) \ \cos(\alpha_{p}-\psi) - \dot{R}_{xy} \ \dot{\psi} \ \cos(\alpha_{p}-\psi) \nonumber \\
    &+ k_{a} \ V_{p} \ \cos(\gamma) \ \cos(\alpha_{t}-\psi) \ \sin(\alpha_{p}-\psi) \nonumber \\ 
    &- k_{a} \ V_{t} \ \cos(\alpha_{t}-\psi) \ \sin(\alpha_{p}-\psi) \nonumber \\ 
    &+ \dot{V_{t}} \ \sin(\alpha_{t} - \alpha_{p}) + V_{t}\cos(\alpha_{t}-\alpha_{p}) \ (\dot{\alpha_{t}}-\dot{\psi}) ] 
\end{align}

\begin{align} \label{eq:expand3}
    \dot{\gamma} &=  \nonumber \\
    \quad &\frac{1}{V_p}[-k_1 (S_1)^{n/m} (\cos(\alpha_p - \psi)\sin(\gamma) + \tan(\theta_{des}) \cos(\gamma)) \nonumber \\ 
    &- R_{xy} k_3 (S_3)^{n/m} \sin(\alpha_p - \psi) \sin(\gamma) \nonumber \\
    &-\dot{V}_t \sin(\gamma) \cos(\alpha_p-\alpha_t) 
    +k_2 (S_2)^{n/m}\cos(\gamma) \nonumber \\
    &+\dot{R}_{xy}(-k_a - k_c(\dot{\psi} - \dot{\alpha_t}) \sin(\alpha_p-\psi)\sin(\gamma) \nonumber \\
    &-(k_b-k_a) \ \tan(\theta_{des}) \  \cos(\gamma) + \dot{\psi} \sin(\alpha_p-\psi)\sin(\gamma)) \nonumber \\
    &- k_b V_p \sin(\gamma)\cos(\gamma) - V_t \sin(\gamma) (\dot{\alpha_t} - \dot{\psi}) \sin(\alpha_p - \alpha_t)] 
\end{align}
\FloatBarrier

\begin{figure*}[b!]
\begin{center}
\includegraphics[width= 0.6\textwidth]{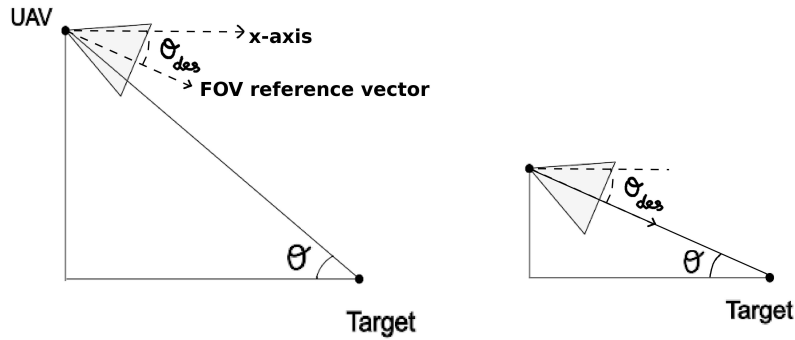}    
\vspace{-1\baselineskip}
\caption{FOV-constrained landing} 
\label{fig:geometry2}
\end{center}
\end{figure*}

\subsubsection{Selection Criteria for parameters $k_1$, $k_2$, and $k_3$}
As can be seen from Eqs. \eqref{eq:expand1} - \eqref{eq:expand3}, the expressions of guidance command inputs $\dot{V_{p}}$, $\dot{\alpha_{p}}$ and $\dot{\gamma}$ are not easily tractable. However, considering the dynamics of the sliding variables and associated guidance objectives, we arrive at the following criteria.

From Eqs. \eqref{eq:Rxydot} and \eqref{eq:psidot}, to avoid shooting up of $\dot{\psi}$, it is desired that $S_1$ converges to zero before $R{xy}$ approaches 0. Now, from Eq. \eqref{eq:straj} the time for sliding variable $S_1$ to reach the sliding surface $S_1=0$, denoted as $t_{reach_1}$, can be obtained as,
\begin{align}
    t_{reach_1} &= \frac{m}{(m-n)} \ \frac{(S_{1}(0))^{((m-n)/m)}}{k_{1}}
    \label{eq:tsliding} \\
\end{align}

\begin{proposition}
Given any initial condition of the target and the UAV, if the chosen dynamics considered in Eq. \eqref{eq:Sdotdefn} are enforced on the sliding variables defined Eq. \ref{eq:Sdefn}, an upper bound for $|\dot{R}_{xy}|$ can be derived as follows:
\begin{equation}\label{eq:upperbound}
    |\dot{R}_{xy}(t)| \leq (|\dot{R}_{xy0}| +  k_{a}R_{xy0})
\end{equation}
\end{proposition}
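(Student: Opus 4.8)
The plan is to treat $S_1 = \dot{R}_{xy} + k_a R_{xy}$ as the output of the scalar reaching dynamics $\dot{S}_1 = -k_1 (S_1)^{n/m}$ from Eq. \eqref{eq:Sdotdefn}, and then to control $\dot{R}_{xy}$ through the first-order relation $\dot{R}_{xy} = S_1 - k_a R_{xy}$ obtained by rearranging the definition of $S_1$ in Eq. \eqref{eq:Sdefn}. First I would record the qualitative behaviour of $S_1$: since $m,n$ are odd and co-prime with $0<n<m$, the map $x \mapsto x^{n/m}$ is sign-preserving, so $(S_1)^{n/m}$ carries the sign of $S_1$ and $\dot{S}_1 = -k_1 (S_1)^{n/m}$ always drives $S_1$ toward the origin. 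Hence $S_1(t)$ keeps the sign of $S_1(0)$, $|S_1(t)|$ is non-increasing, and (by the finite-time reaching already established in Eq. \eqref{eq:straj}) we have $|S_1(t)| \le |S_1(0)| = |\dot{R}_{xy0} + k_a R_{xy0}|$ for all $t \ge 0$.

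The upper bound is then immediate. Since $R_{xy}\ge 0$ (it is a range) and $k_a>0$, the relation $\dot{R}_{xy} = S_1 - k_a R_{xy}$ gives $\dot{R}_{xy}(t) \le S_1(t) \le |S_1(0)| \le |\dot{R}_{xy0}| + k_a R_{xy0}$, where the final step is the triangle inequality together with $R_{xy0}\ge 0$.

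For the lower bound I would integrate the linear relation $\dot{R}_{xy} + k_a R_{xy} = S_1(t)$, giving $R_{xy}(t) = e^{-k_a t}R_{xy0} + \int_0^t e^{-k_a(t-\tau)}S_1(\tau)\,d\tau$, and therefore the integral representation $\dot{R}_{xy}(t) = -k_a R_{xy0}e^{-k_a t} + S_1(t) - k_a\int_0^t e^{-k_a(t-\tau)}S_1(\tau)\,d\tau$. The algebraic fact to exploit is the identity $S_1(0) - k_a R_{xy0} = \dot{R}_{xy0}$. Using that $S_1(\tau)$ stays between $0$ and $S_1(0)$ (constant sign, monotone toward zero), the weighted-average term is controlled by $|S_1(0)|(1-e^{-k_a t})$; collecting the exponential terms yields $\dot{R}_{xy}(t) \ge -S_1(0) + \dot{R}_{xy0}\,e^{-k_a t}$ when $S_1(0)\ge 0$, and the mirror-image estimate $\dot{R}_{xy}(t) \ge \dot{R}_{xy0}$ when $S_1(0)<0$. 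A short case split on the sign of $\dot{R}_{xy0}$ (using $e^{-k_a t}\le 1$) then shows each lower estimate is bounded below by $-(|\dot{R}_{xy0}| + k_a R_{xy0})$, which combined with the upper bound closes the argument for Eq. \eqref{eq:upperbound}.

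The main obstacle I anticipate is exactly this lower bound: a direct triangle-inequality estimate of the integral representation only produces the weaker constant $|\dot{R}_{xy0}+k_a R_{xy0}| + k_a R_{xy0}$, which exceeds the claimed value. The proof must therefore retain the sign information — that $S_1$ never changes sign and decays monotonically — and combine it with the identity $\dot{R}_{xy0} = S_1(0) - k_a R_{xy0}$ to recover precisely the stated constant $|\dot{R}_{xy0}| + k_a R_{xy0}$. Everything else (the upper bound and the finite-time reaching) is routine given Eq. \eqref{eq:straj}.
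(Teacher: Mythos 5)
Your proposal is correct, but it reaches Eq.~\eqref{eq:upperbound} by a genuinely different route than the paper. The paper works with the second-order relation $\ddot{R}_{xy} = -k_a\dot{R}_{xy} - k_1(S_1)^{n/m}$ and runs a qualitative sign analysis: it splits into cases on the sign of $\dot{R}_{xy0}$ and on the relative size of $|\dot{R}_{xy0}|$ and $k_aR_{xy0}$ (i.e., the sign of $S_{10}$), uses a contradiction argument to show $\dot{R}_{xy}$ cannot cross zero when $\dot{R}_{xy0}<0$, and when $\dot{R}_{xy0}\ge 0$ locates the crossing instant $t_1$ and reduces the remainder of the trajectory to the earlier sub-case, each branch producing one of the bounds $|\dot{R}_{xy}(t)|\le|\dot{R}_{xy0}|$, $|\dot{R}_{xy}(t)|\le k_aR_{xy0}$, or $|\dot{R}_{xy}(t)|\le S_{10}$. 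You instead treat $S_1$ as a forcing input to the stable first-order filter $\dot{R}_{xy}+k_aR_{xy}=S_1$, write the variation-of-constants representation of $R_{xy}(t)$ and $\dot{R}_{xy}(t)$, and bound the convolution term using exactly the two structural facts the paper also relies on --- sign-constancy and monotone decay of $S_1$, and $R_{xy}\ge 0$ --- combined with the identity $S_1(0)-k_aR_{xy0}=\dot{R}_{xy0}$; your case analysis collapses to the single split on the sign of $S_{10}$. Your verified chain of estimates (upper bound via $\dot{R}_{xy}\le S_1(t)\le|S_{10}|$; lower bound via $\dot{R}_{xy}(t)\ge -S_{10}+\dot{R}_{xy0}e^{-k_at}$ when $S_{10}\ge 0$ and $\dot{R}_{xy}(t)\ge\dot{R}_{xy0}$ when $S_{10}<0$) does close correctly to the stated constant. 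What your route buys: a more systematic argument with no contradiction/crossing-time bookkeeping, a sharper time-dependent estimate, and more generality --- you never use the power-law form of the reaching law, only that $S_1$ decays monotonically without changing sign, so reaching and sliding phases are handled uniformly. What the paper's route buys: it is purely differential (no integration), staying closer to the phase-plane intuition behind the parameter-selection discussion that follows. One shared caveat worth making explicit in your write-up: both arguments require $R_{xy}(t)\ge 0$ throughout (you invoke it in the upper bound), which holds because $R_{xy}$ is a range, valid up to the interception instant.
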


\begin{proof}
Differentiating both sides of first row of Eq. \eqref{eq:Sdefn},
\begin{align}
\label{eq:Rxyddot}
    \ddot{R}_{xy} &= -k_{a}\dot{R}_{xy} - k_{1}(S_{1})^{n/m}
\end{align}

It should be noted that once the system is on the sliding surface, both $|\dot{R}_{xy}(t)|$ and $R_{xy}(t)$ decrease until they become zero. Now, the subsequent discussion is split into two cases. 

Case 1:
First, Consider the case in which the guidance is initiated with $\dot{R}_{xy0} < 0$. This can be further divided into 2 sub-cases: $|\dot{R}_{xy0}| > k_{a}R_{xy0}$ and $|\dot{R}_{xy0}| < k_{a}R_{xy0}$.

In the first sub-case, $S_{10} < 0$. Then, from Eq. \eqref{eq:Sdotdefn}, $S_{1}(t)<0$ and $\dot{S_1}(t)>0$ throughout the reaching phase, in which the system is reaching the sliding surface. Since the sliding variable $S_1$ evolves smoothly by Eq. \eqref{eq:Sdotdefn}, it can be inferred from Eq. \eqref{eq:Rxyddot}, $\ddot{R}_{xy}>0$ for all values of $\dot{R}_{xy}<-k_a{R}_{xy}$, that is for the entire reaching phase. Thus,
\begin{equation}
    |\dot{R}_{xy}(t)| \leq |\dot{R}_{xy0}| \leq (|\dot{R}_{xy0}| +  k_{a}R_{xy0})
\end{equation}

In the second sub-case, $S_{10} > 0$. Therefore, following Eq. \eqref{eq:Sdotdefn}, $\dot{S_{1}(t)} < 0$ and $S_{1}(t) > 0$ throughout the reaching phase. Now, let at some time-instant $t_1$ during the reaching phase, $\dot{R}_{xy}(t_1)$ = 0 implying from Eq. \eqref{eq:Rxyddot} that $\ddot{R}_{xy}(t_1) \leq 0$. However, since $\dot{R}_{xy0} < 0$, $\dot{R}_{xy}(t)$ cannot reach zero without attaining positive $\ddot{R}_{xy}(t)$. Thus, it leads to a contradiction, which implies that for $\dot{R}_{xy0} < 0$, $\dot{R}_{xy}$ is always negative and doesn't reach zero throughout the reaching phase. Hence, $k_{a} R_{xy}(t) < k_{a} R_{xy0}$. Since $S_{1}(t) > 0$, $|\dot{R}_{xy}(t)| < k_a R_{xy}(t) < k_{a} R_{xy0}$.  Thus,
\begin{equation}
    |\dot{R}_{xy}(t)| \leq |k_{a} R_{xy0}| \leq (|\dot{R}_{xy0}| + k_{a}R_{xy0})
\end{equation}

Case 2:
Now, consider the case where $\dot{R}_{xy0} \geq 0$. Here, $S_{10} > 0$. Following Eq. \eqref{eq:Sdotdefn}, $\dot{S_{1}(t)} < 0$ and $S_{1}(t) > 0$ throughout the reaching phase. From Eq. \eqref{eq:Rxyddot}, it can be seen that $\ddot{R}_{xy}<0$, until $k_a\dot{R}_{xy}(t) \leq -k_1 (S_{1}(t))^{n/m}$. On the sliding surface $S_1=0$, note that $\dot{R}_{xy} = - k_a R_{xy}<0$. Since from Eq. \eqref{eq:Rxyddot}, $\dot{R}_{xy}$ is continuous in time $t$, at some time $=t_1$ (say) $\dot{R}_{xy}$ would cross zero, that is $\dot{R}_{xy}(t_{1})=0$.
Since $\dot{S_{1}(t)} < 0$ in the reaching phase, $S_1(t_1)<S_{10}$, that is,
\begin{equation}
k_a R_{xy}(t_{1})<\dot{R}_{xy0} + k_a R_{xy0}
\label{eq:t_1}
\end{equation}

Now, at any instant $t=t_{1} + h$, where $h>0$, $\dot{R}_{xy}(t_1 + h)<0$, which implies $k_a R_{xy}(t_1 + h) < k_a R_{xy}(t_1)$. For all $h>0$, this falls under Case 1b described above. Following similar logic,
$$|\dot{R}_{xy}(t)| \leq |\dot{R}_{xy}(t_1+h)| + k_{a}R_{xy}(t_1+h), \ \ \forall \ \ t> t_1 + h$$ 

Since $\dot{R}_{xy}$ is continuous in time (from \eqref{eq:Rxydot}), and as $\dot{R}_{xy}(t_1)=0$, at $t=t_{1}$ as $h \to 0$, we obtain, 
\begin{equation}
    \label{eq:t_1_2}
    |\dot{R}_{xy}(t)| < k_{a}R_{xy}(t_1) \ \ \forall \ \ t>t_1
\end{equation}

Finally, combining Eqs. \eqref{eq:t_1} and \eqref{eq:t_1_2}, we obtain,
\begin{equation}\label{eq:RxydotInequality}
|\dot{R}_{xy}(t)| \leq (|\dot{R}_{xy0}| + k_a R_{xy0})
\end{equation}

Thus, $|\dot{R}_{xy}|$ is bounded as given in Eq. \eqref{eq:upperbound} for any initial condition of engagement between the target and the UAV.

\end{proof}

The upper bound for $|\dot{R}_{xy}|$ as given in Eq. \eqref{eq:upperbound} is now expressed as a lower bound on the time taken for the horizontal range between the UAV and the ground target to reach zero, denoted by $t_{R_1}$. This is given by,
\begin{equation}
\label{eq:tR1}
    t_{R_{1}} \geq \frac{R_{xy0}}{( \ |\dot{R}_{xy0}| \ + \ K_{a} \ R_{xy0})}
\end{equation}

With the consideration of $t_{R_{1}}$ being greater than $t_{reach_1}$,  a sufficient condition for selection of $k_1$ is derived as below:
\begin{equation}
k_{1} \geq \ \frac{m}{(m-n)} \frac{(S_{10})^{((m-n)/m)}}{R_{xy0}} ( |\dot{R}_{xy0}| +  k_{a}  R_{xy0}  )
\end{equation}

Next, consider selection criteria of $k_{2}$ and $k_{3}$. For the sake of simplicity, it is desired that all sliding variables ($S_1,S_2$ and $S_3$) converge at the same time. Thus, the parameters are chosen as follows:
\begin{equation}
    \begin{split}
    \frac{k_{2}}{k_{1}} = (\frac{S_{20}}{S_{10}})^{((m-n)/m)}\: \: , \: \:
    \frac{k_{3}}{k_{1}} = (\frac{S_{30}}{S_{10}})^{((m-n)/m)}    
    \end{split}
\end{equation}

\subsubsection{Selection Criteria for parameters $m$ and $n$}

Consider the selection criteria of $m$ and $n$. As proposed in Section \ref{subsec:synthesis}, they are chosen to be odd and co-prime integers such that $0<n<m$, for ensuring finite time convergence of sliding variables. When $n/m$ is selected close to 0, guidance command would shoot up in close vicinity of the sliding surface, and this leads to chattering in $S$. However, as $n/m \to 1$, $t_{reach1} \to \infty$, which means that the sliding variables don't converge in finite time. Thus, there exists a trade-off between the time taken for the convergence of sliding variables and the magnitude of chattering in the sliding variables. Consequently, the parameters $m$ and $n$ are tuned based on simulations, such that $n/m$ is close to 1 and the time taken for the convergence for sliding variables is satisfactory.

\begin{figure*}[b!]
\centering
\subcaptionbox{Stationary Target\label{fig:stationarytraj}}{\includegraphics[width=0.2\textwidth,height=9.5cm,keepaspectratio]{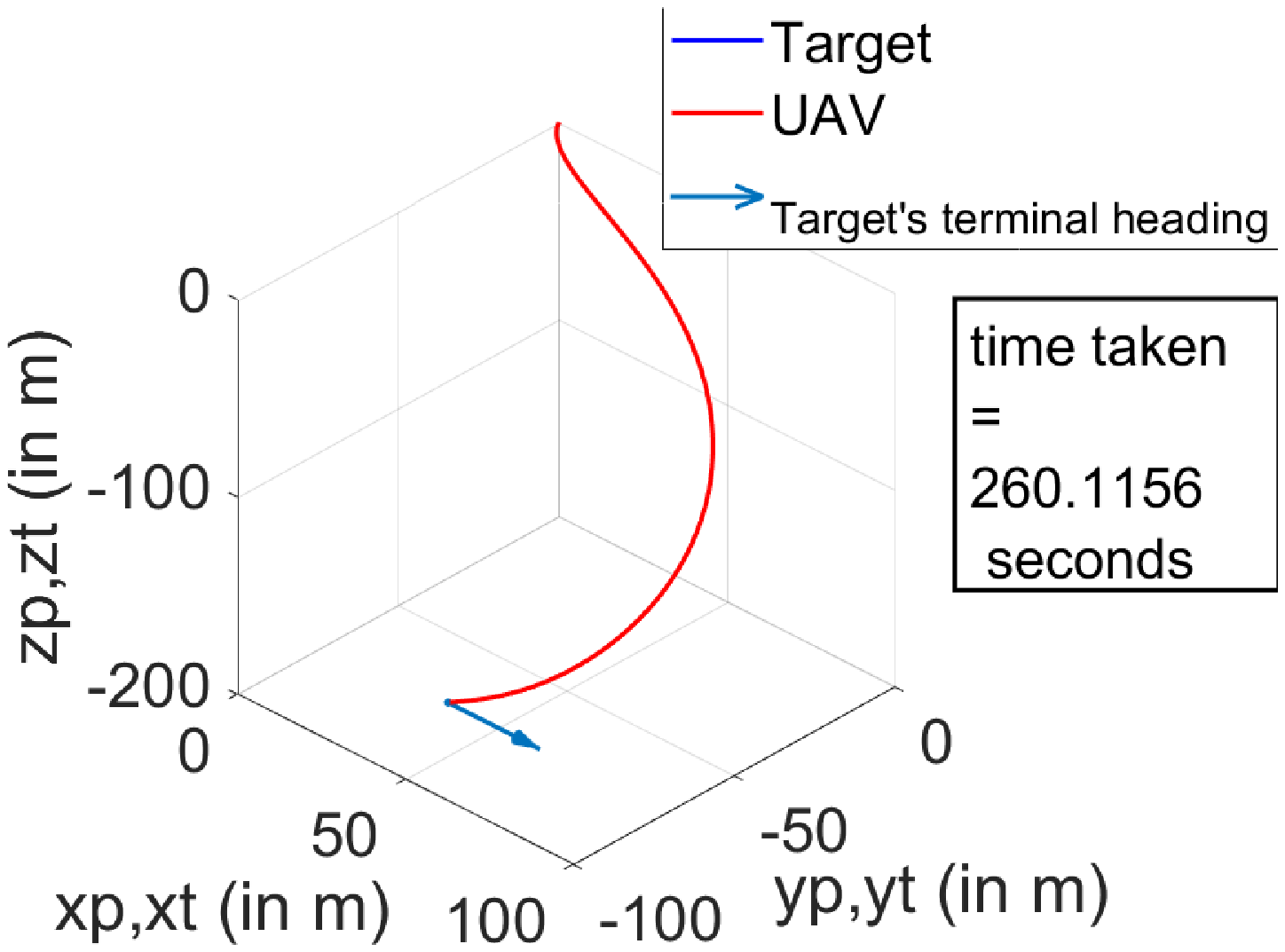}}%
\hfill 
\subcaptionbox{Non-Maneuvering Target\label{fig:slinetraj}}{\includegraphics[width=0.2\textwidth,height=9.5cm,keepaspectratio]{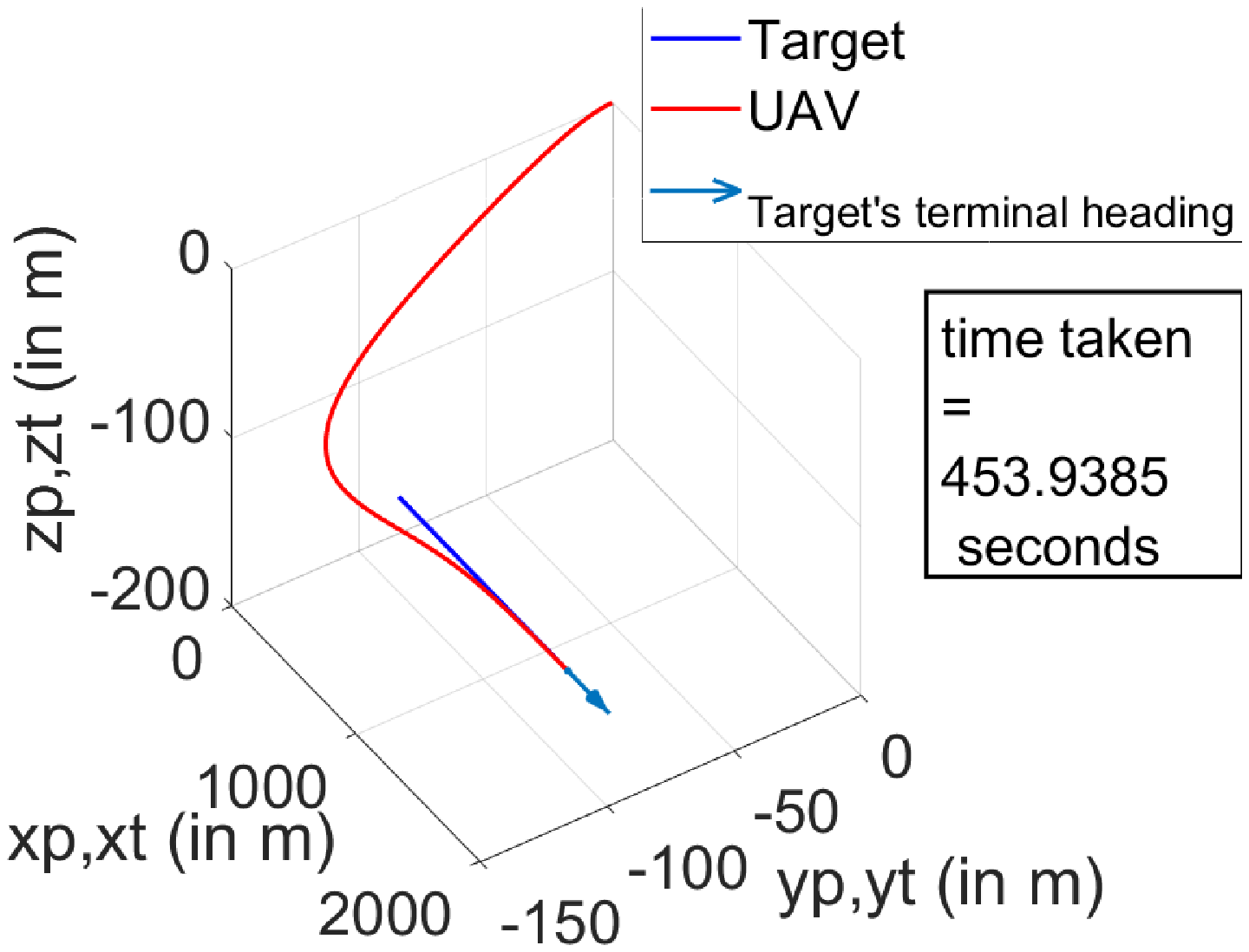}}%
\hfill 
\subcaptionbox{Constant Maneuvering Target\label{fig:circulartraj}}{\includegraphics[width=0.2\textwidth,height=9.5cm,keepaspectratio]{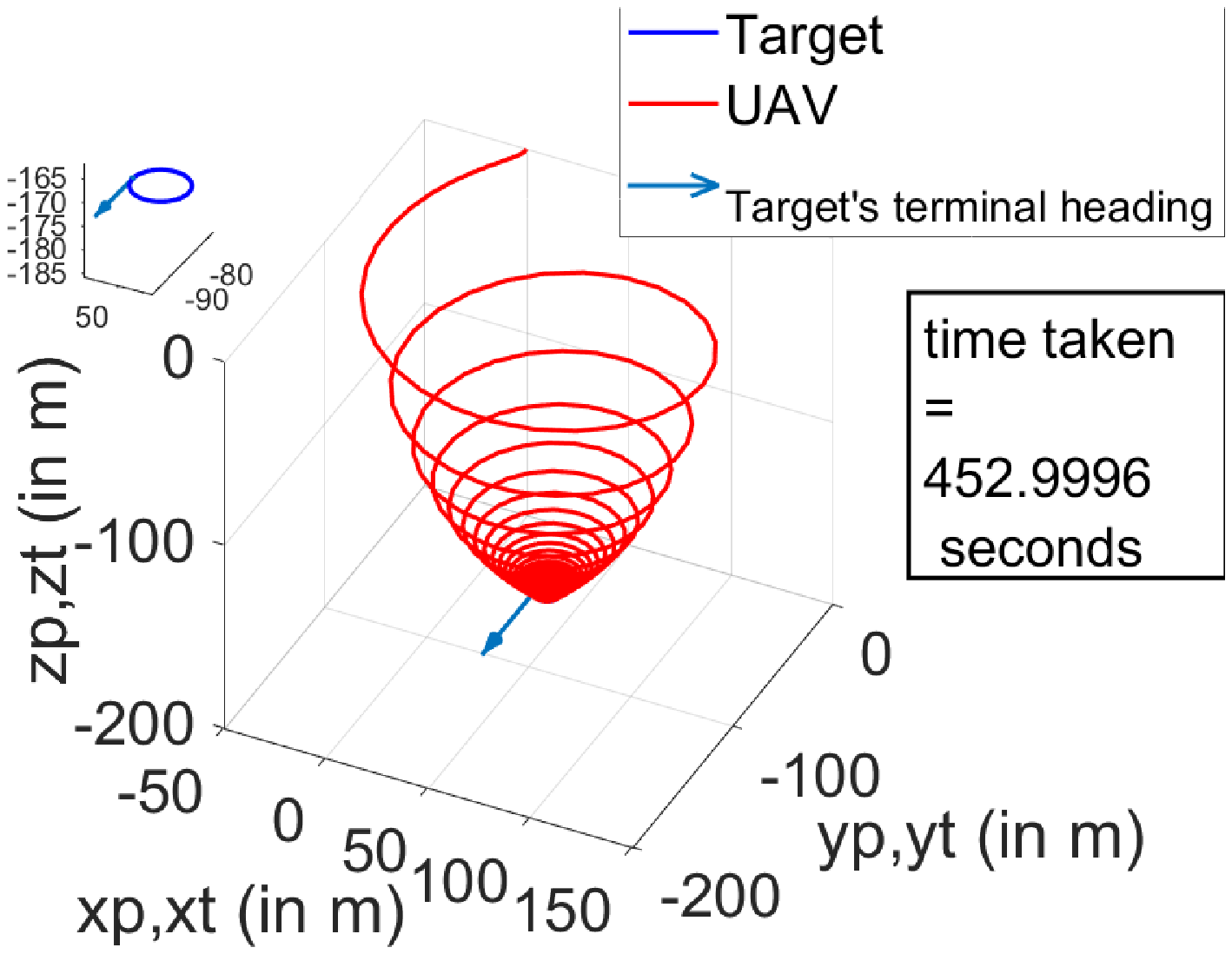}}%
\hfill 
\subcaptionbox{Sinusoidally Maneuvering Target\label{fig:sinusoidaltraj}}{\includegraphics[width=0.2\textwidth,height=9.5cm,keepaspectratio]{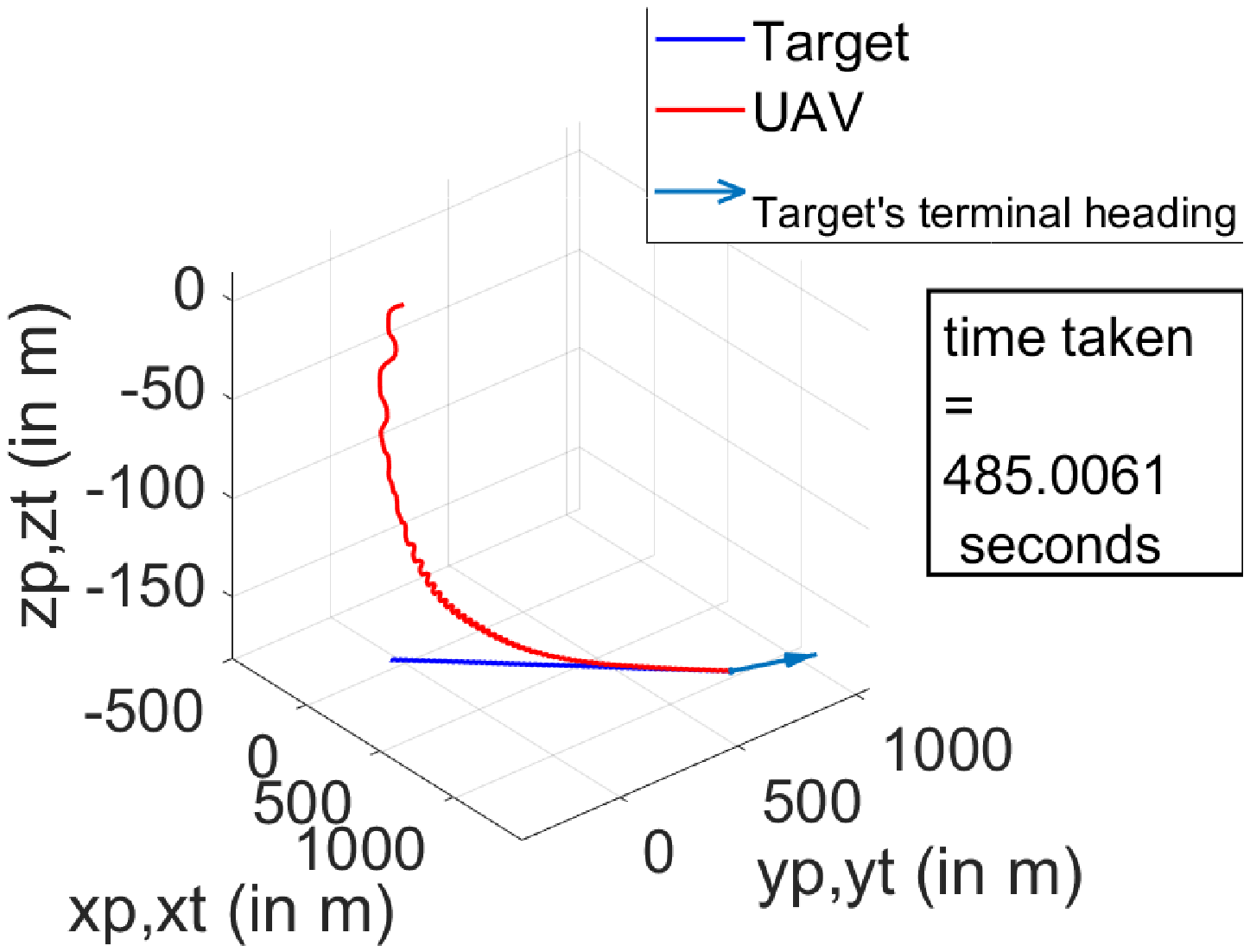}}%
\caption{Trajectory plots for UAV and target}
\label{fig:trajs}
\end{figure*}

\begin{figure*}[b!]
\centering
\subcaptionbox{Stationary Target\label{fig:stationarydist}}{\includegraphics[width=.2\textwidth,height=9.5cm,keepaspectratio,trim={0.5cm 0.0cm 0.0cm .08cm}]{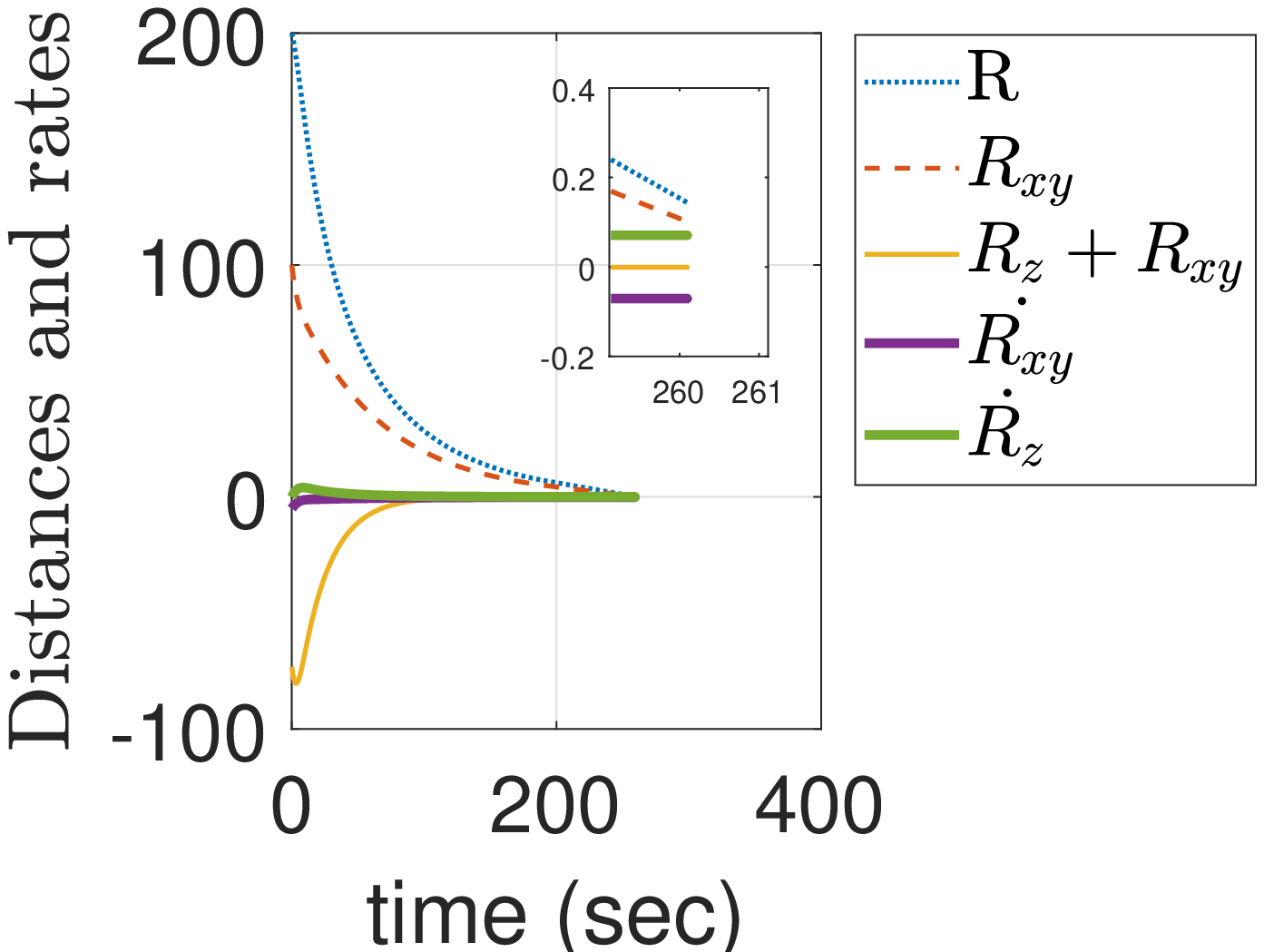}}%
\hfill 
\subcaptionbox{Non-Maneuvering Target\label{fig:slinedist}}{\includegraphics[width=.2\textwidth,height=9.5cm,keepaspectratio,trim={0.5cm 0.0cm 0.0cm .08cm}]{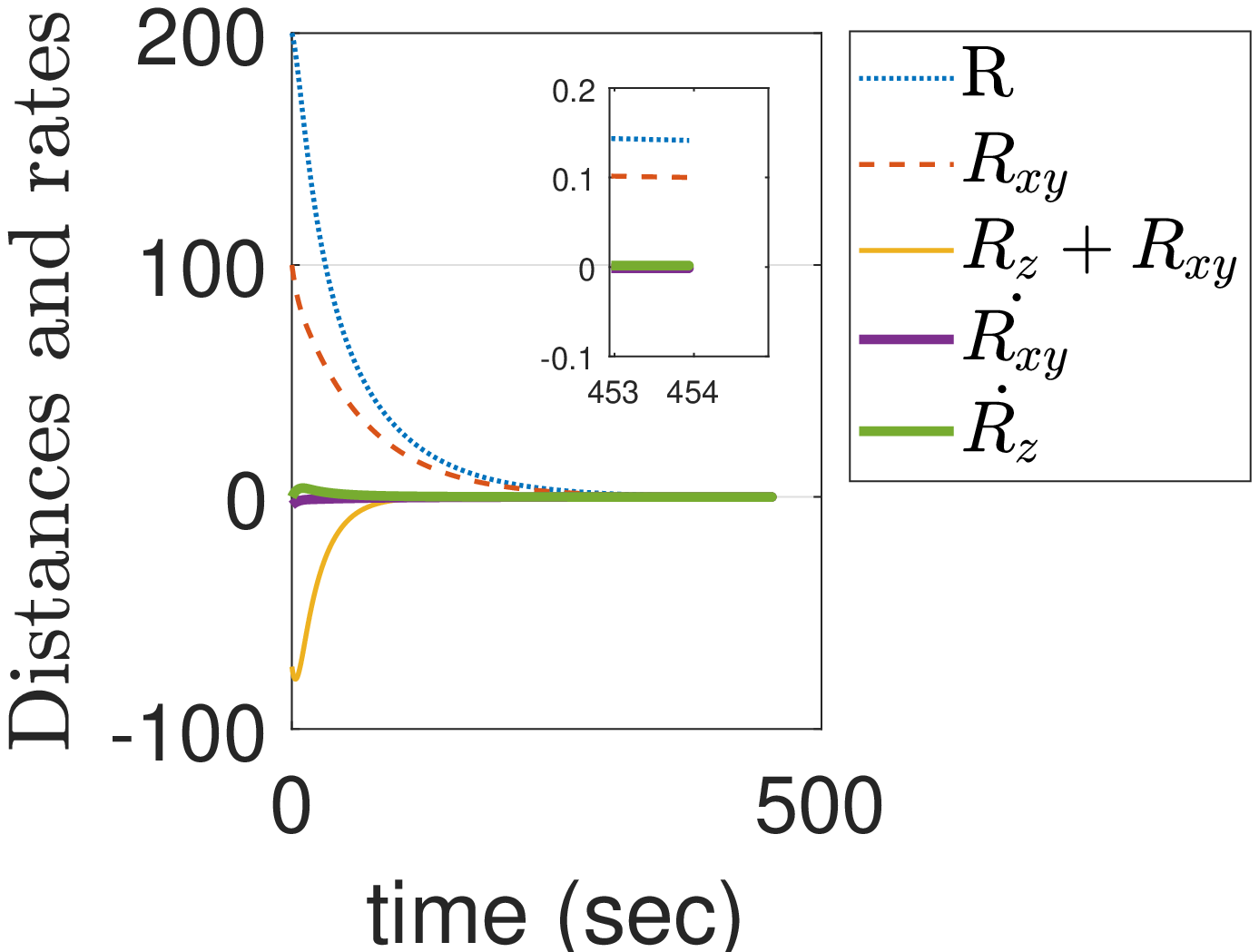}}%
\hfill 
\subcaptionbox{Constant Maneuvering Target\label{fig:circulardist}}{\includegraphics[width=.2\textwidth,height=9.5cm,keepaspectratio,trim={0.5cm 0.0cm 0.0cm .08cm}]{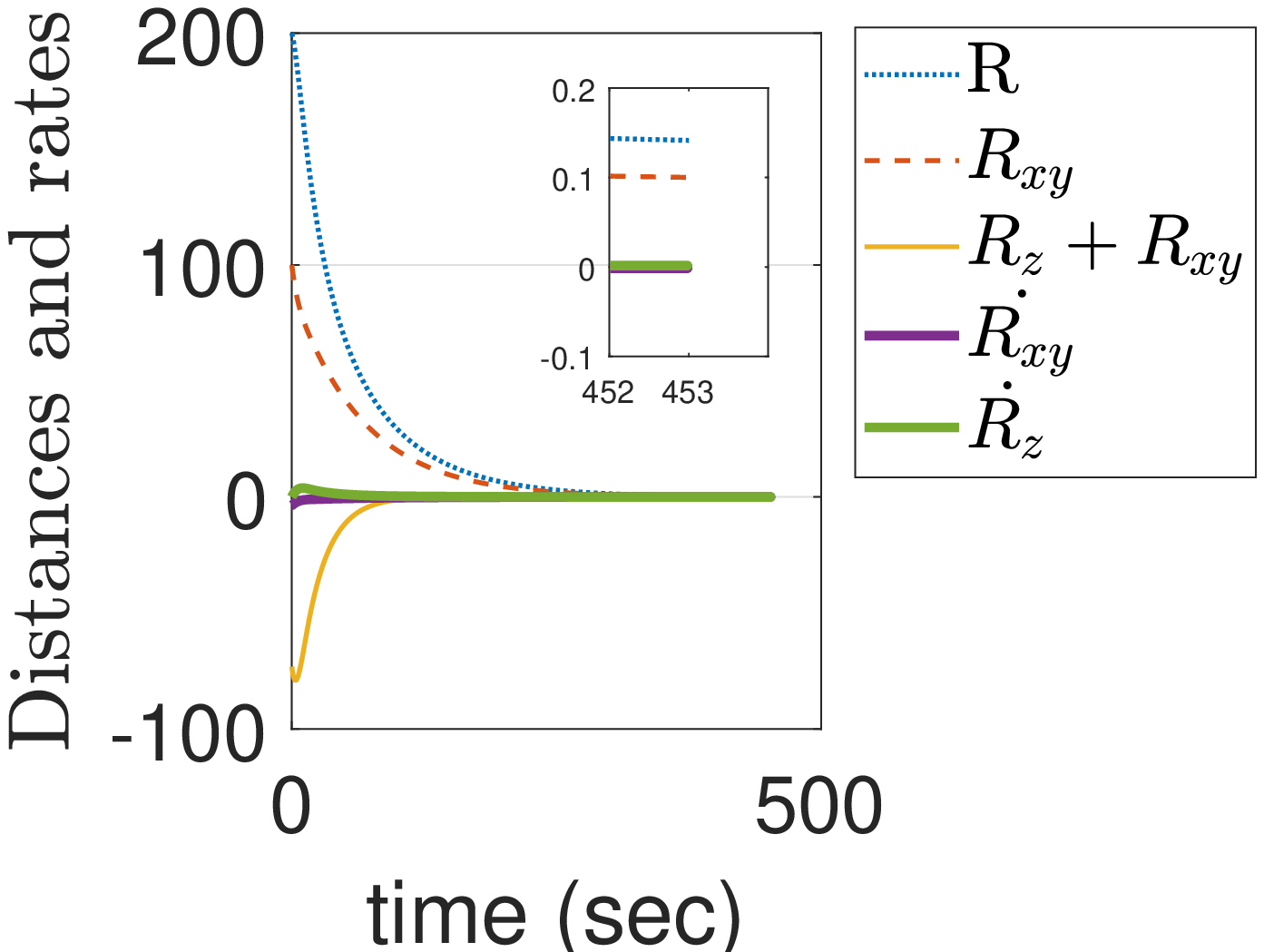}}%
\hfill 
\subcaptionbox{Sinusoidally Maneuvering Target\label{fig:sinusoidaldist}}{\includegraphics[width=.2\textwidth,height=9.5cm,keepaspectratio,trim={0.5cm 0.0cm 0.0cm .08cm}]{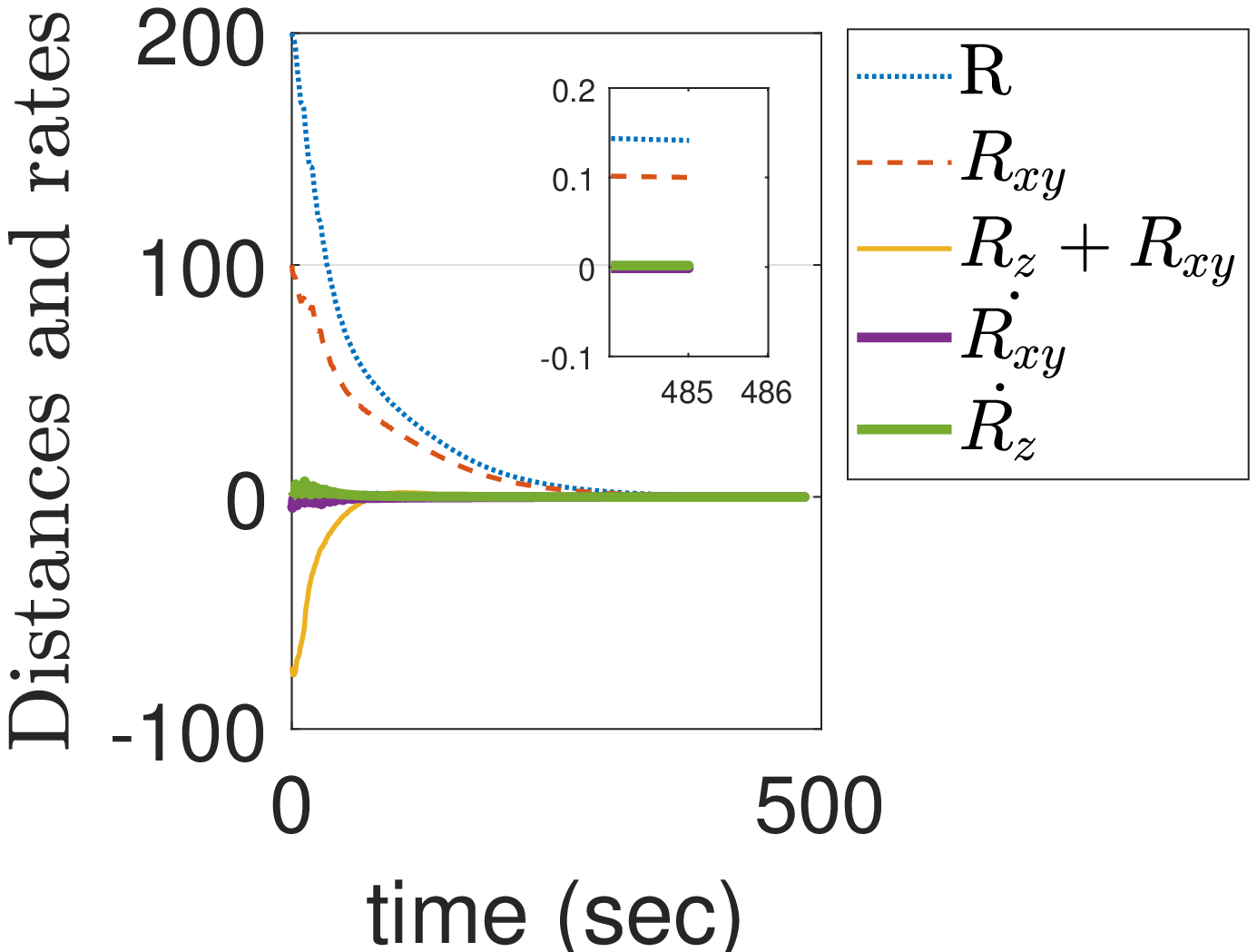}}%
\caption{Distance from target and it's projections on the xy-plane and along the z-axis}
\label{fig:dists}
\end{figure*}

\subsubsection{Selection Criteria for parameters $k_a$, $k_b$, and $k_c$}
Consider the parameter $k_a$, which represents the exponential decay constant for $R_{xy}$, when the system is in sliding mode. From Eq. \eqref{eq:slidingmodedynamics} for $R_{xy}$, a higher value for $k_a$ would result in higher desired speeds for the UAV. Thus, this parameter could be tuned based on simulations conducted for different cases. 

Finally, consider the selection criteria of $k_{b}$ and $k_{c}$. Since the desired azimuth and elevation angles are to be achieved before the completion of landing, it is desired that $\psi$ approaches $\alpha_t+\zeta_{des}$ and $\theta$ approaches $\theta_{des}$ faster than $R_{xy}$ and $R_z$ approach zero. This leads to 
the following selection criterion :
\begin{align}
    k_{b} > k_{a} ,  k_{c} > k_{a}
\end{align}
Here, $k_a$, $k_b$ and $k_c$ are to be chosen judiciously such that the guidance command inputs are within acceptable limits.

\subsubsection{FOV-constrained landing}
In Section \ref{subsec:synthesis}, it was proved in Theorem \ref{th:stability} that given finite time convergence of sliding variables and $k_b > k_a$, $\theta$ approaches a desired value asymptotically. This could be helpful in Field-of-View (FOV)-constrained landing, where the target is to be maintained in the UAV's FOV. In such a scenario, $\theta_{des}$ is set as the angle between the orthogonal projection of the centre-reference vector of the FOV on the xz-plane and the reference x-axis as shown in Fig. \ref{fig:geometry2}.

\section{Simulation Results}\label{sec:results}
MATLAB simulations are presented in this section assuming point mass models of the UAV and the UGV. As per the developed guidance algorithm (refer to Algorithm \ref{alg:inputsummary}), three guidance command inputs ($\dot{V_p}, \dot{\alpha_p}, \dot{\gamma}$) are generated. 
Simulation results are presented for four different kinds of targets - stationary, moving but non-maneuvering (straight-line trajectory), constant maneuvering (circular trajectory) and sinusoidally maneuvering trajectory of the ground target. In all the cases considered, the initial speed of the UAV is taken as 5 m/s. The UAV starts, at a distance of 200 m from the target, with $R_{xy0}=100 $ m and $R_{z0}=100 \sqrt{3} $ m, such that $\theta_0 = \pi/3$. The initial heading angle ($\alpha_{p0})$ and the flight path angle ($\gamma_0$) of the UAV are taken as $-\pi/3$ and $0$ rad, respectively. The initial LOS angle projected to the (xy)-plane ($\psi_{0}$) is taken as $-\pi/3$ rad. In the cases of moving target, the target speed is fixed at 3 m/s(=$V_t$). The parameters $m,n$ in Eq. \eqref{eq:Sdotdefn} are fixed as $m=5, n=3$. The parameters $N_1 , N_2$ and $N_3$, which dictate the maximum magnitudes for $\dot{V_p}$, $\dot{\alpha_p}$ and $\dot{\gamma}$ as mentioned in Algorithm \ref{alg:inputsummary}, are set as 10 $m/s^2$, $\pi/2$ rad/s and $\pi/2$ rad/s, respectively. Also, the parameters $M_1$ and $M_2$ as mentioned in Algorithm \ref{alg:inputsummary} are set as 0.1 m/s and 0.15, respectively. 
Rest of the guidance parameters used in the simulations along with the heading rate of the target and the desired approach angles in all cases are presented in Table \ref{tab:1initials}. The corresponding simulation results are shown in Figs. \ref{fig:trajs}-\ref{fig:slideswithtime}.

\begin{tiny}
\begin{center}
\begin{table}[h!]
\centering
\setlength\tabcolsep{4.5pt} %
\scalebox{0.8}[0.8]{
\begin{tabular}{ |p{0.07\textwidth}|p{0.04\textwidth}|p{0.045\textwidth}|p{0.025\textwidth}|p{0.05\textwidth}|p{0.035\textwidth}|p{0.035\textwidth}|p{0.035\textwidth}|p{0.035\textwidth}|p{0.035\textwidth}|}
\hline
Target type & $\dot{\alpha_{t}}(t)$ (rad/s) & $\zeta_{des} =$ \scalebox{.7}[1.0]{($\psi_{des} - \alpha_t)$} (rad) & $\theta_{des}$ (rad) & $k_a$ $=$ $1.5/R_{xy0}$ & $k_b$ $=$ $3k_a$ & $k_c$ $=$ $2k_a$ & $k_1$ & $k_2$ & $k_3$\\  
\hline
Stationary & 0 & $\pi$ & $\pi/4$ & 0.0150 & 0.0450 & 0.0300 & 0.1395 & 0.1784 & 0.0442\\
\hline
Non-maneuvering & 0 & $\pi/2$ & $\pi/4$ & 0.0150 & 0.0450 & 0.0300 & 0.0914 & 0.1297 & 0.0323\\ 
\hline
Constant-maneuvering & $\pi/6$ & $\pi/2$ & $\pi/4$ & 0.0150 & 0.0450 & 0.0300 & 0.0914 & 0.1297 & 0.0641\\
\hline
Sinusoidally maneuvering & $(\pi/6)$ $\sin(\frac{\pi t}{4} )$ & 0 & $\pi/4$ & 0.0150 & 0.0450 & 0.0300 & 0.0914 & 0.1297 & 0.0169\\
\hline
\end{tabular}}
\caption{Simulation cases and Guidance Parameters}
\label{tab:1initials}
\end{table}
\end{center}
\end{tiny}

\begin{figure*}[h!]
\centering
\subcaptionbox{Stationary Target\label{fig:stationaryinput}}{\includegraphics[width=.2\textwidth,height=9.5cm,keepaspectratio,trim={1cm 0.3cm 0.8cm .08cm}]{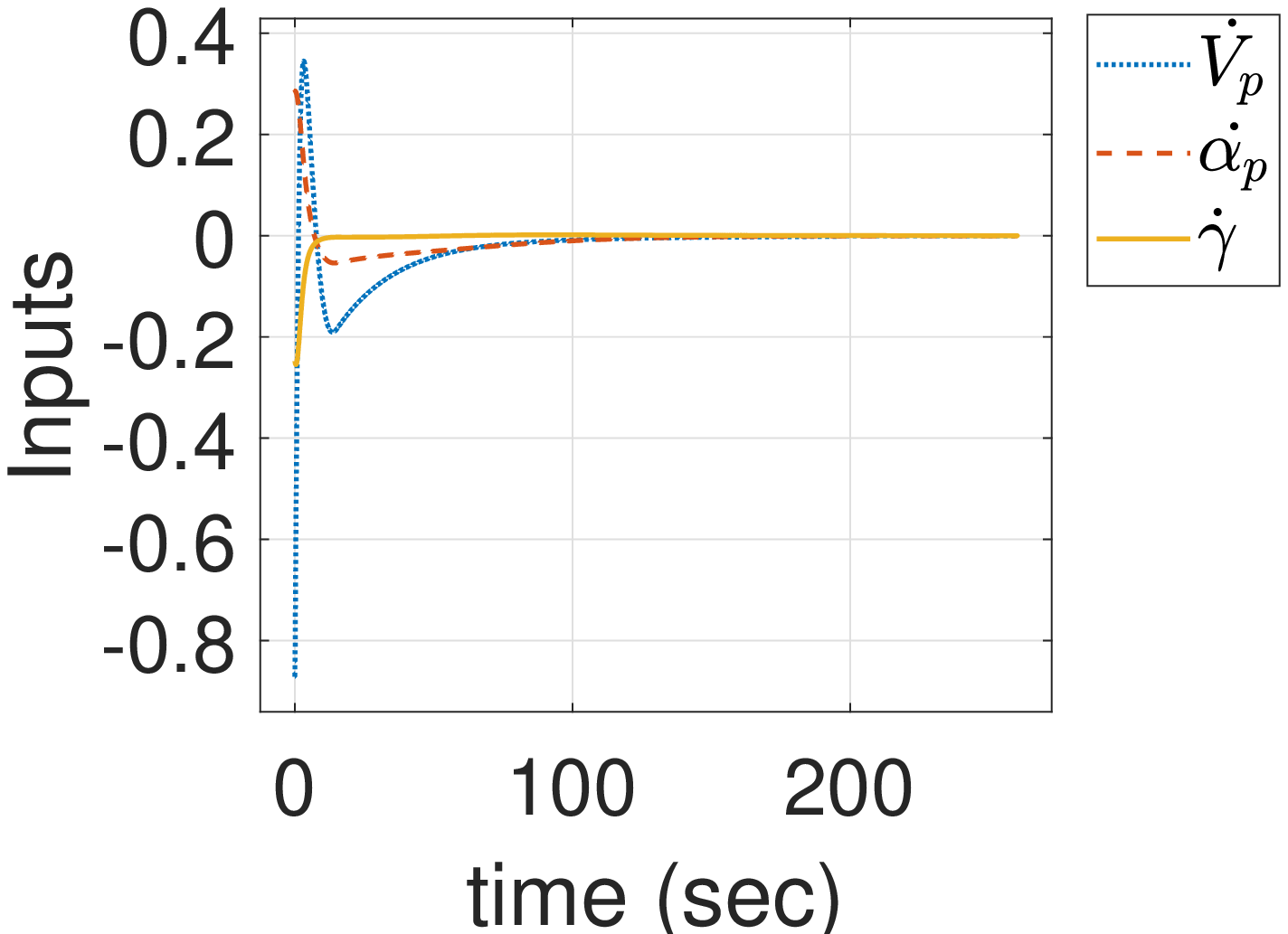}}%
\hfill 
\subcaptionbox{Non-Maneuvering Target\label{fig:slineinput}}{\includegraphics[width=.2\textwidth,height=9.5cm,keepaspectratio,trim={1cm 0.3cm 0.8cm .08cm}]{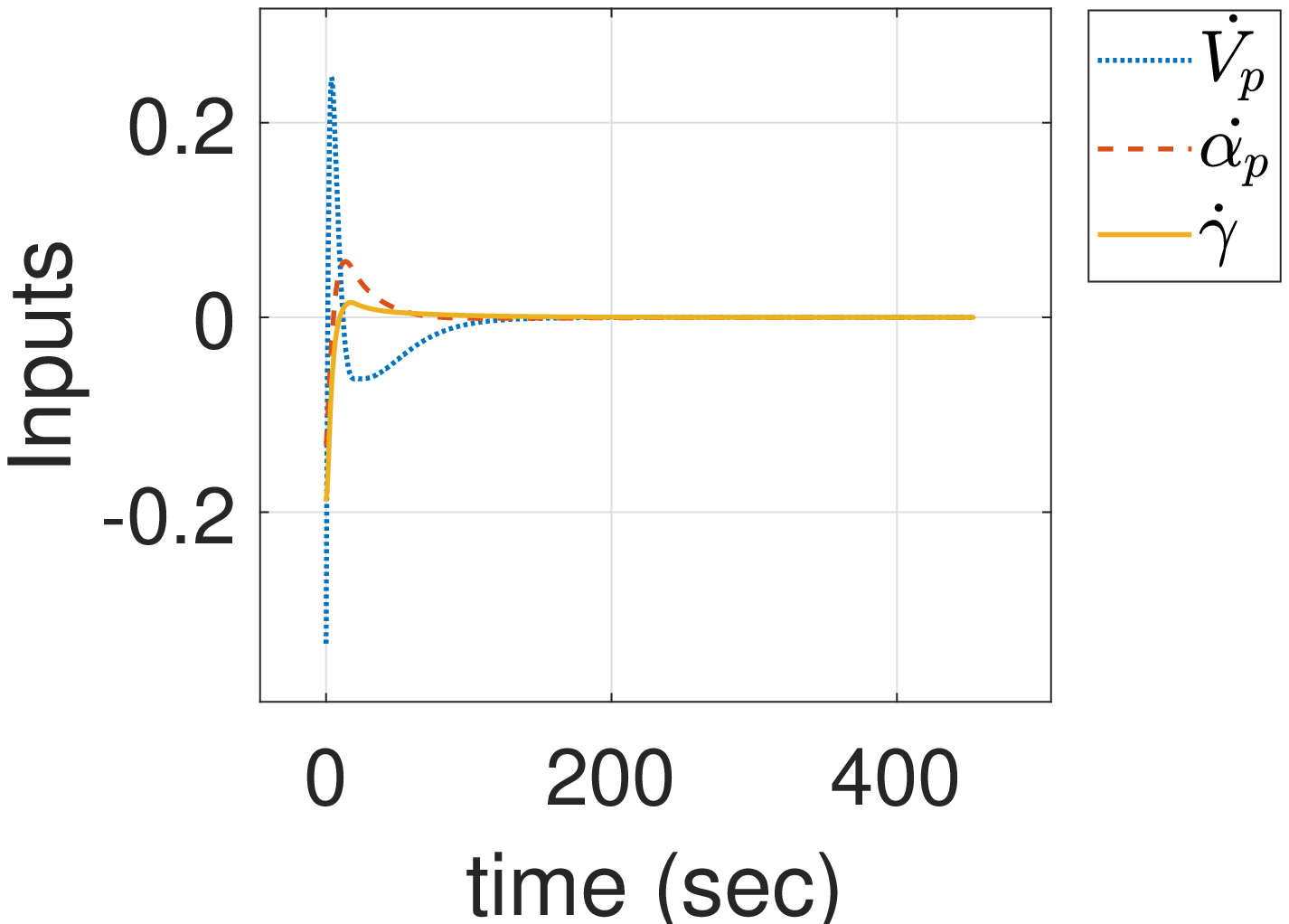}}%
\hfill 
\subcaptionbox{Constant Maneuvering Target\label{fig:circularinput}}{\includegraphics[width=.2\textwidth,height=9.5cm,keepaspectratio,trim={1cm 0.3cm 0.8cm .08cm}]{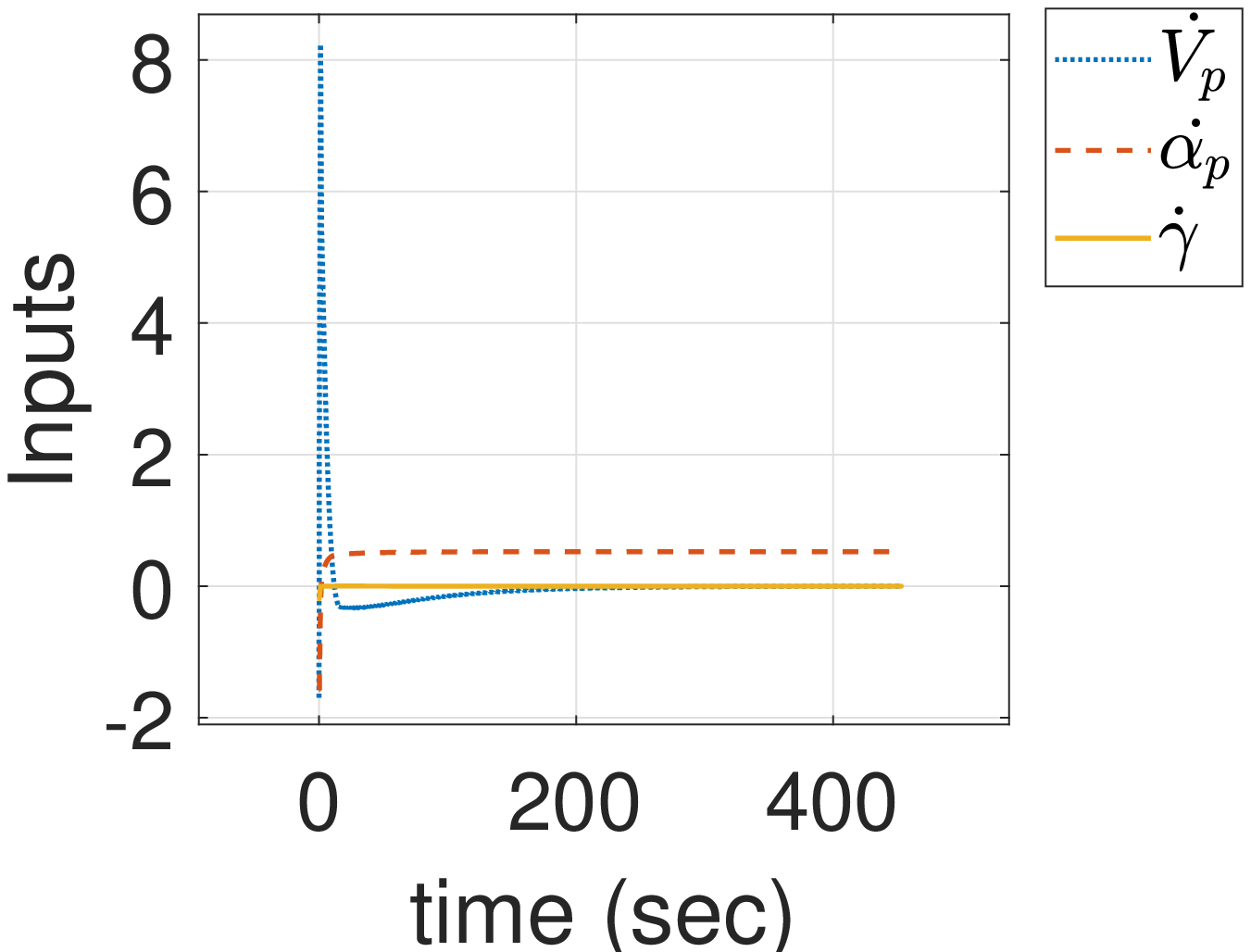}}%
\hfill 
\subcaptionbox{Sinusoidally Maneuvering Target\label{fig:sinusoidalinput}}{\includegraphics[width=.2\textwidth,height=9.5cm,keepaspectratio,trim={1cm 0.3cm 0.8cm .08cm}]{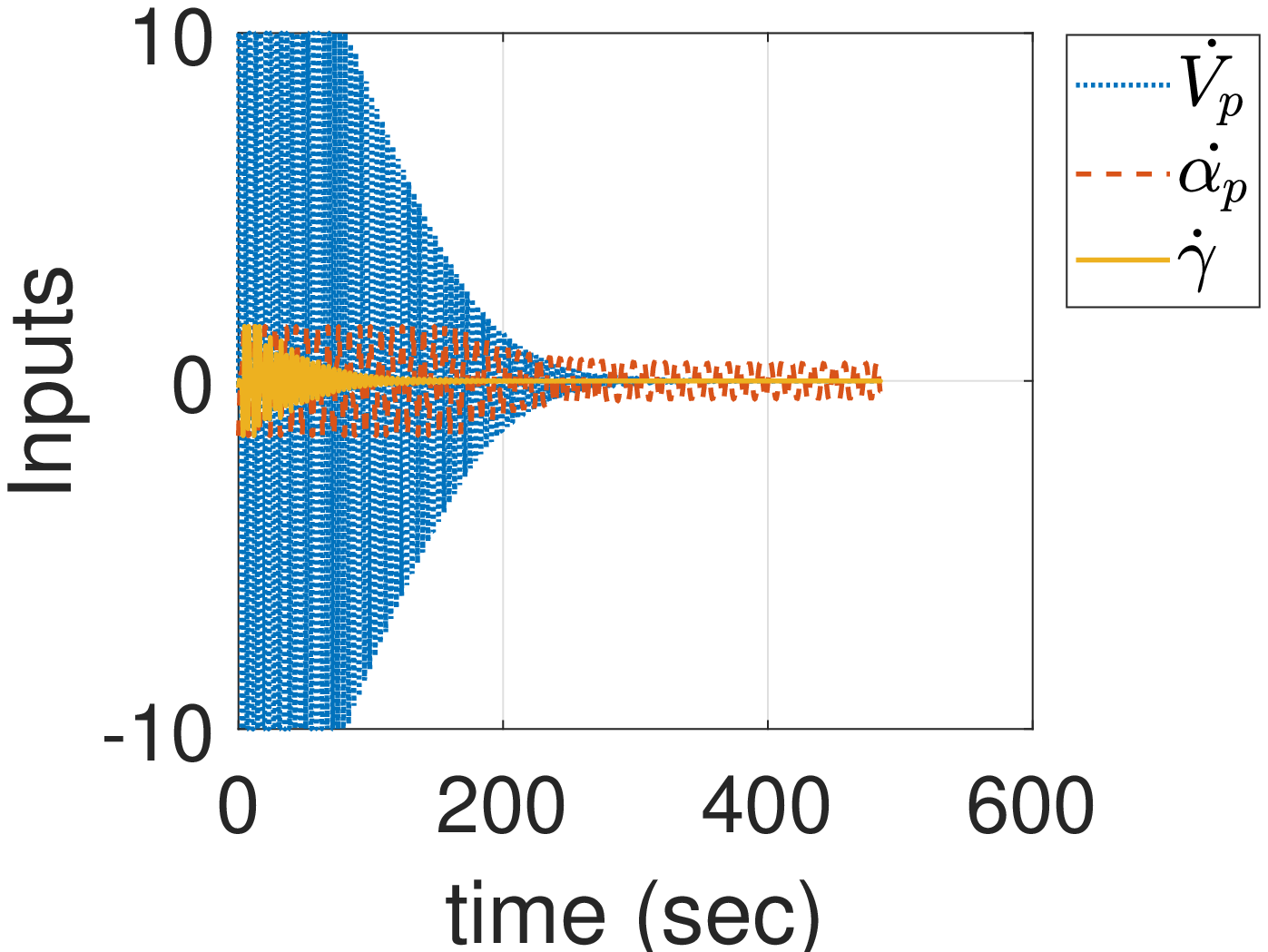}}%
\caption{Guidance Commands for UAV}
\label{fig:inputs}
\end{figure*}

\begin{figure*}[h!]
\centering
\subcaptionbox{Stationary Target\label{fig:stationaryheadspeed}}{\includegraphics[width=.2\textwidth,height=9.5cm,keepaspectratio,trim={1cm 0.3cm 0.0cm .08cm}]{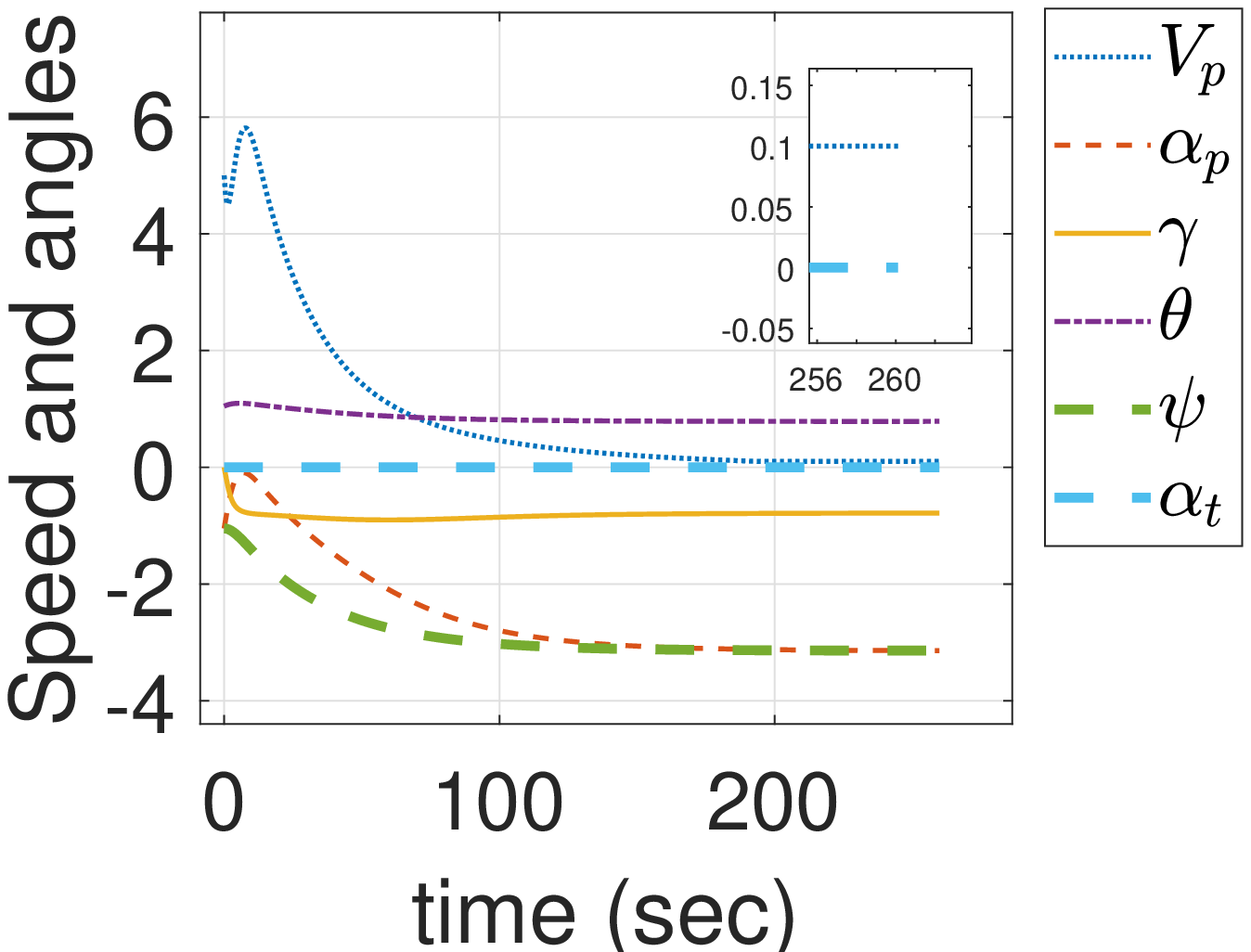}}%
\hfill 
\subcaptionbox{Non-Maneuvering Target\label{fig:slineheadspeed}}{\includegraphics[width=.2\textwidth,height=9.5cm,keepaspectratio,trim={1cm 0.3cm 0.0cm .08cm}]{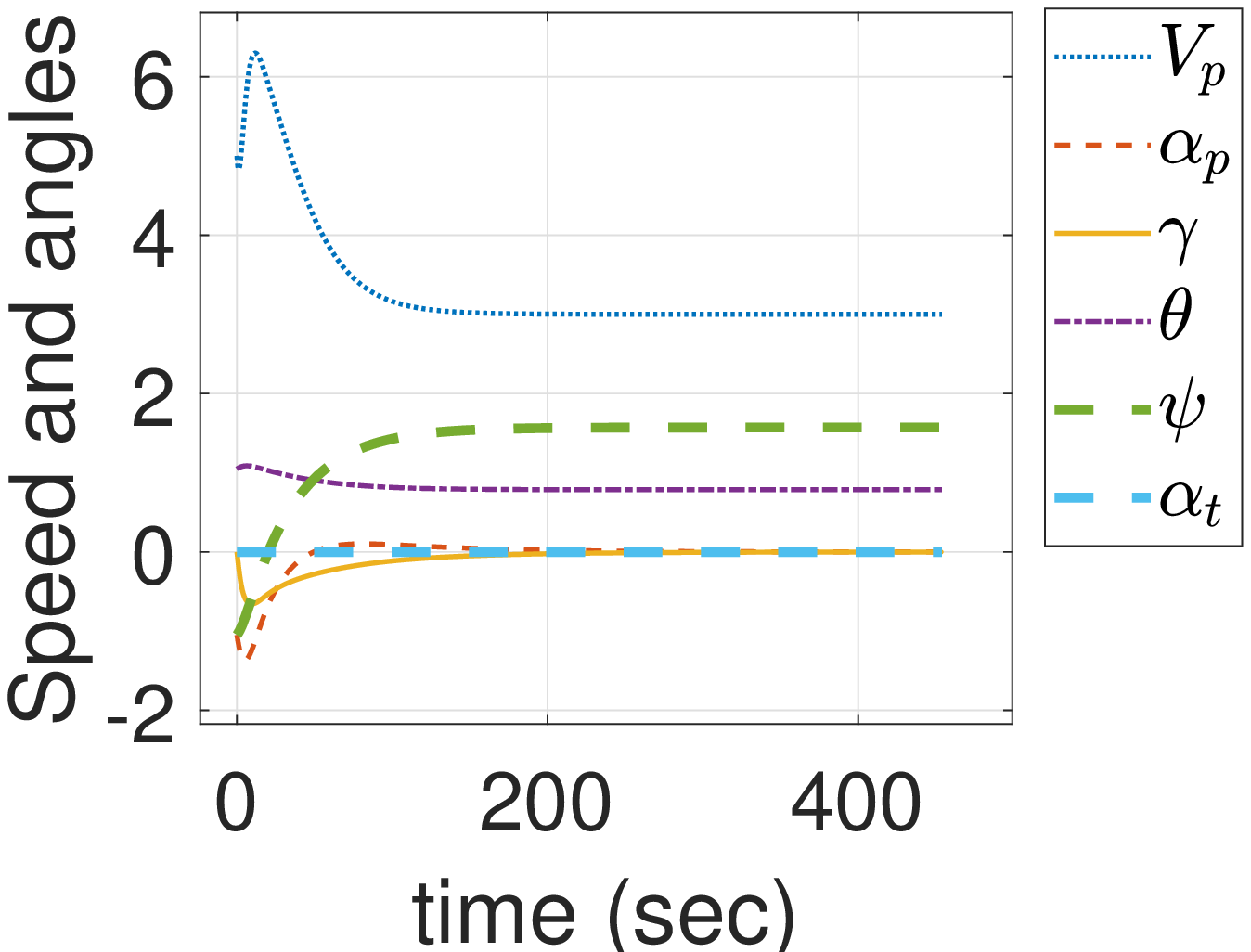}}%
\hfill 
\subcaptionbox{Constant Maneuvering Target\label{fig:circularheadspeed}}{\includegraphics[width=.2\textwidth,height=9.5cm,keepaspectratio,trim={1cm 0.3cm 0.0cm .08cm}]{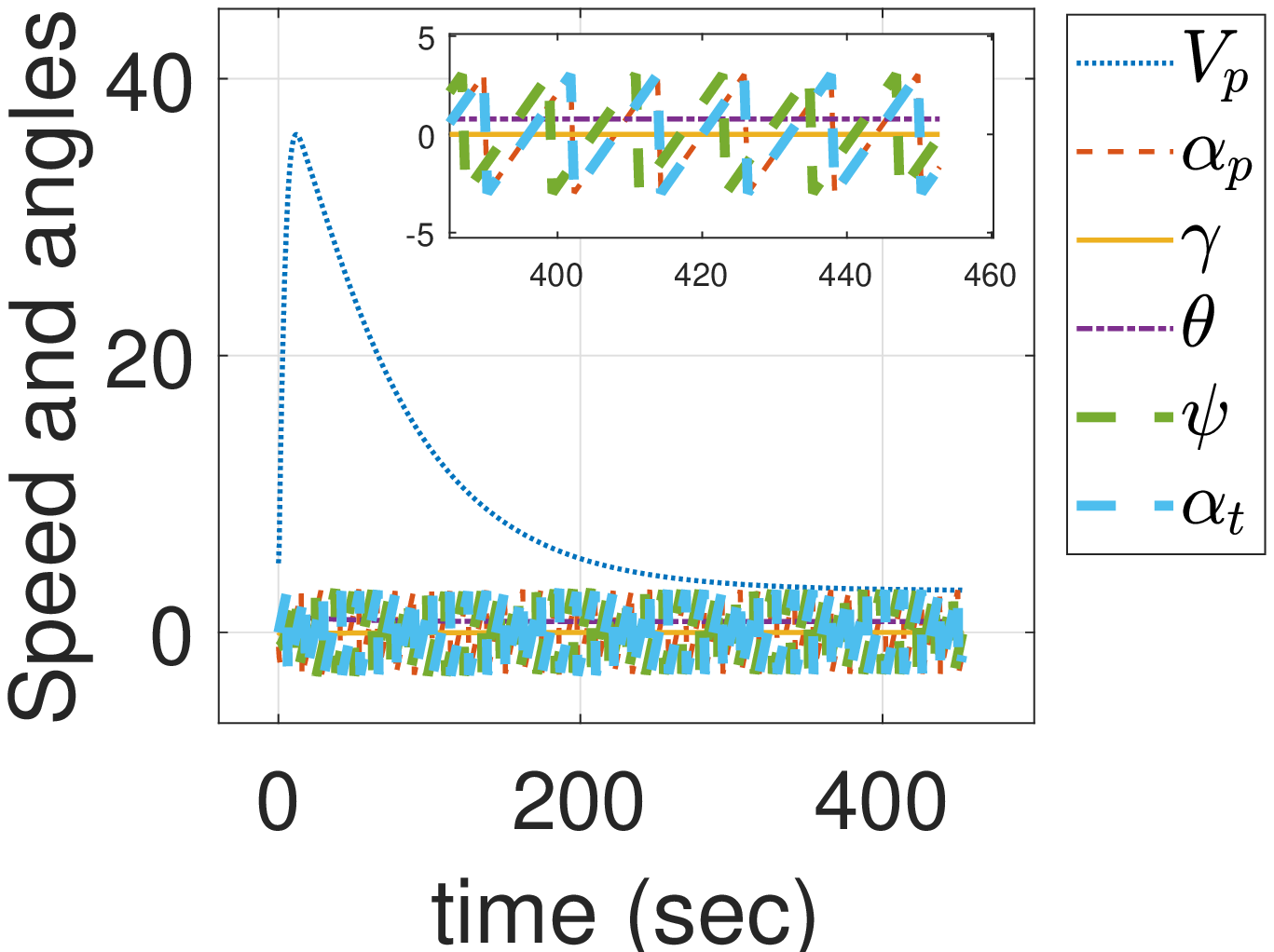}}%
\hfill 
\subcaptionbox{Sinusoidally Maneuvering Target\label{fig:sinusoidalheadspeed}}{\includegraphics[width=.2\textwidth,height=9.5cm,keepaspectratio,trim={1cm 0.3cm 0.0cm .08cm}]{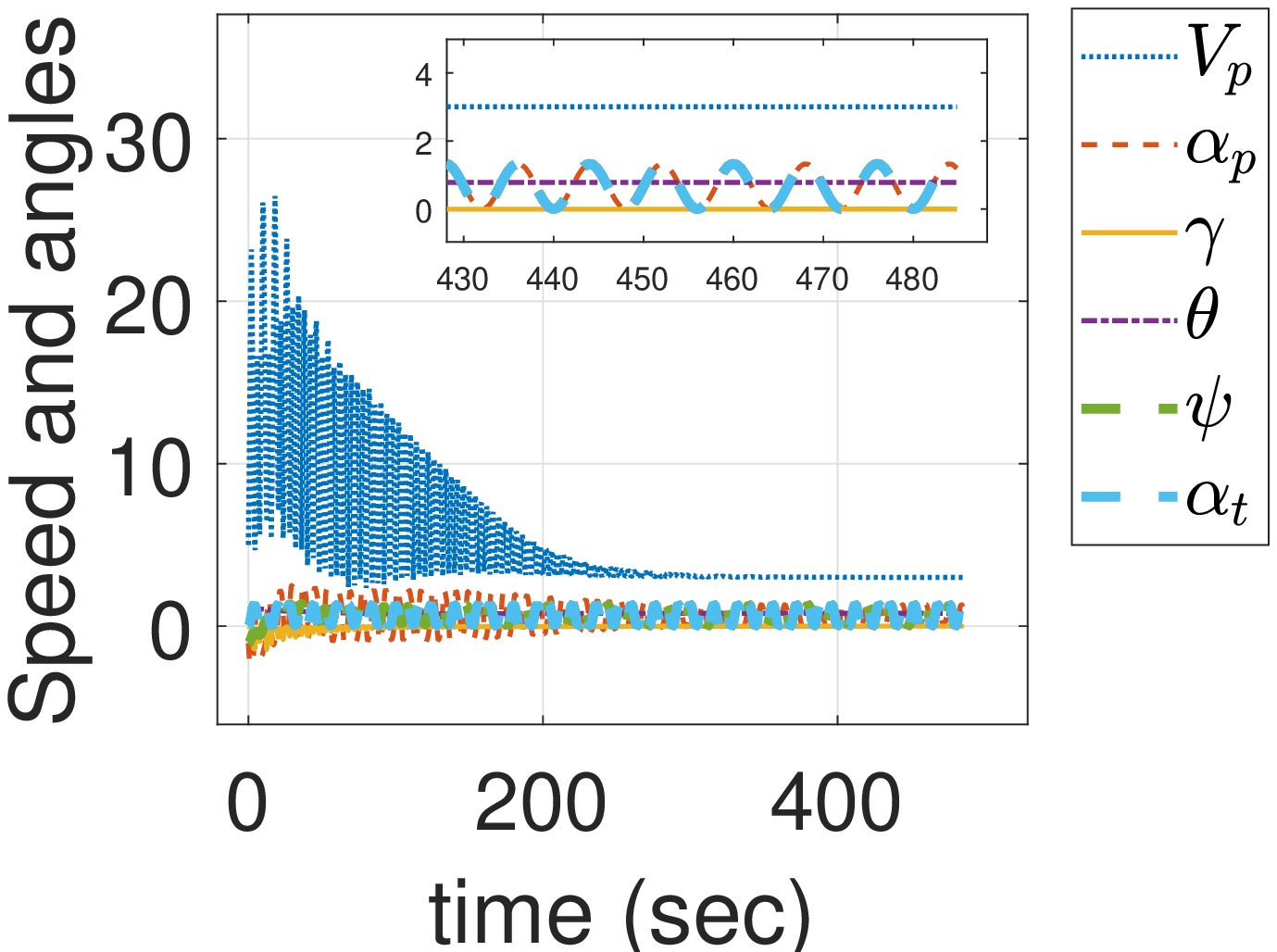}}%
\caption{UAV Speed and angles}
\label{fig:headspeeds}
\end{figure*}

\begin{figure*}[h!]
\centering
\subcaptionbox{Stationary Target\label{fig:stationarys}}{\includegraphics[width=.2\textwidth,height=9.5cm,keepaspectratio,trim={1cm 0.3cm 0.8cm .08cm}]{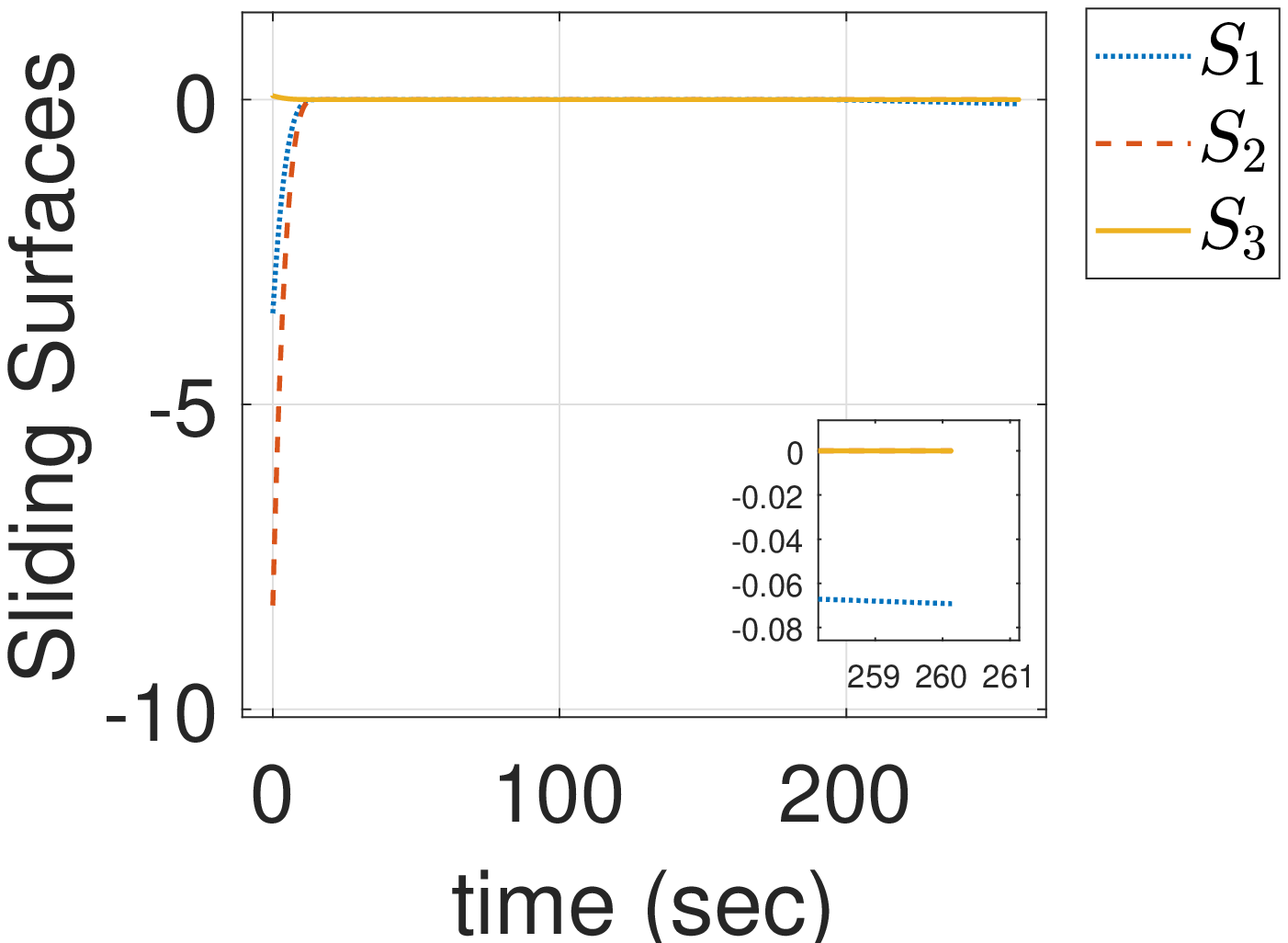}}%
\hfill 
\subcaptionbox{Non-Maneuvering Target\label{fig:slines}}{\includegraphics[width=.2\textwidth,height=9.5cm,keepaspectratio,trim={1cm 0.3cm 0.8cm .08cm}]{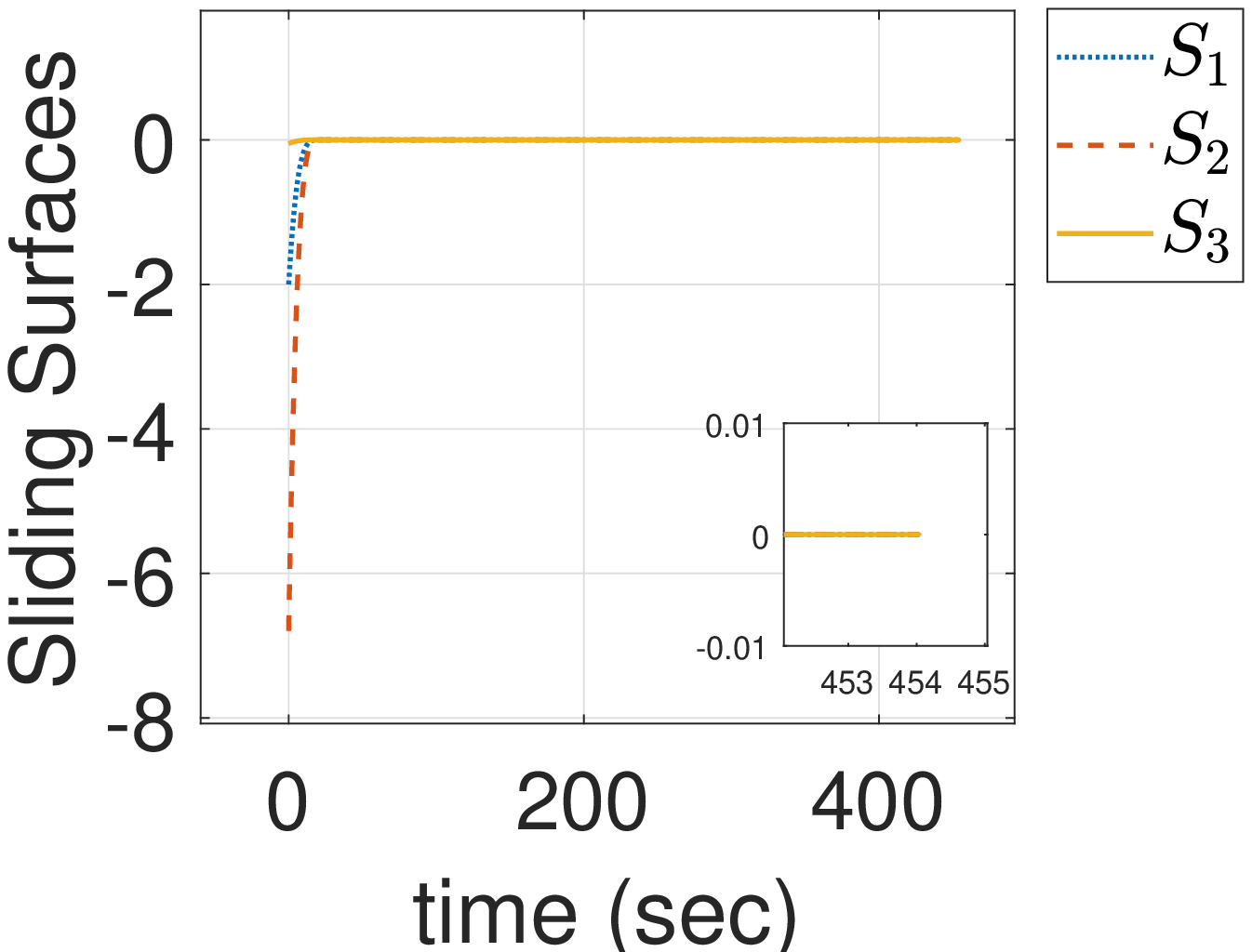}}%
\hfill 
\subcaptionbox{Constant Maneuvering Target\label{fig:circulars}}{\includegraphics[width=.2\textwidth,height=9.5cm,keepaspectratio,trim={1cm 0.3cm 0.8cm .08cm}]{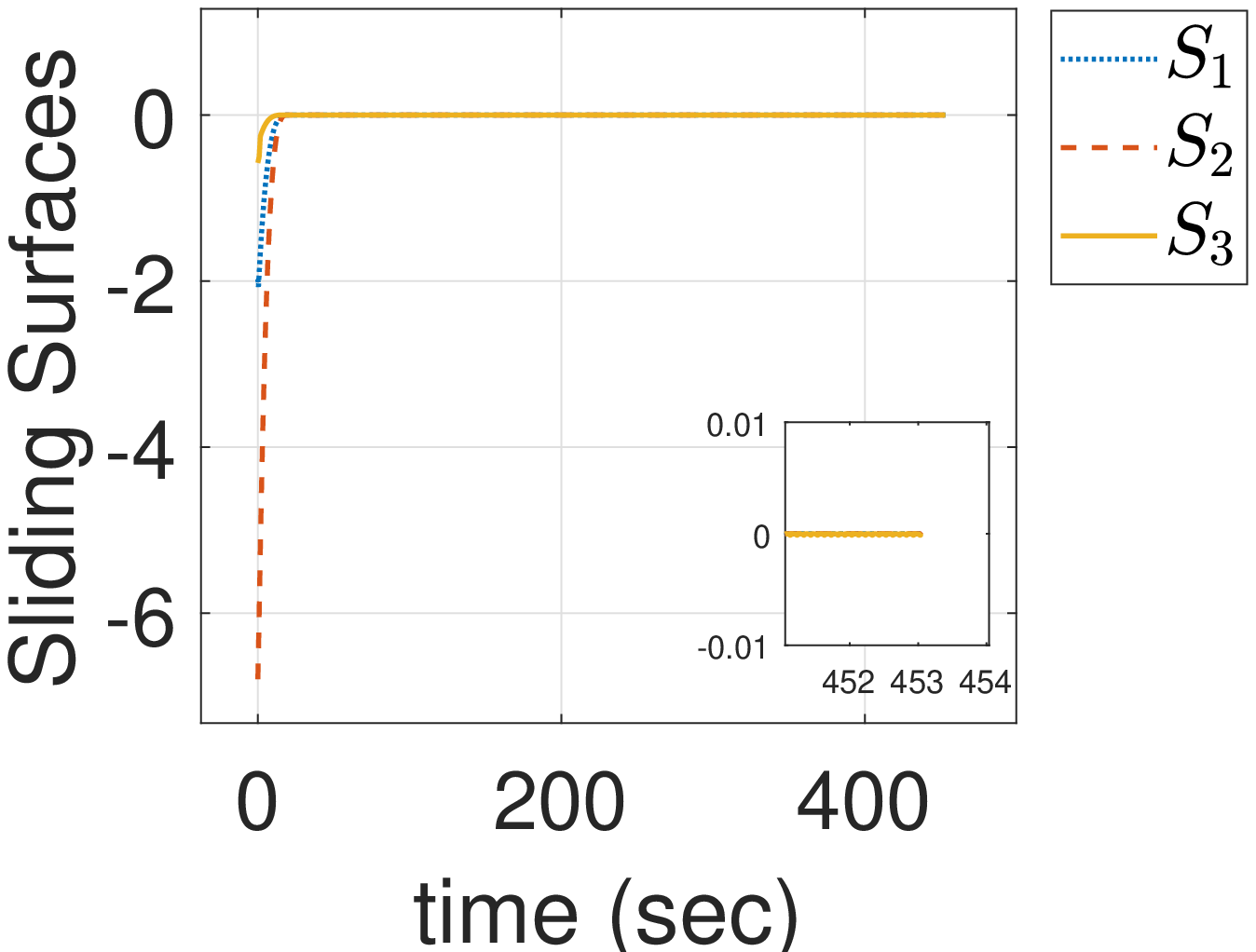}}%
\hfill 
\subcaptionbox{Sinusoidally Maneuvering Target\label{fig:sinusoidals}}{\includegraphics[width=.2\textwidth,height=9.5cm,keepaspectratio,trim={1cm 0.3cm 0.8cm .08cm}]{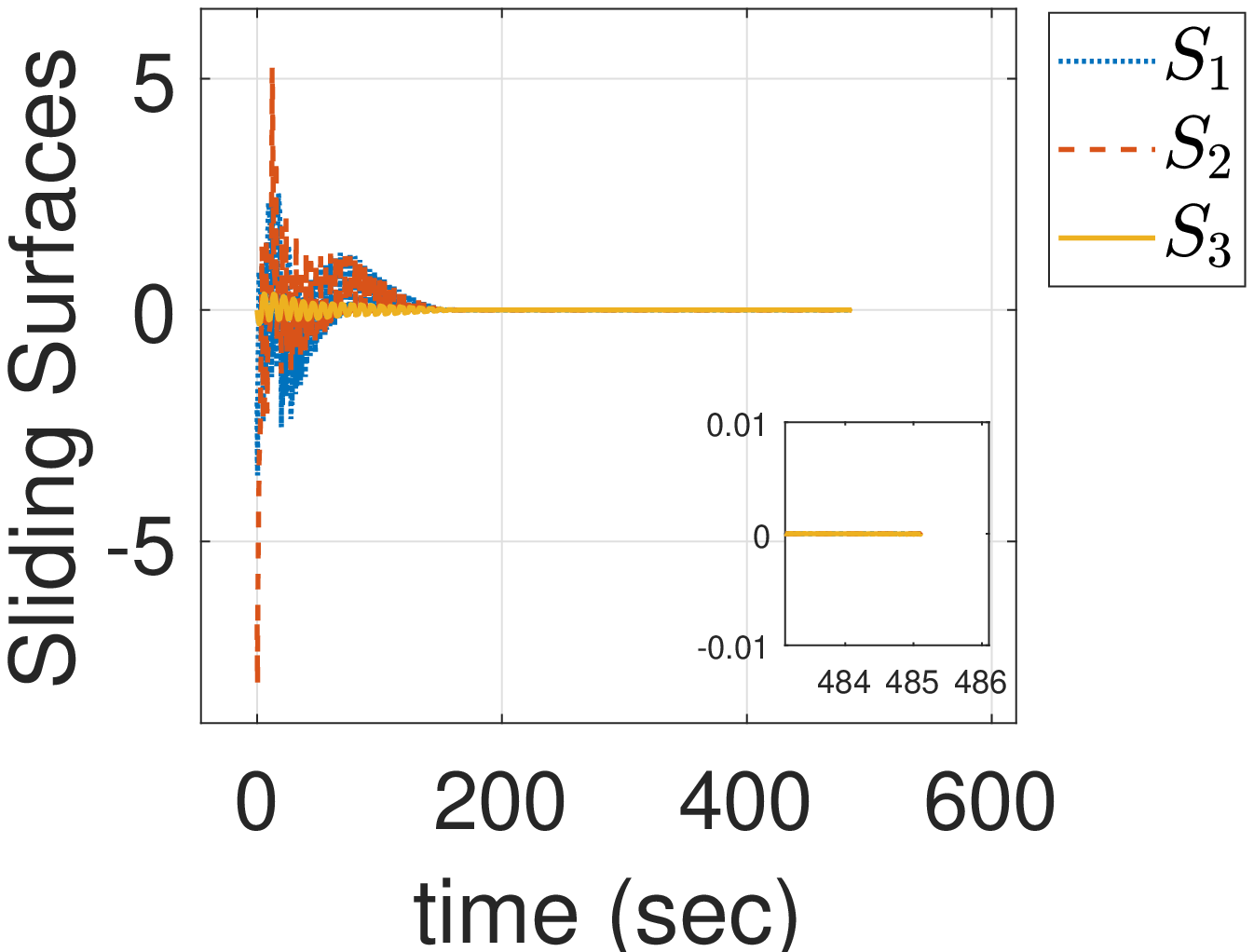}}%
\caption{Sliding Variables}
\label{fig:slideswithtime}
\end{figure*}

\subsection{Stationary Target :}\label{subsec:simsstationary}
In Section \ref{subsec:guidanceInputs}, it was mentioned that unique solutions to the system of equations (Eq. \eqref{eq:input}) exist only when $V_{p} \neq 0$ and $\cos(\gamma) \neq 0$. In the case of stationary targets, at the end of landing phase $V_{p} \to 0$ as $R \to 0$ for a smooth landing. This poses a problem in the end-game phase, 
which is avoided by suitable manipulation suggested in Algorithm \ref{alg:inputsummary}. As a consequence, the speed of the pursuer is very close to zero at touchdown ($\lim_{R \to 0} V_p = 0.1$ m/s) as can be seen in Fig. \ref{fig:stationaryheadspeed}, and there is a slight error in the convergence of sliding variable $S_1$ as can be seen in Fig. \ref{fig:stationarys}. Since the errors in the range rates and the sliding variables at touchdown are sufficiently small, they could be neglected in most practical scenarios. More importantly, from Fig. \ref{fig:stationaryheadspeed}, it is noted that at touchdown, $\lim_{R \to 0}(\psi-\alpha_t) = -\pi$ rad (for stationary target $\alpha_t=0$) and $\lim_{R \to 0}\theta = \pi/4$ rad, which satisfy the desired approach angles indicated in Table \ref{tab:1initials}. Near to the start of the engagement, a maximum UAV speed of 6 m/s is demanded, which is within acceptable bounds for the UAV.

\subsection{Non-maneuvering Moving Target :}\label{subsec:simsnonman}
In this case, the UAV is to land on a non-maneuvering moving target with a heading direction, $\alpha_{t}=0$. From Fig. \ref{fig:slinedist}, it can be seen that $\dot{R_{xy}}$ and $\dot{R_{z}}$ converge to zero at touchdown, implying successful soft landing on the target. From Fig. \ref{fig:slineheadspeed} it can be observed that at touchdown, $\lim_{R \to 0}(\psi-\alpha_t) = \pi/2$ rad, $\lim_{R \to 0}\theta = \pi/4$ rad, $\lim_{R \to 0}V_p = \lim_{R \to 0}V_t= 3$ m/s, $\lim_{R \to 0}\alpha_p = \lim_{R \to 0}\alpha_t=0$ rad and $\lim_{R \to 0}\gamma=0$ rad, which satisfy the objectives of Soft-landing and achieving desired approach angles as given in Eqs. \eqref{eq:desiredheadspeed} and \eqref{eq:desiredapproach}. In this case also near to the start of the engagement, a maximum UAV speed of 7 m/s is demanded, which is within acceptable bounds for the UAV for landing on the target moving at 3 m/s.

\subsection{Constant Maneuvering Target :}\label{subsec:simscircular}
In this case, the target executes a circular motion with an initial heading angle, $\alpha_{t0} = 0$ rad with a constant maneuver rate as shown in Table \ref{tab:1initials}. As seen from Fig. \ref{fig:circulars}, the sliding variables converge in a short time, and thus, the trajectory of the UAV is largely governed by the dynamics of the system on the sliding mode, in which $\dot\alpha_p$ remains almost constant (=$\pi/6$ rad/s) following constant $\dot\alpha_p$, as can be seen from Fig. \ref{fig:circularinput}. Besides this, in the sliding mode $R_{xy}$ and $R_z + \tan{\theta_{des}} R_{xy}$  decreases exponentially. As a consequence, it is observed in Fig. \ref{fig:circulartraj} that the UAV's trajectory is helical in nature, with the radius decreasing as the distance between the UAV and the target decreases. This also leads to a very high requirement of the UAV speed in the initial phase of the engagement, which is reflected in Fig. \ref{fig:circularheadspeed} and in very high guidance command $\dot{V_p}$ as can be seen in Fig. \ref{fig:circularinput}. This problem of high speed and guidance commands for maneuvering targets is further explained in Section \ref{subsec:improvement}. It may also be noted from Figs. \ref{fig:circulars} and \ref{fig:2circularheadspeed} that the sliding variables converge in the expected fashion, and the objectives of soft landing and achieving desired approach angles as given in Eqs. \eqref{eq:desiredheadspeed} and \eqref{eq:desiredapproach}, respectively, are satisfied ($\lim_{R \to 0}(\psi-\alpha_t) = \pi/2$ rad, $\lim_{R \to 0}\theta = \pi/4$ rad, $\lim_{R \to 0}V_p = \lim_{R \to 0}V_t= 3$ m/s, $\lim_{R \to 0}\alpha_p = \lim_{R \to 0}\alpha_t=0$ rad and $\lim_{R \to 0}\gamma=0$ rad).

\subsection{Sinusoidally Maneuvering Target :}\label{subsec:simssinusoidal}
In this case, the target is maneuvering in a sinusoidal fashion as given in Table \ref{tab:1initials} with the initial heading angle $\alpha_{t0} = 0$ rad. The initial conditions are set to be the same as the other cases considered for simulations in Sections \ref{subsec:simsstationary} - \ref{subsec:simscircular}. Similar to the constant-maneuvering target case, in this case also required $V_p$ and guidance command $\dot{V}_p$ become quite high near the start of the engagement as can be seen in Figs. \ref{fig:sinusoidalheadspeed} and \ref{fig:sinusoidalinput}, respectively. The variation of $\dot{\alpha}_p$ also tends to follow a sinusoidal pattern similar to that of $\dot{\alpha}_t$. 
From Fig. \ref{fig:sinusoidalinput}, it is noted that the guidance commands are oscillatory in nature, with the amplitude of oscillation high initially as during this time the target maneuvers at large distances from the UAV. Since $\dot{V_p}$ was capped at $10 m/s^2$ as indicated in Section \ref{sec:results}, guidance command $\dot{V_p}$ fluctuates rapidly between -10 and +10 $m/s^2$ near the start of the engagement. Consequently, the desired speed for the UAV is found to fluctuate rapidly in Fig. \ref{fig:sinusoidalheadspeed}. Sliding variables also take longer time to converge as seen from Fig. \ref{fig:sinusoidals}.
\newline

\section{Two-Phase Guidance Scheme}\label{sec:twophase}

\subsection{Motivation for two-phase scheme}\label{subsec:improvement}
From Fig. \ref{fig:inputs}, it can be seen that the magnitudes of the guidance commands generated by the presented guidance scheme Algorithm \ref{alg:inputsummary} are maximum at the start of the engagement for all types of targets considered for simulations. This follows mainly from Eq. \eqref{eq:Sdotdefn} that dictates the magnitudes of the rate of change of sliding variable ($|\dot{S}_i|$) to be higher when magnitude of the sliding variable ($|(S_i)|$) is higher. A higher magnitude of $\dot{S_i}$ as well as large $R_{xy}$ at the initial portion of the engagement results in higher guidance command. 
This effect is found to be more severe in case of maneuvering UGV due to the dynamics of $S_3=\psi - \alpha_t - \zeta_{des}$ 
containing the terms like non-zero $\dot{\alpha}_t$ and higher magnitude of $S_3$ getting multiplied by higher $R_{xy}$ as can be seen in Eqs. \eqref{eq:expand1}-\eqref{eq:expand3}. This poses a drawback in the performance of the presented guidance scheme (Algorithm \ref{alg:inputsummary}). To obviate this problem a two-phase guidance scheme is posed in this section, in which a pre-fixed desired value for $\psi$ is achieved at the first phase rather than achieving desired $\psi - \alpha_{t}$ in the first phase itself, while control of the terminal azimuth angle w.r.t. $\alpha_t$ is given priority in the second phase.

\subsection{Synthesis of Guidance command }\label{subsec:2phasesynthesis}
Following the discussion given above in Section \ref{subsec:improvement}, a switching condition is presented in this section based on the magnitude of $R_{xy}$. The sliding variables considered for the two-phase guidance scheme are now modified as follows:

if  $R_{xy} > R_{switch}$ (Phase-1)
\begin{equation} \label{eq:Sdefn2i}
{S = 
\begin{bmatrix}
\dot{R}_{xy}+k_{a}R_{xy} \\ 
\dot{R_{z}}+\tan(\theta_{des})\dot{R}_{xy} + k_{b}(R_{z} + \tan(\theta_{des})R_{xy})  \\ 
(\dot{\psi}) + k_{c}(\psi - ( \xi))
\end{bmatrix}
}
\end{equation}

else (Phase-2)
\begin{equation} \label{eq:Sdefn2ii}
{S = 
\begin{bmatrix}
\dot{R}_{xy}+k_{a}R_{xy} \\ 
\dot{R_{z}}+\tan(\theta_{des})\dot{R}_{xy} + k_{b}(R_{z} + \tan(\theta_{des})R_{xy})  \\ 
(\dot{\psi} - \dot{\alpha_{t}}) + k_{c}(\psi - (\alpha_{t} + \zeta_{des}))
\end{bmatrix}
}
\end{equation}

Note that in each phase of this two-phase guidance scheme also, dynamics of the sliding variables are considered same as in Eq. \eqref{eq:Sdotdefn}, and guidance commands are obtained following same method as discussed in Section \ref{subsec:guidanceInputs}. Stability in each phase can also be justified following similar logic as in the proof of Theorem \ref{th:stability}. Here, $R_{switch}$ is an important parameter, which should not be selected as very high to avoid the problems posed by the guidance scheme in Algorithm \ref{alg:inputsummary}. However, it should not be too small to achieve an allowable bounded azimuth angle dynamics. So, a trade-off is needed in tuning $R_{switch}$, which could be achieved based on offline trials.

\begin{figure*}[b!]
\centering
\subcaptionbox{Stationary Target\label{fig:2stationarytraj}}{\includegraphics[width=0.2\textwidth,height=9.5cm,keepaspectratio]{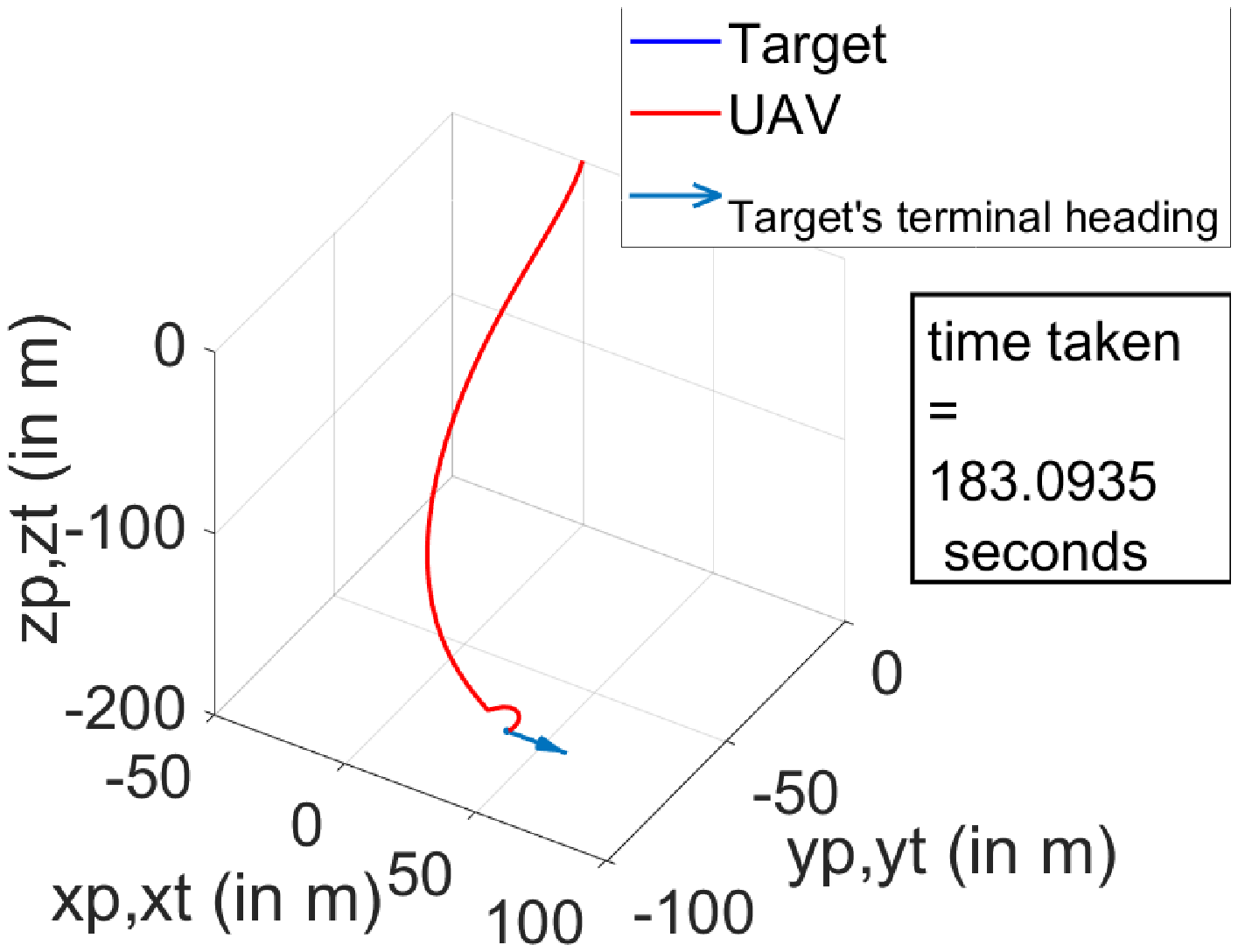}}%
\hfill 
\subcaptionbox{Non-Maneuvering Target\label{fig:2slinetraj}}{\includegraphics[width=0.2\textwidth,height=9.5cm,keepaspectratio]{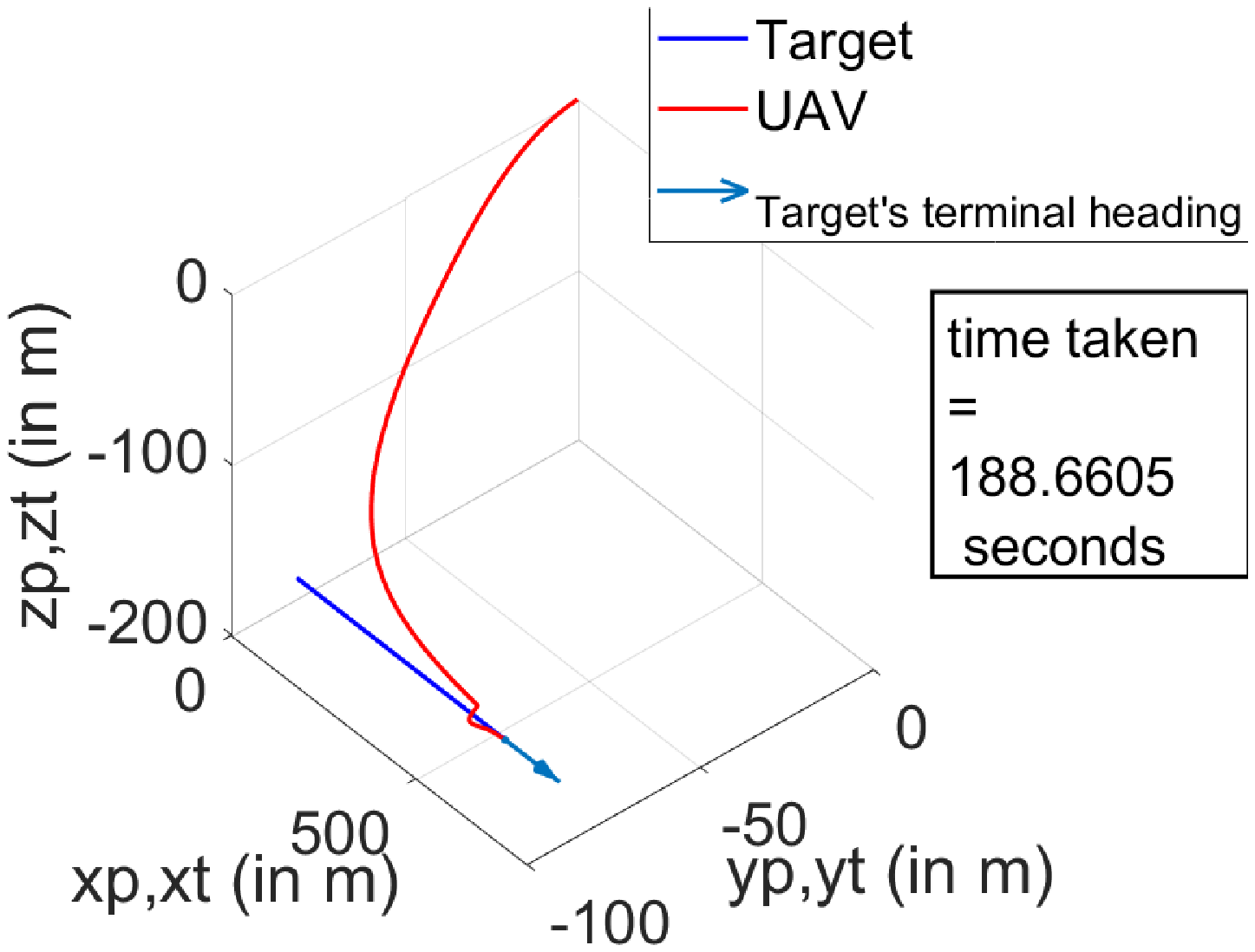}}%
\hfill 
\subcaptionbox{Constant Maneuvering Target\label{fig:2circulartraj}}{\includegraphics[width=0.2\textwidth,height=9.5cm,keepaspectratio]{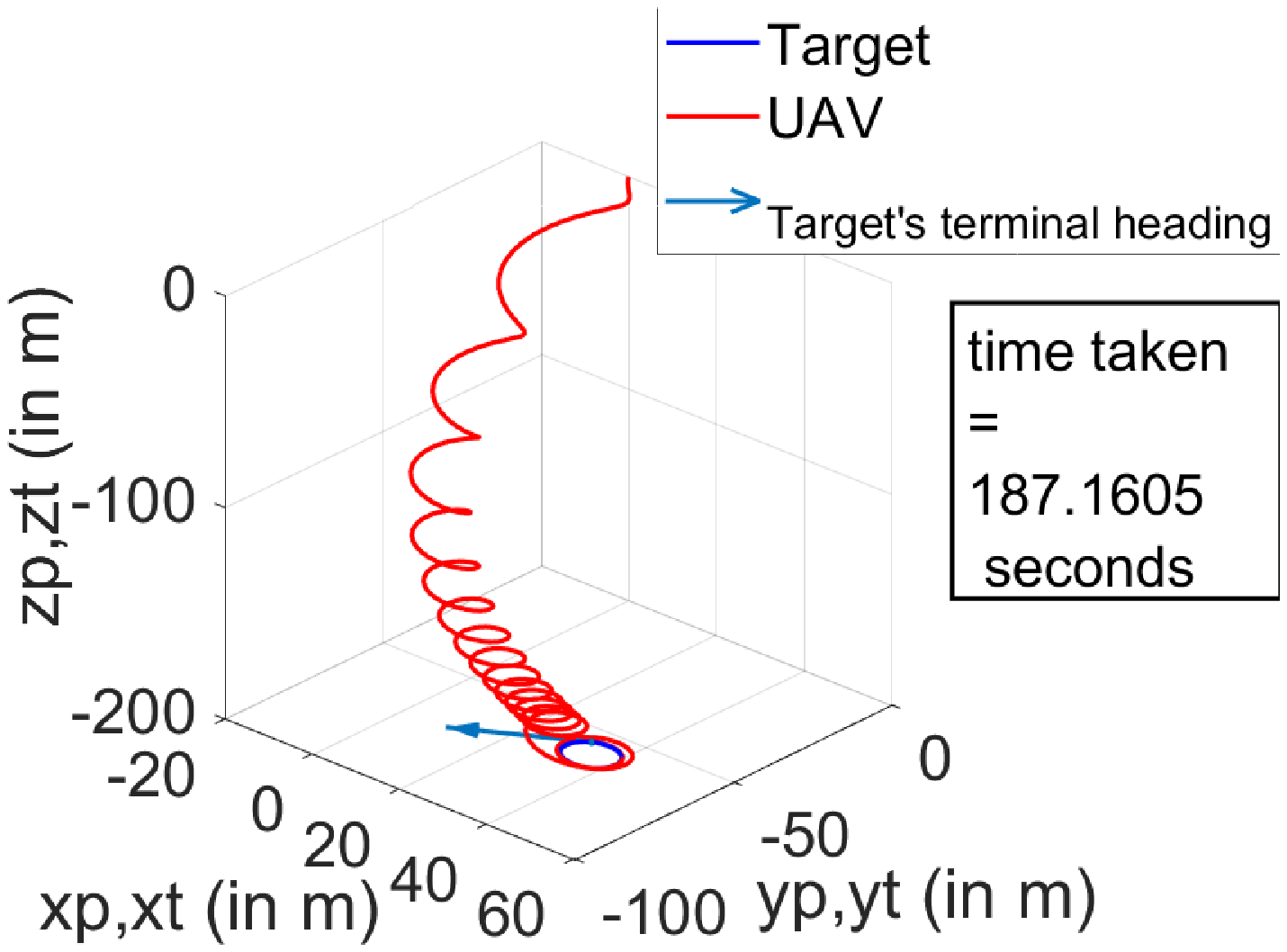}}%
\hfill 
\subcaptionbox{Sinusoidally Maneuvering Target\label{fig:2sinusoidaltraj}}{\includegraphics[width=0.2\textwidth,height=9.5cm,keepaspectratio]{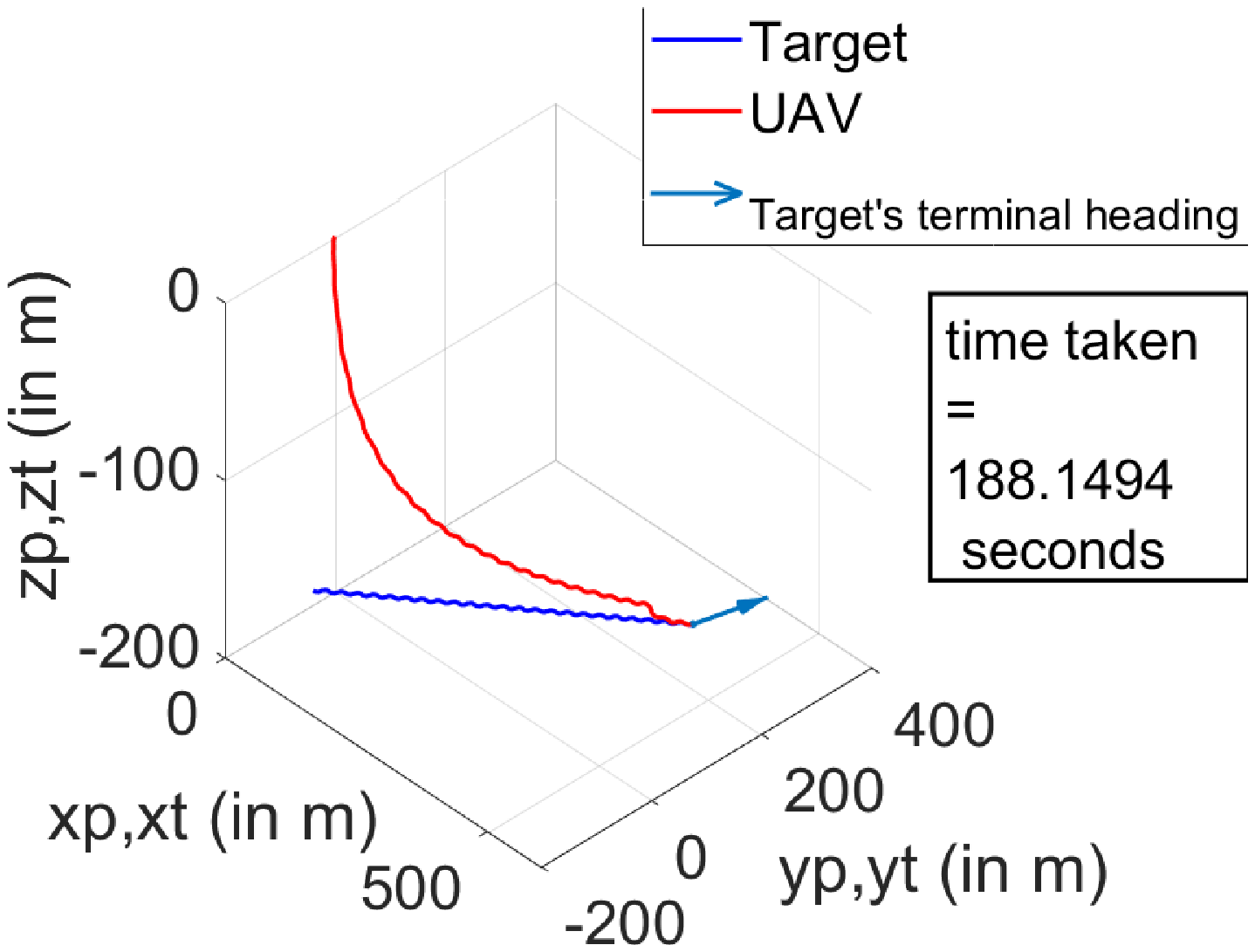}}%
\caption{Trajectory plots for UAV and target}
\label{fig:2trajs}
\end{figure*}

\begin{figure*}[b!]
\centering
\subcaptionbox{Stationary Target\label{fig:2stationarydist}}{\includegraphics[width=.2\textwidth,height=9.5cm,keepaspectratio,trim={0.5cm 0.0cm 0.0cm .08cm}]{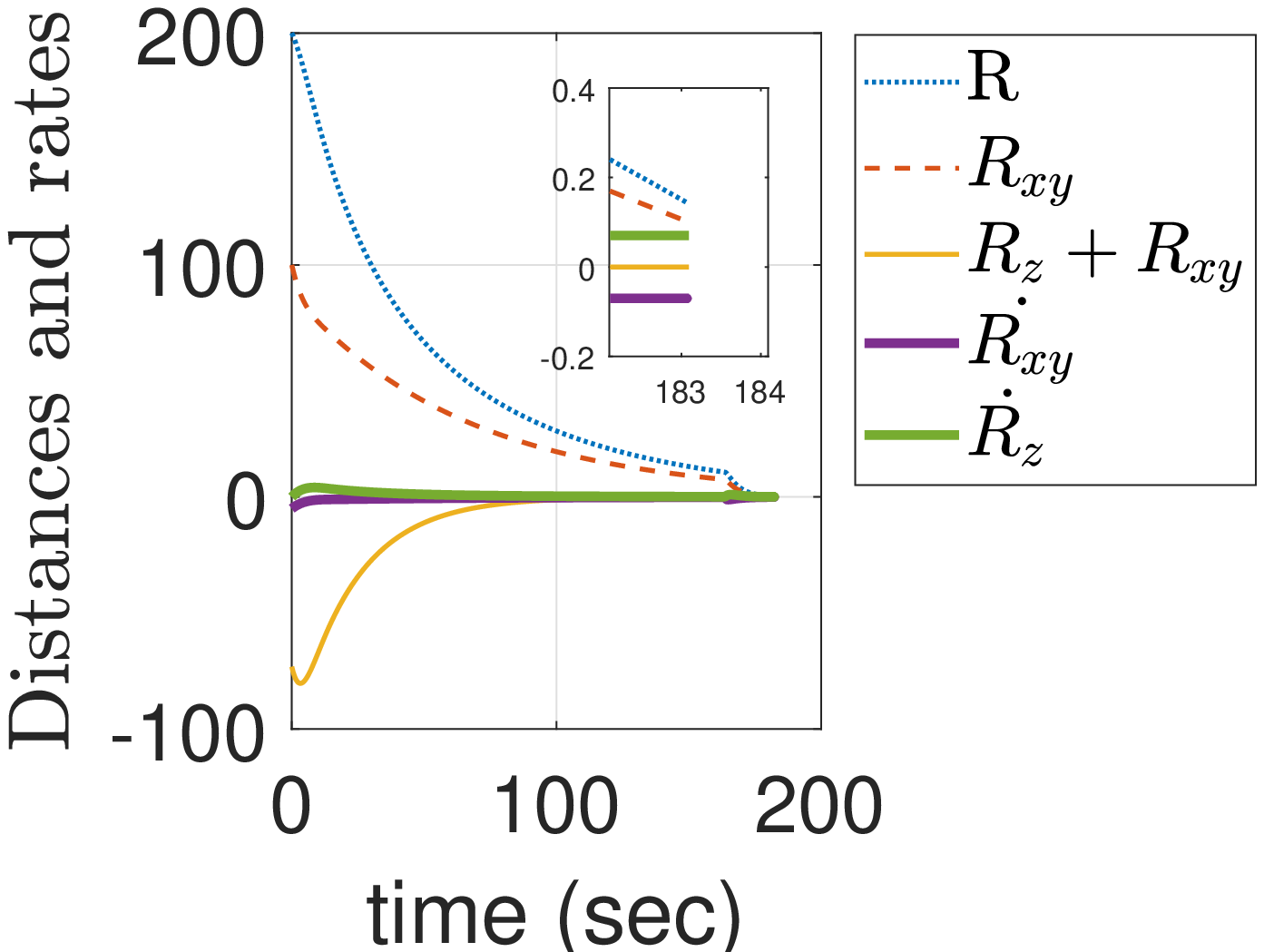}}%
\hfill 
\subcaptionbox{Non-Maneuvering Target\label{fig:2slinedist}}{\includegraphics[width=.2\textwidth,height=9.5cm,keepaspectratio,trim={0.5cm 0.0cm 0.0cm .08cm}]{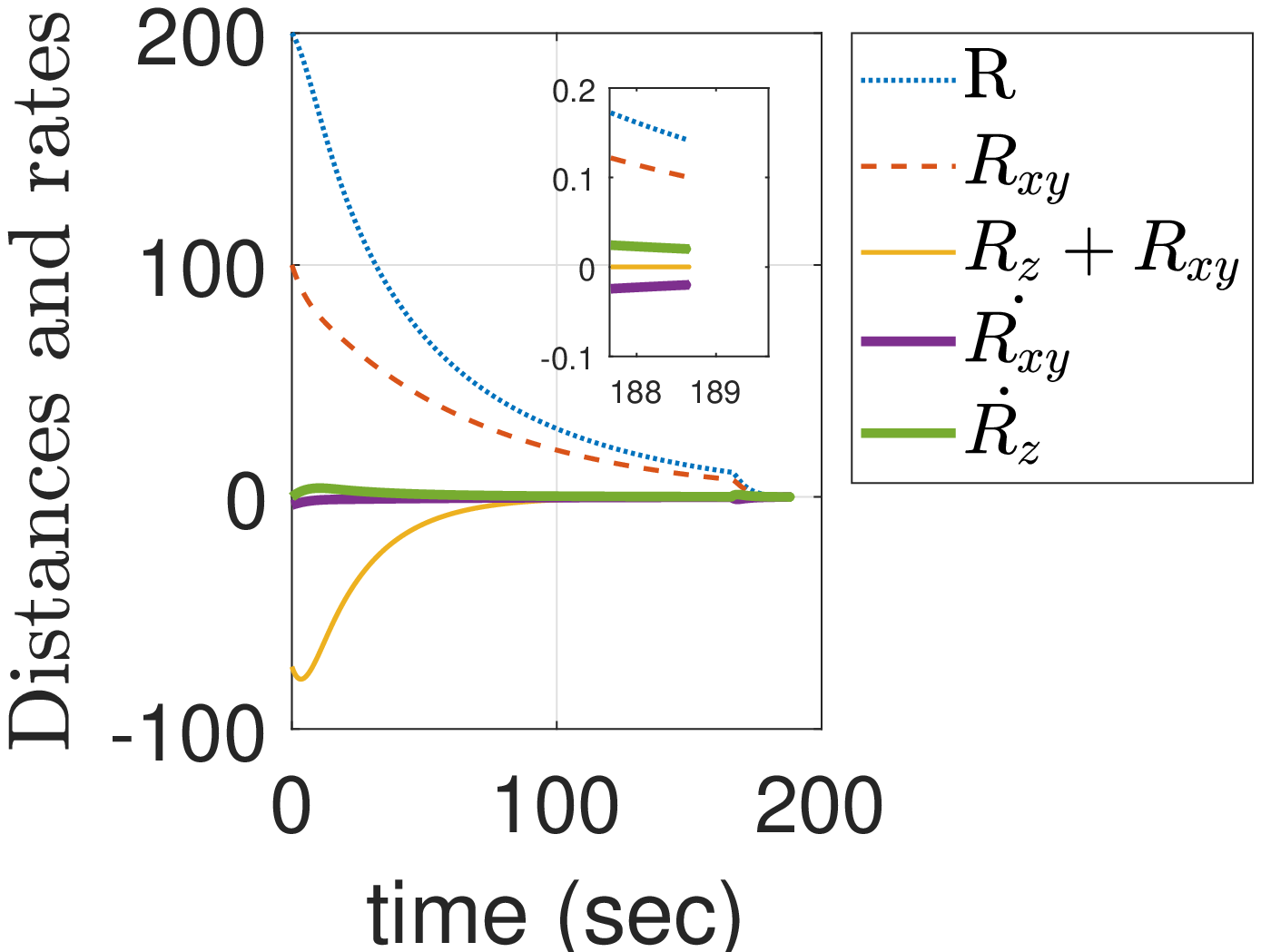}}%
\hfill 
\subcaptionbox{Constant Maneuvering Target\label{fig:2circulardist}}{\includegraphics[width=.2\textwidth,height=9.5cm,keepaspectratio,trim={0.5cm 0.0cm 0.0cm .08cm}]{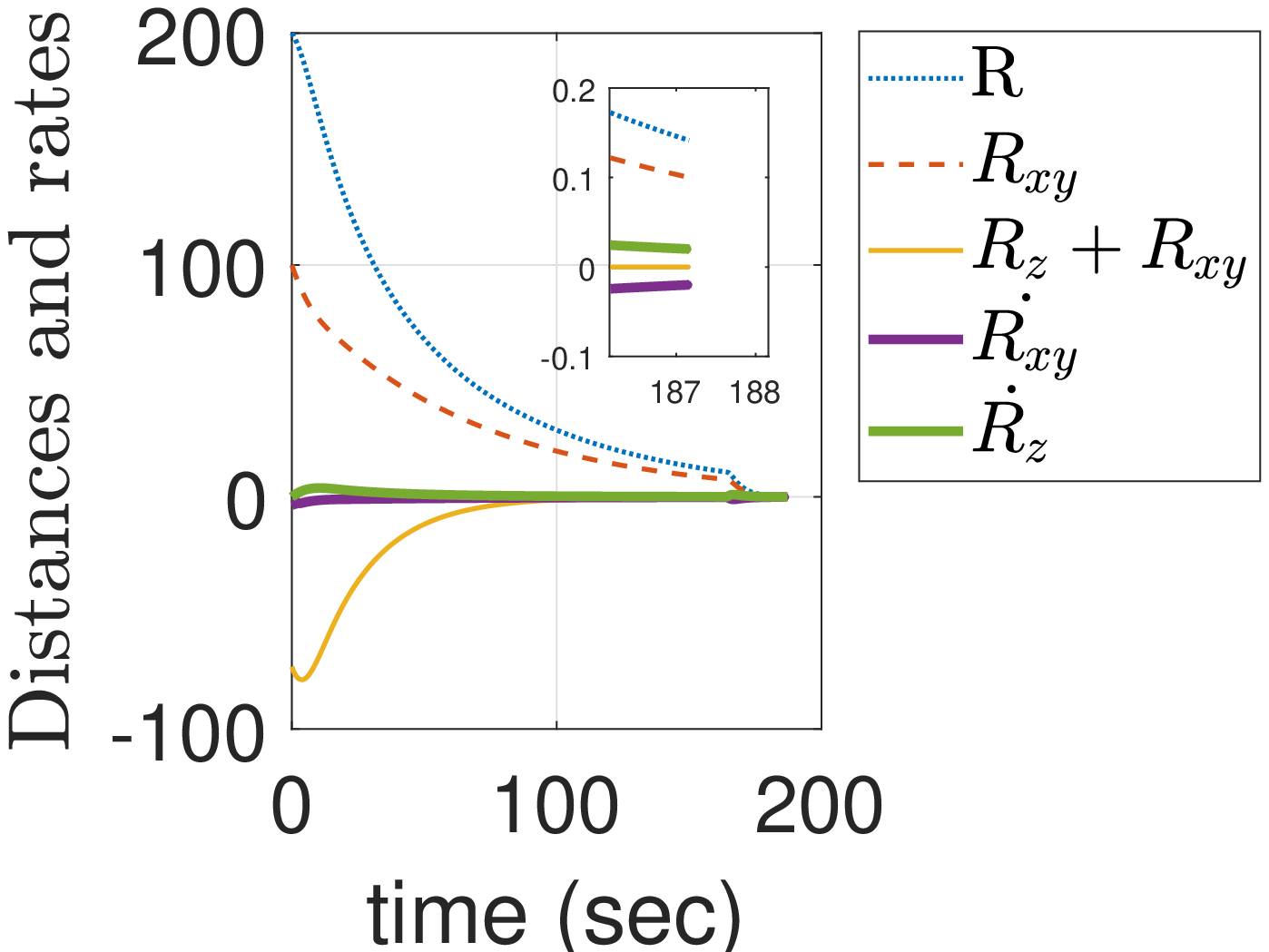}}%
\hfill 
\subcaptionbox{Sinusoidally Maneuvering Target\label{fig:2sinuosidaldist}}{\includegraphics[width=.2\textwidth,height=9.5cm,keepaspectratio,trim={0.5cm 0.0cm 0.0cm .08cm}]{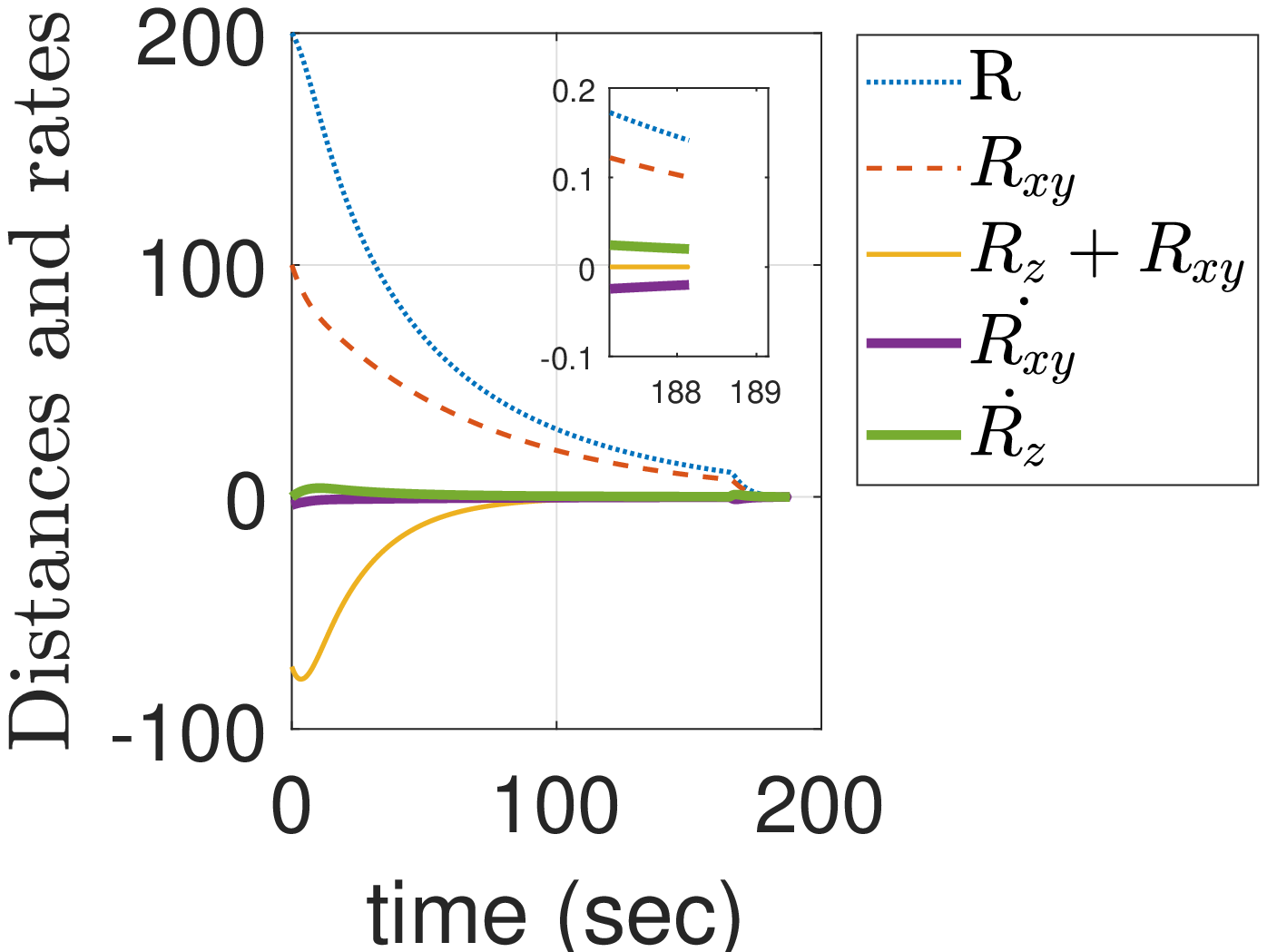}}%
\caption{Distance from target and it's projections on the xy-plane and along the z-axis}
\label{fig:2dists}
\end{figure*}

\begin{figure*}[b!]
\centering
\subcaptionbox{Stationary Target\label{fig:2stationaryinput}}{\includegraphics[width=.2\textwidth,height=9.5cm,keepaspectratio,trim={1cm 0.3cm 0.8cm .08cm}]{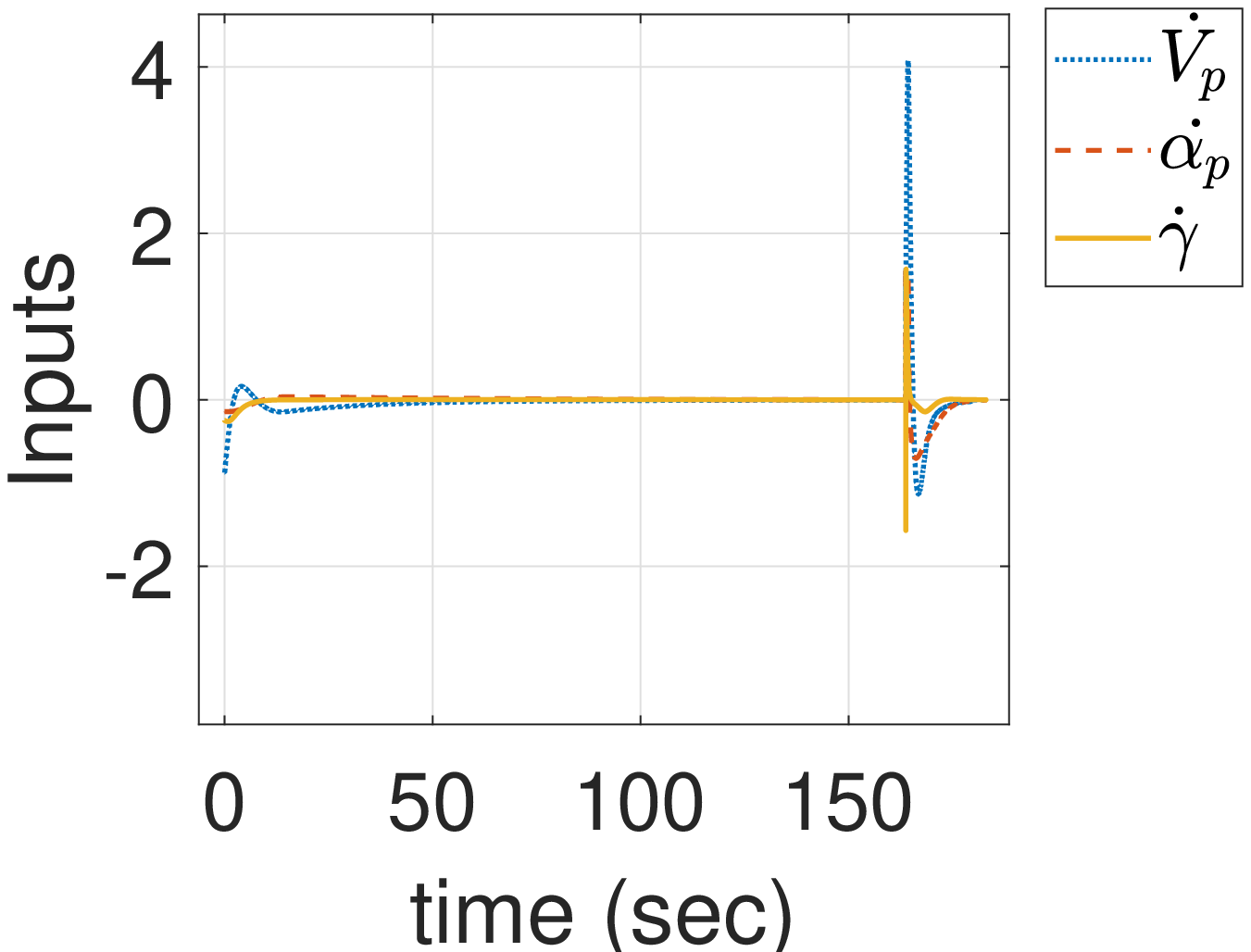}}%
\hfill 
\subcaptionbox{Non-Maneuvering Target\label{fig:2slineinput}}{\includegraphics[width=.2\textwidth,height=9.5cm,keepaspectratio,trim={1cm 0.3cm 0.8cm .08cm}]{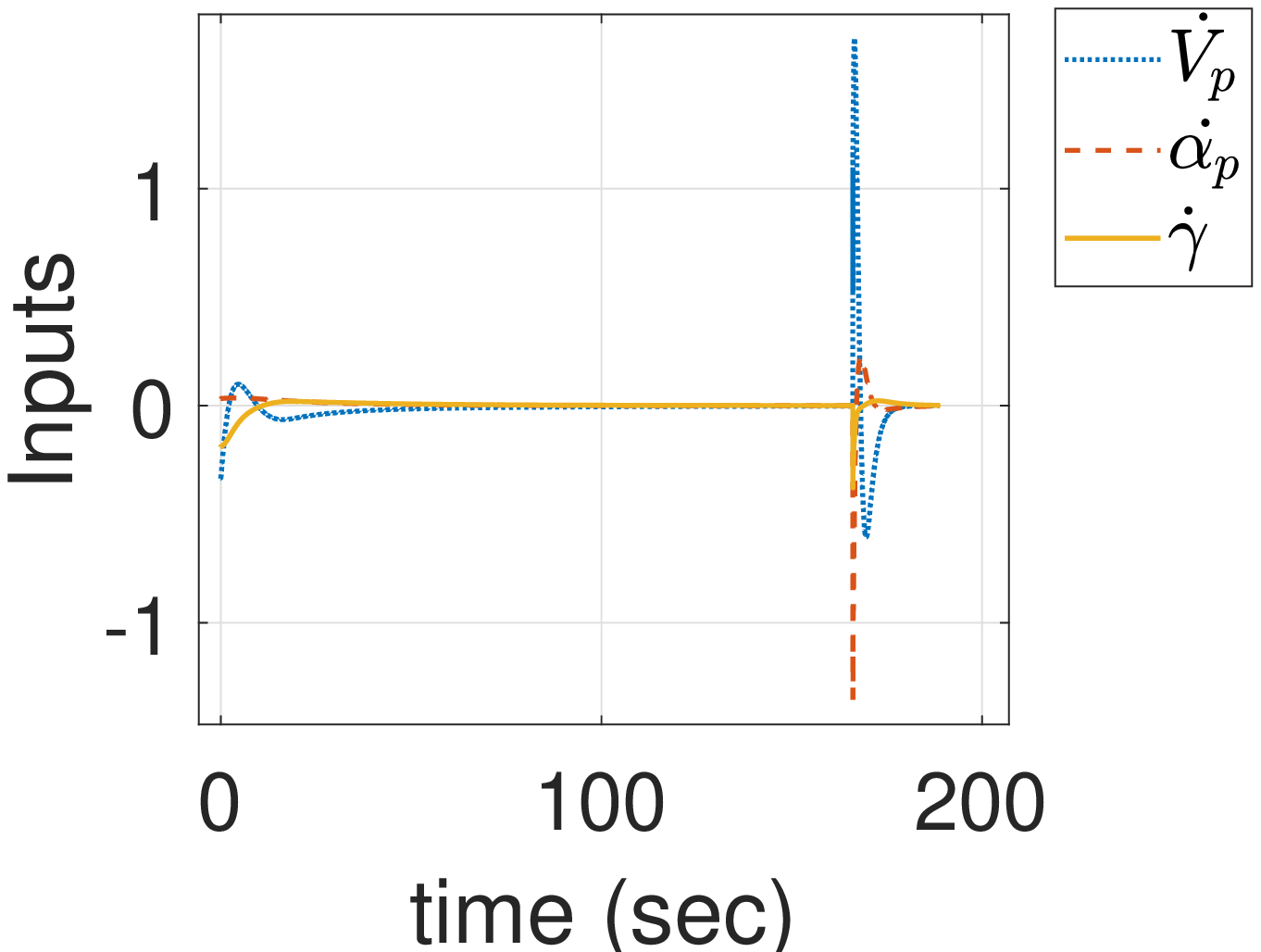}}%
\hfill 
\subcaptionbox{Constant Maneuvering Target\label{fig:2circularinput}}{\includegraphics[width=.2\textwidth,height=9.5cm,keepaspectratio,trim={1cm 0.3cm 0.8cm .08cm}]{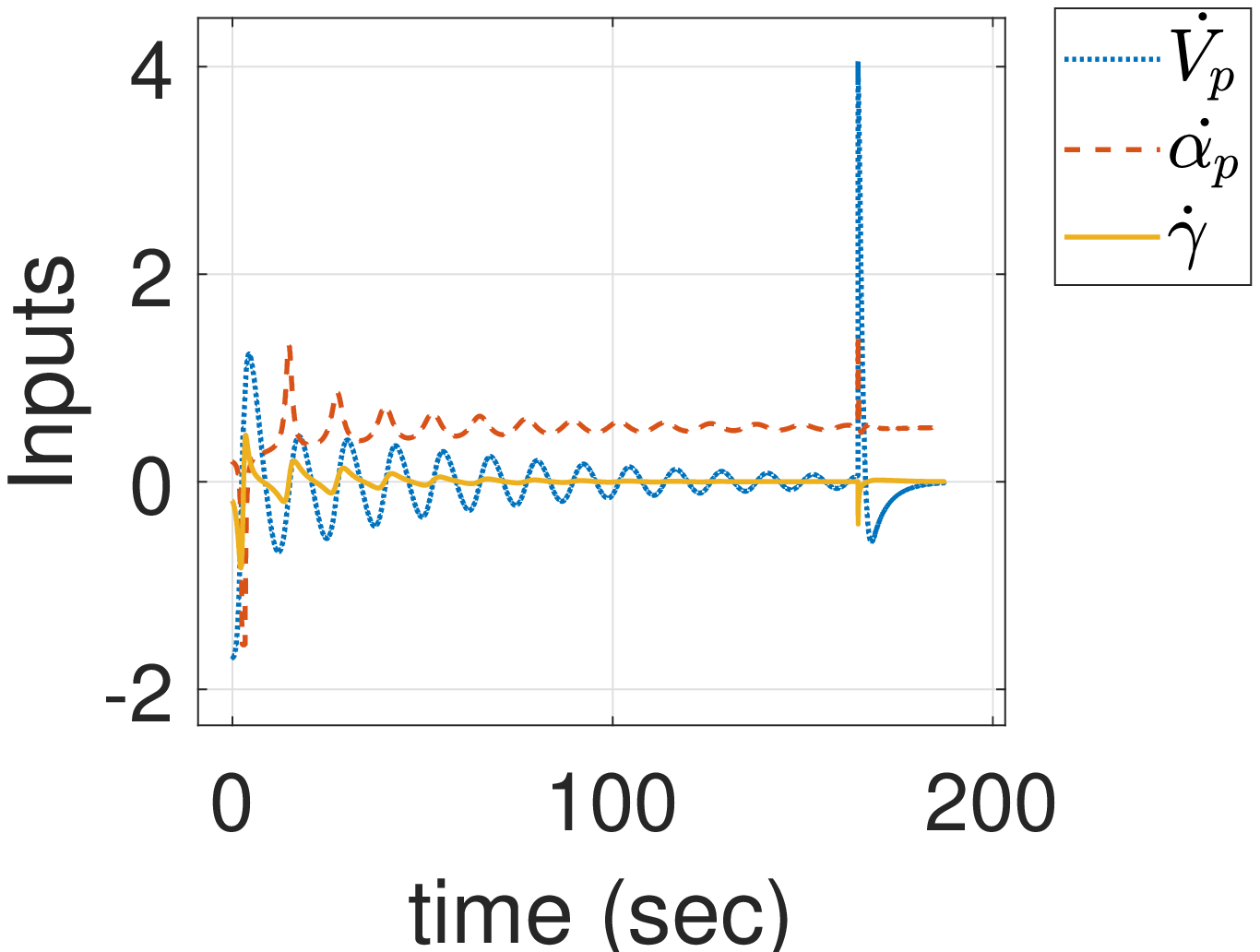}}%
\hfill
\subcaptionbox{Sinusoidally Maneuvering Target\label{fig:2sinusoidalinput}}{\includegraphics[width=.2\textwidth,height=9.5cm,keepaspectratio,trim={1cm 0.3cm 0.8cm .08cm}]{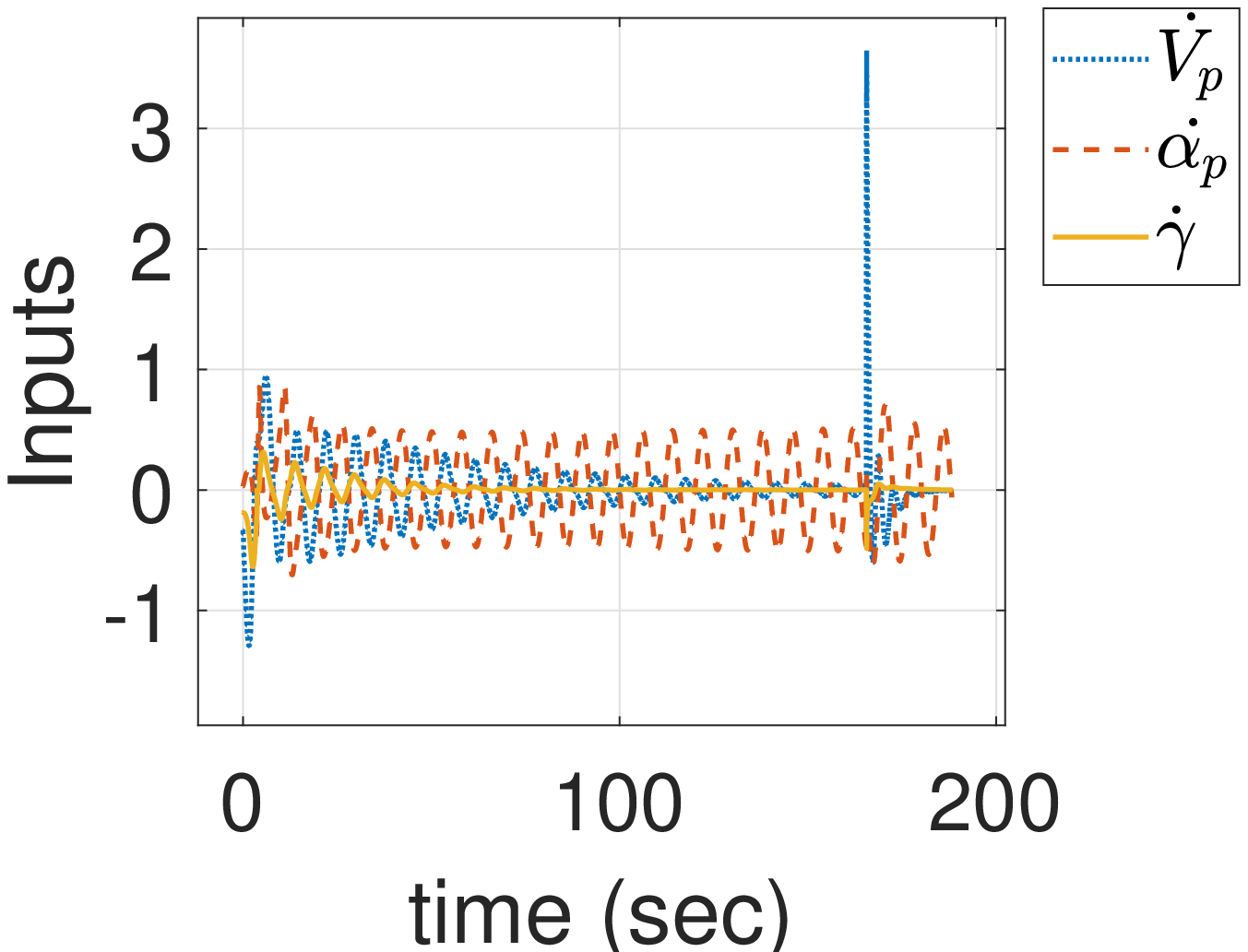}}%
\caption{Guidance Commands for UAV}
\label{fig:2inputs}
\end{figure*}

\begin{figure*}[b!]
\centering
\subcaptionbox{Stationary Target\label{fig:2stationaryheadspeed}}{\includegraphics[width=.2\textwidth,height=9.5cm,keepaspectratio,trim={1cm 0.3cm 0.0cm .08cm}]{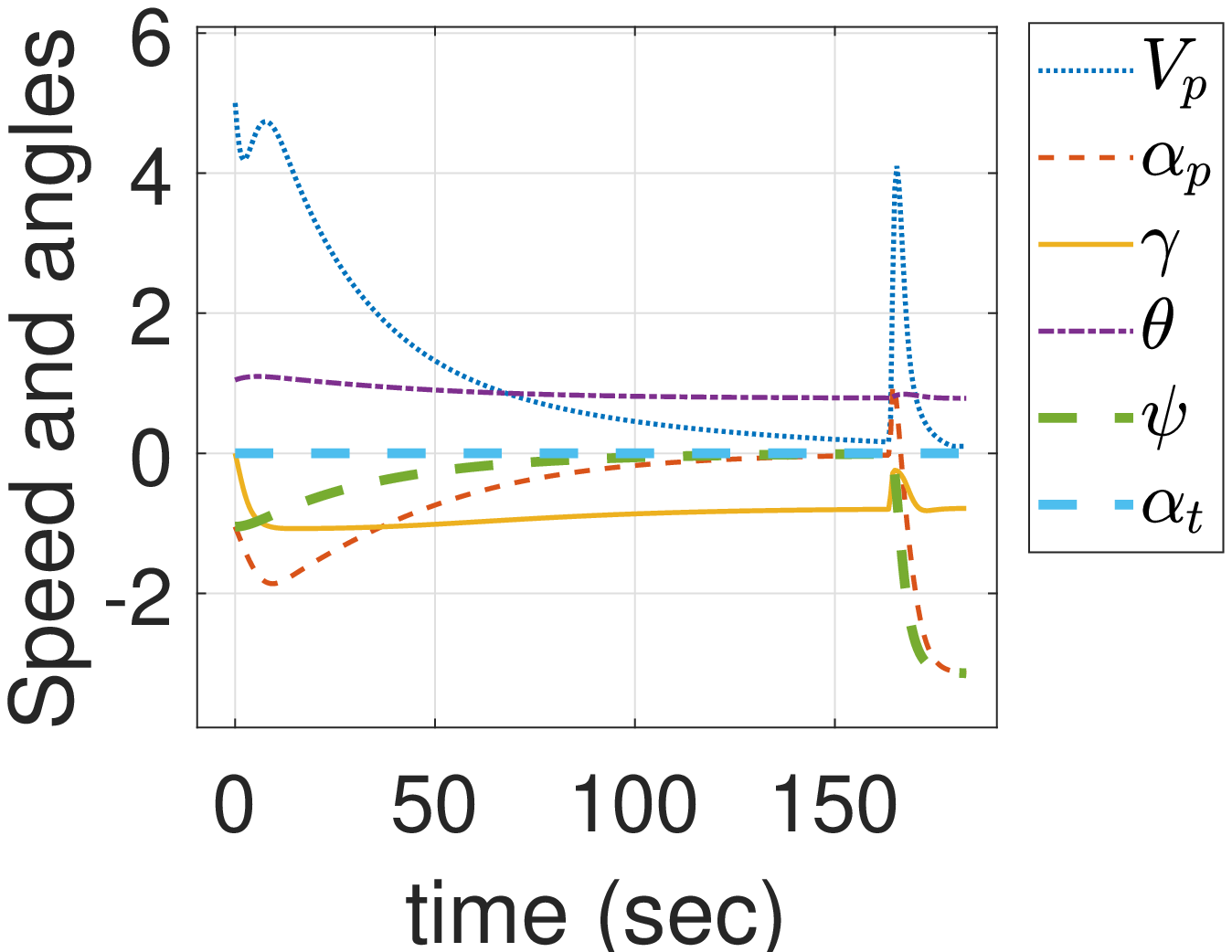}}%
\hfill 
\subcaptionbox{Non-Maneuvering Target\label{fig:2slineheadspeed}}{\includegraphics[width=.2\textwidth,height=9.5cm,keepaspectratio,trim={1cm 0.3cm 0.0cm .08cm}]{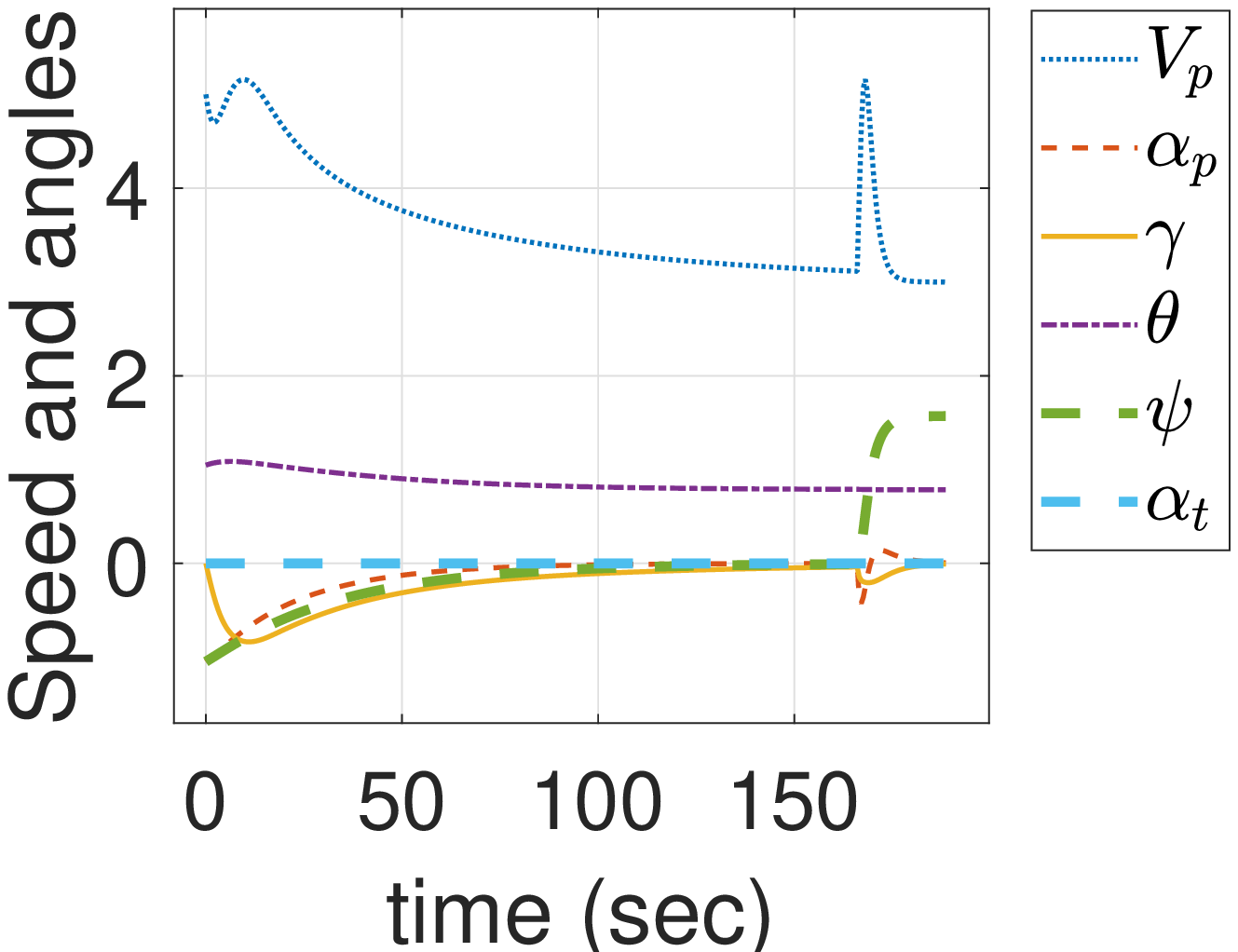}}%
\hfill 
\subcaptionbox{Constant Maneuvering Target\label{fig:2circularheadspeed}}{\includegraphics[width=.2\textwidth,height=9.5cm,keepaspectratio,trim={1cm 0.3cm 0.0cm .08cm}]{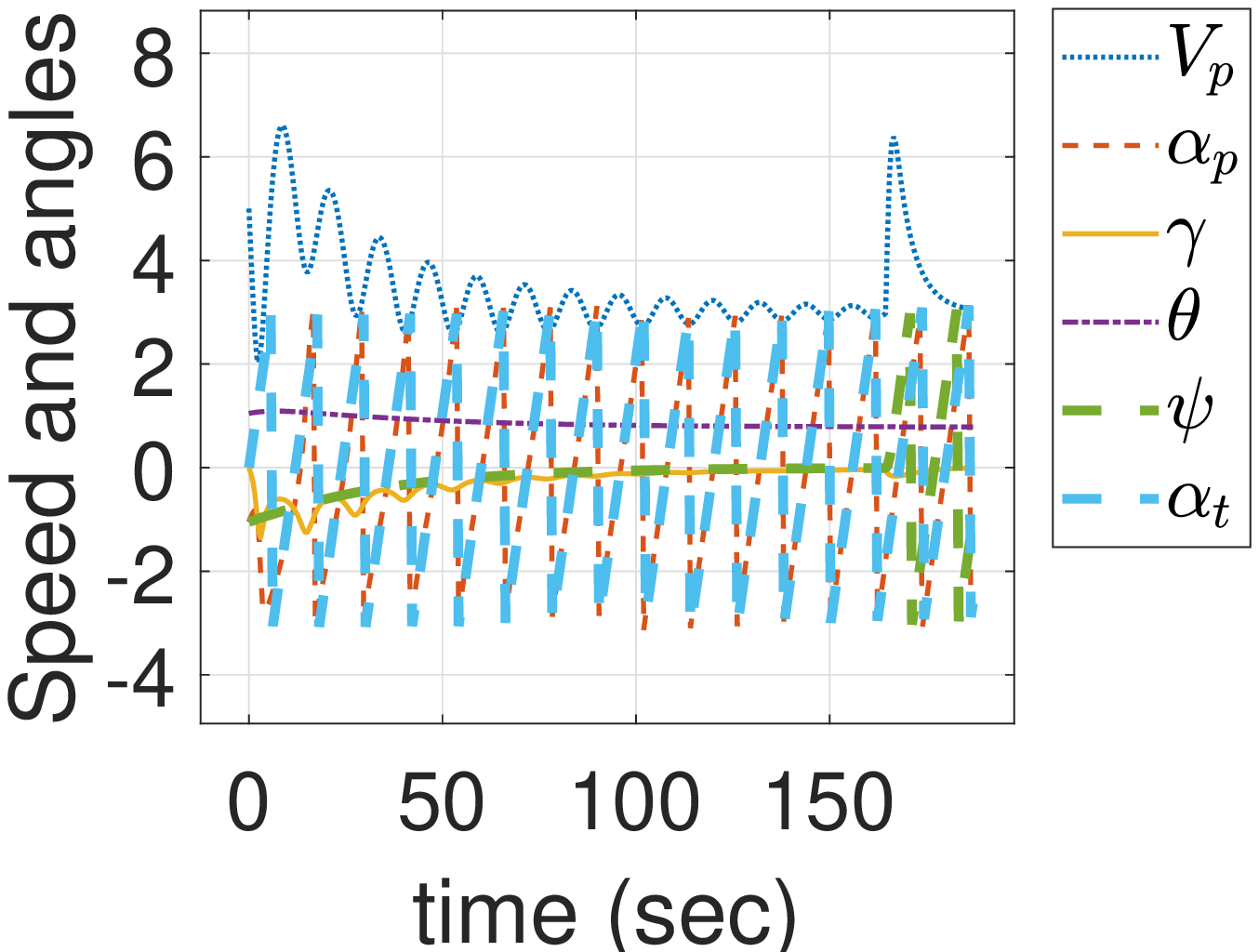}}%
\hfill 
\subcaptionbox{Sinusoidally Maneuvering Target\label{fig:2sinusoidalheadspeed}}{\includegraphics[width=.2\textwidth,height=9.5cm,keepaspectratio,trim={1cm 0.3cm 0.0cm .08cm}]{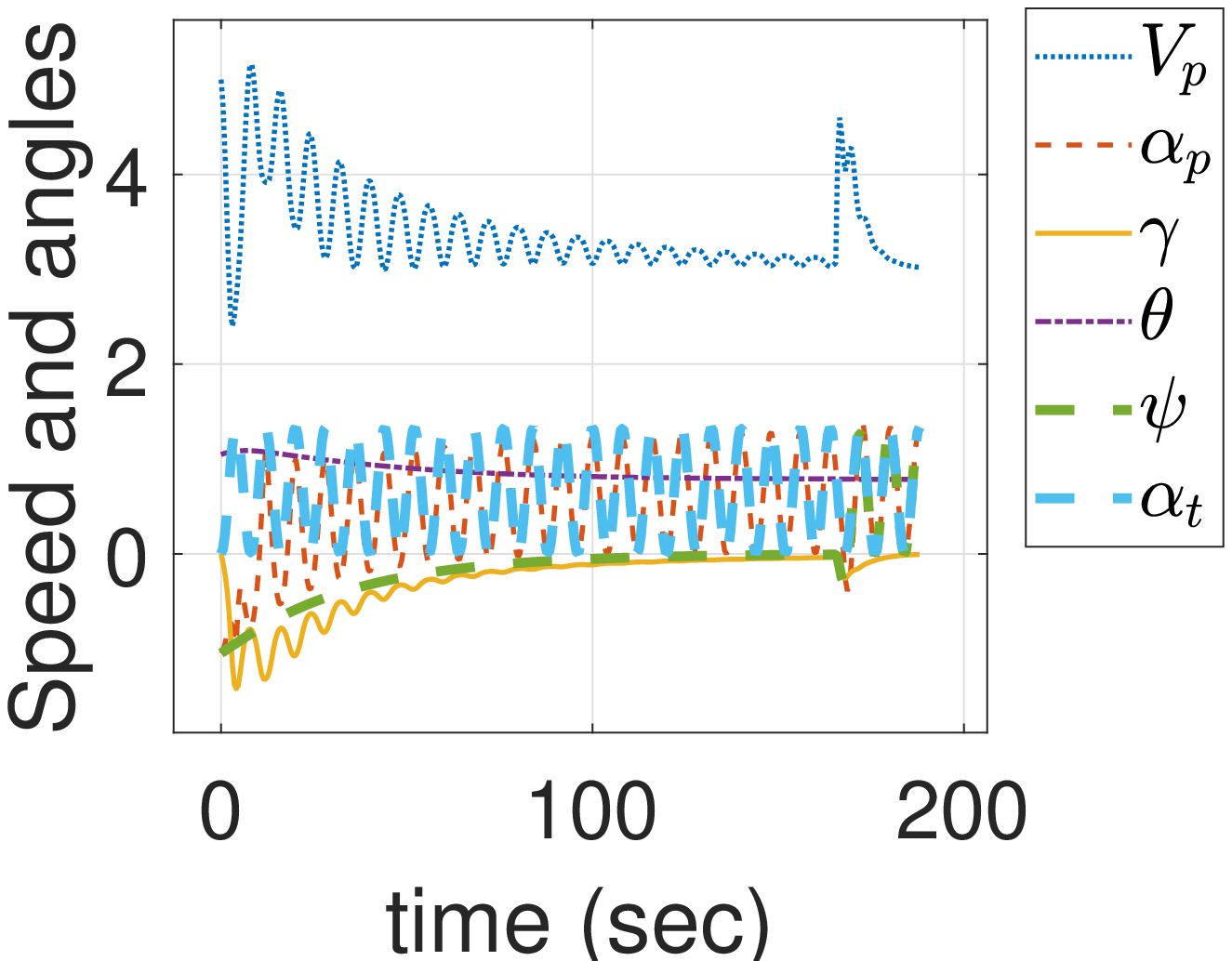}}%
\caption{UAV Speed and angles}
\label{fig:2headspeeds}
\end{figure*}

\subsection{Simulations}\label{subsec:2phasesims}
MATLAB simulations are presented in this section for the two-phase guidance scheme presented above in Section \ref{subsec:2phasesynthesis} assuming point mass models for UAV and UGV, similar to that in  Section \ref{sec:results}. The same four cases of interest as earlier, namely, stationary, nonmaneuvering, constant-maneuvering, and sinusoidally maneuvering targets are considered. The initial conditions of these simulations for each case are set to be the same as the respective case in Section \ref{sec:results}.
The parameters $k_a, k_b, k_c, k_1,$ and $k_2$ in the first phase, and the parameters $m,n$ are are also set to be the same as those in Section \ref{sec:results}. However, $k_3$ is different in the first phase itself. And, in the second phase these parameters, except $m$ and $n$, change w.r.t. first phase. The guidance parameters used in the two-phase guidance simulations are presented elaborately in Table \ref{tab:2params}. 
For the purpose of simulation, $R_{switch}$ is set as 7.5 m, and in the first phase, the desired azimuth angle ($\xi$) is set at zero rad. The heading rate of the target ($\dot{\alpha_t}$) and the desired approach angles ($\zeta_{des}, \theta_{des}$) for each case are same as those considered in Section \ref{sec:results}.
The results of simulations for the two-phase guidance scheme are given in Figs. \ref{fig:2trajs} to \ref{fig:2slideswithtime}.

\begin{center}
\begin{table}[h!]
\centering
\scalebox{0.9}[0.8]{
\begin{tabular}{
|p{0.08\textwidth}|p{0.02\textwidth}|p{0.06\textwidth}|p{0.035\textwidth}|p{0.035\textwidth}|p{0.035\textwidth}|p{0.035\textwidth}|p{0.035\textwidth}|}
\hline
 Target type & Pha- se & $k_a$ $=$ $1.5/R_{xy0}$ & $k_b$ $=$ $3k_a$ & $k_c$ $=$ $2k_a$ & $k_1$ & $k_2$ & $k_3$\\ 
\hline
Stationary &  1 & 0.0150 & 0.0450 & 0.0300 & 0.1395 & 0.1784 & 0.0363 \\
& 2 & 0.2001 & 0.6002 & 0.4001 & 0.3543 & 0.1256 & 0.3442\\
\hline
Non- & 1 & 0.0150 & 0.0450 & 0.0300 & 0.0914 & 0.1297 & 0.0169\\
maneuvering & 2 & 0.2000 & 0.6001 & 0.4001 & 0.3542 & 0.1231 & 0.2828\\ 
\hline
Constant- & 1 & 0.0150 & 0.0450 & 0.0300 & 0.0914 & 0.1297 & 0.0169\\
maneuvering & 2 & 0.2000 & 0.6000 & 0.4000 & 0.3542 & 0.1263 & 0.2564\\
\hline
Sinusoidally & 1 & 0.0150 & 0.0450 & 0.0300 & 0.0914 & 0.1297 & 0.0169\\
maneuvering & 2 & 0.2000 & 0.6001 & 0.4000 & 0.3542 & 0.1242 & 0.1869\\
\hline
\end{tabular}}
\caption{Guidance Parameters in two-phase scheme}
\label{tab:2params}
\end{table}
\end{center}

\begin{figure*}[t!]
\centering
\subcaptionbox{Stationary Target\label{fig:2stationarys}}{\includegraphics[width=.2\textwidth,height=9.5cm,keepaspectratio,trim={1cm 0.3cm 0.7cm .08cm}]{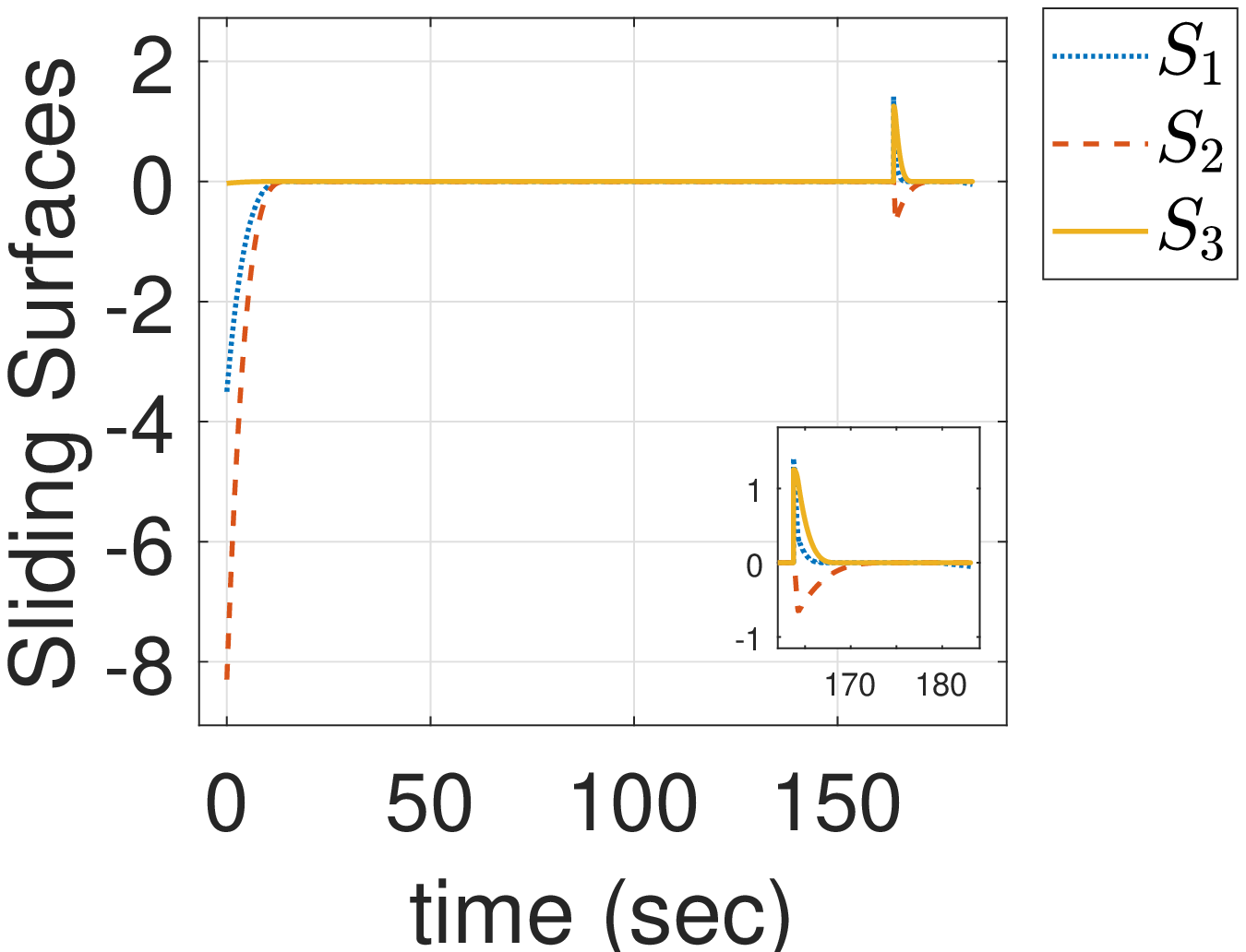}}%
\hfill 
\subcaptionbox{Non-Maneuvering Target\label{fig:2slines}}{\includegraphics[width=.2\textwidth,height=9.5cm,keepaspectratio,trim={1cm 0.3cm 0.7cm .08cm}]{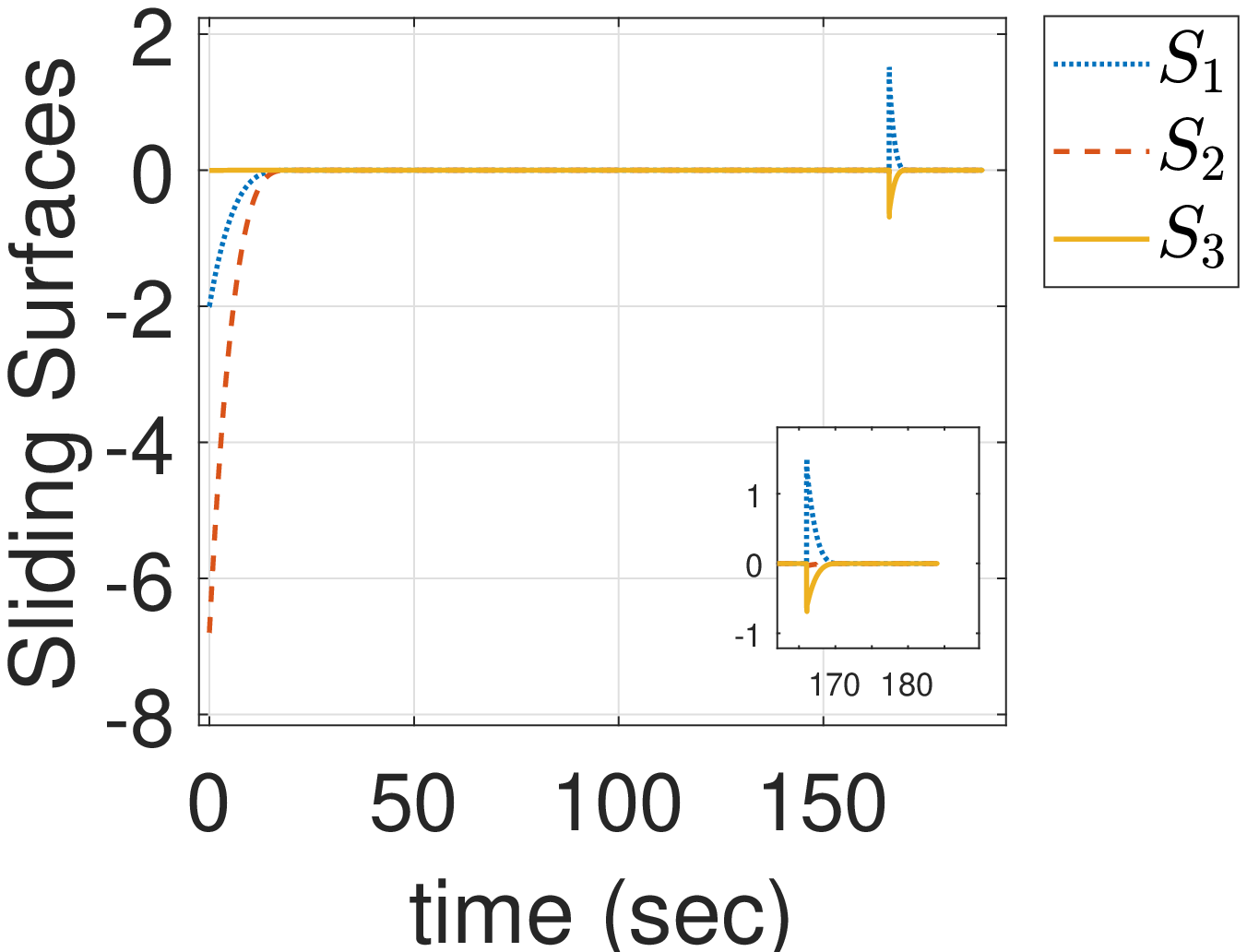}}%
\hfill 
\subcaptionbox{Constant Maneuvering Target\label{fig:2circulars}}{\includegraphics[width=.2\textwidth,height=9.5cm,keepaspectratio,trim={1cm 0.3cm 0.7cm .08cm}]{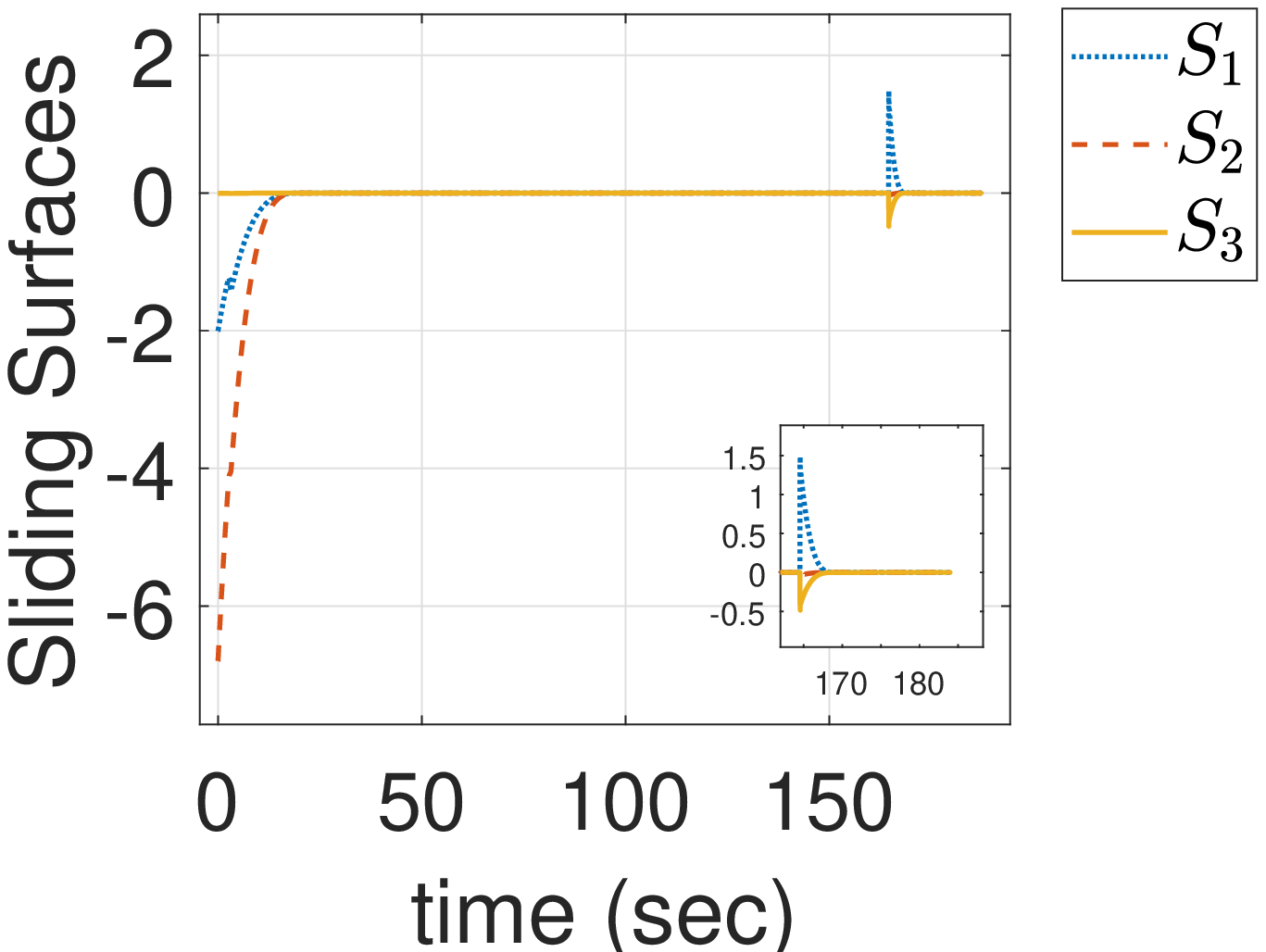}}%
\hfill 
\subcaptionbox{Sinusoidally Maneuvering Target\label{fig:2sinusoidals}}{\includegraphics[width=.2\textwidth,height=9.5cm,keepaspectratio,trim={1cm 0.3cm 0.7cm .08cm}]{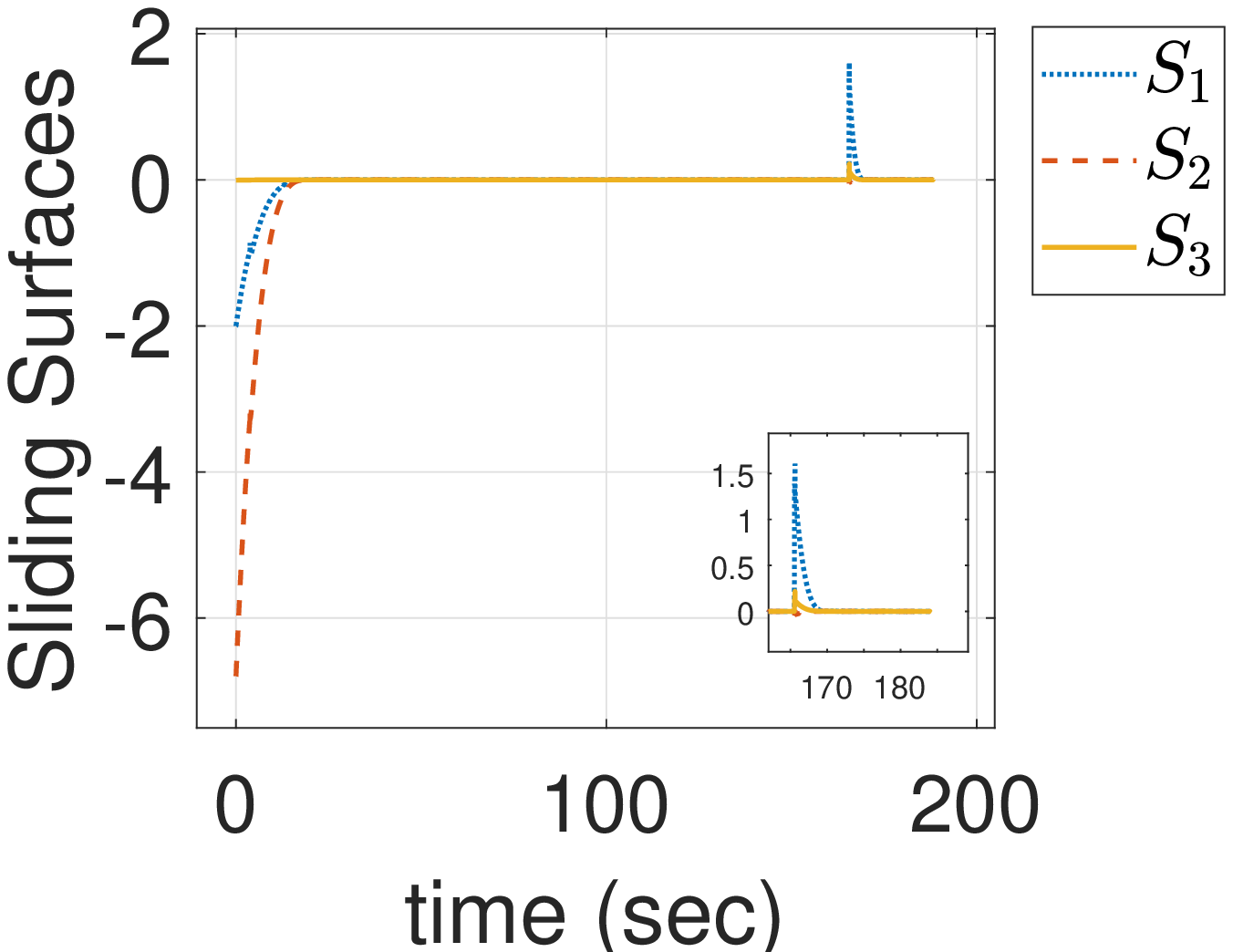}}%
\caption{Sliding Variables}
\label{fig:2slideswithtime}
\end{figure*}


\subsection{Inferences}
From Figs. \ref{fig:2trajs} and \ref{fig:2headspeeds}, it can be seen that as per the designed two-phase guidance scheme above, the UAV first approaches the target at an LOS angle $\psi =0$, and when $R_{xy}<7.5$ m, moves toward the desired azimuth angle relative to the target's heading angle ($\psi_{des} = \alpha_t+\zeta_{des}$) in all four cases presented. Fig. \ref{fig:2inputs} depicts the landing guidance commands of the two-phase scheme presented above. Clearly, in the initial portion of landing also the guidance commands are much smaller in Fig. \ref{fig:2inputs} compared to those in Fig. \ref{fig:inputs} justifying the motivation of the two-phase scheme. As the sliding variable $S_3$ changes (see Eq. \eqref{eq:Sdefn2ii}) in the second phase as $R_{xy}$ becomes equal to $R_{switch}=7.5$m, guidance commands shoot up in a bounded manner at the switching instant and then again decay within a short time. 
Overall, the guidance commands $\dot{V_p}, \dot{\alpha_p}, \dot{\gamma}$ are observed in Fig. \ref{fig:2inputs} to be bounded by $4 m/s^2$, $\pi/2$ rad/s, $\pi/4$ rad/s, which are well within guidance command constraint bounds unlike the inputs depicted in Fig. \ref{fig:inputs} from the one-phase scheme presented in Section \ref{sec:guidancelaw}.
This also results in desired UAV speed to be quite low, less than 7 m/s in all four cases considered as can be noticed in Fig. \ref{fig:2headspeeds}. This also holds true even for large initial distances from the target. Besides these, it can also be observed from Fig. \ref{fig:2dists} that the times taken for landing in all cases by the two-phase guidance scheme are significantly lesser than that for the respective cases by single-phase scheme (shown in Fig. \ref{fig:trajs}) from Section \ref{sec:results}. This is primarily because of re-selection of tuning parameters ($k_a ,k_b ,k_c ,k_1 ,k_2 ,k_3$) in the second phase that increases the values of $|\dot{R_{xy}}|$ and $|\dot{R_z}|$ at each instant as compared to the one-phase scheme in Section \ref{sec:results}. Thus, it can be inferred that the two-phase algorithm proposed in this section performs better than the one-phase guidance scheme proposed in Section \ref{sec:guidancelaw} in several aspects, which is more pertinent for targets with high maneuvering capabilities.

\section{Software-In-The-Loop Simulations}\label{sec:SITL}
Motivated by the satisfactory performance of the two-phase guidance scheme on point mass model of UAV and UGV (Target), in this section, more realistic testbeds are considered for software-in-the-loop (SITL) simulation. It involves a set-up containing physics engines related to the IRIS quadcopter flight dynamics and the Husky ground rover vehicle dynamics models \cite{irishusky}, which are pre-included in the Gazebo simulator software, the Ardupilot desktop-executable autopilot software, and the robot operating system (ROS). Here, the guidance commands for the UAV ($\dot{V_p}, \dot{\alpha_p}, \dot{\gamma}$) are integrated to obtain velocity commands ($V_p , \alpha_p,$ and $\gamma$) at each instant, which are then fed to the UAV's ardupilot at a frequency of 60 hz. The UAV's autopilot in turn tracks these velocity commands in the SITL simulation. The simulations are terminated when the UAV reaches a threshold range of $R_{xy} = 0.3$ m and $R_{z}= 0.3$ m from the target.

Also, recall that in section \ref{subsec:2phasesims}, the guidance parameters were tuned such that high UAV speeds were avoided in the initial phase, when $R_{xy}$ was large. However, as $R_{xy}$ decreases, $|\dot{R_{xy}}|$ also decreases significantly as it varies linearly with $R_{xy}$ on the sliding surface. Thus, in order to maintain a consistent speed range and avoid very high or very low speeds for the UAV for most of the mission time, a multi-stage guidance scheme is considered, in which the parameters of the basic sliding mode-based guidance scheme (from Section \ref{sec:twophase}) are re-tuned every time $R_{xy}$ becomes half of its initial value or increases by 5 m from its initial value in either phase. Here, it should be noted that though the parameters are re-tuned in different stages, the guidance scheme is still termed as two-phased, where the sliding variables represented by $S$ are defined to be the same as before and $S_3$ is switched when $R_{xy} < R_{switch}$ for terminal-azimuth-angle control, in the same way as described in Section  \ref{subsec:2phasesynthesis}.

Apart from the considerations of tuning different guidance parameters given in Section \ref{subsec:ParameterSelection}, for tuning of the parameters $k_a, k_b$ and $k_c$ the followings are additionally considered in the SITL simulation. When the initial errors in LOS angles ($\psi_0 - \psi_{des}$ and $\theta_0-\theta_{des}$) are large, the magnitude of desired $\dot{R_{xy}}$ shouldn't be too high or too low. Subsequently, when $R_{xy}$ is sufficiently small, the rates of convergence of $|\psi - \psi_{des}|$ and $|\theta-\theta_{des}|$ should be higher than those for larger $R_{xy}$ and same $|\psi - \psi_{des}|$ and $|\theta-\theta_{des}|$.

\subsection{Results}
Similar to Sections \ref{sec:results} and \ref{sec:twophase}, four cases of interest are considered, namely, stationary, nonmaneuvering, constant-maneuvering, and sinusoidally maneuvering targets. Here, in all of the phases where $R_{xy} \geq R_{switch}= 7.5$ m, where $S$ is as defined in Eq. \eqref{eq:Sdefn2i}, the desired LOS angle in 2D ($\psi_{des} =\xi$) is set as zero rad. The target is initially placed at a distance ($R_0$) of $20\sqrt{2}$ m from the UAV, with $R_{xy0} = 20$ m and $R_{z0} = -20$ m, along with the initial azimuth angle as $\psi_0 = 0$ rad. The UAV initiates its motion with $V_{p0}=5$ m/s and $\alpha_{p0} = \pi$ rad in all cases. The target's speed is set at $V_{t}= 3$ m/s. 
The initial heading angle, the heading rate of the target and the desired approach angles in all four cases are presented in Table \ref{tab:Ginitials}.

\begin{center}
\begin{table}[h!]
\centering
\begin{tabular}{ |p{0.11\textwidth}|p{0.025\textwidth}|p{0.07\textwidth}|p{0.06\textwidth}|p{0.035\textwidth}|}
\hline
Target type & $\alpha_{t0}$ (rad) & $\dot{\alpha_{t}}(t)$ (rad/s) & $\zeta_{des} =$ \scalebox{.7}[1.0]{($\psi_{des} - \alpha_t$)} (rad) & $\theta_{des}$ (rad)\\ 
\hline
Stationary & 0 & 0 & $\pi$ & $\pi/4$ \\ 
\hline
Nonmaneuvering & $\pi$/4 & 0 & $\pi/2$ & $\pi/4$ \\ 
\hline
Constant-maneuvering & 0 & $\pi/12$ & $\pi/2$ & $\pi/4$ \\ 
\hline
Sinusoidally maneveuvering & 0 & $(\pi/6)$ $\sin(\pi t/4)$ & $-3\pi/4$ & $\pi/4$ \\  
\hline
\end{tabular}
\caption{Cases considered for SITL simulations}
\label{tab:Ginitials}
\end{table}
\end{center}

\begin{figure*}[h!]
\centering
\subcaptionbox{Stationary Target\label{fig:Gstationarytraj}}{\includegraphics[width=0.22\textwidth,height=9.5cm,keepaspectratio,trim={2.0cm 0.3cm 0.0cm .08cm}]{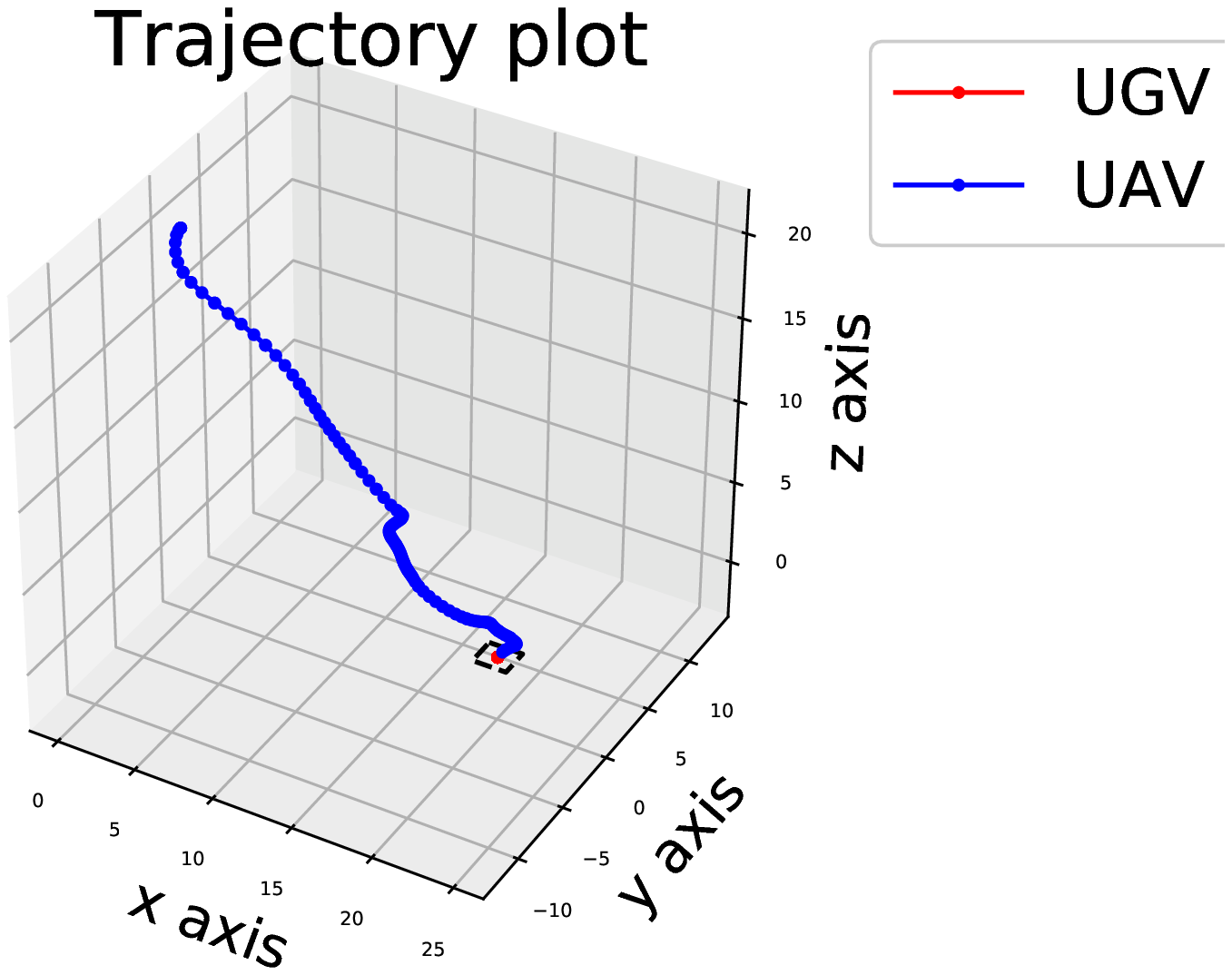}}%
\hfill 
\subcaptionbox{Non-Maneuvering Target\label{fig:Gslinetraj}}{\includegraphics[width=0.23\textwidth,height=9.5cm,keepaspectratio,trim={1.25cm 0.3cm 0.0cm .08cm}]{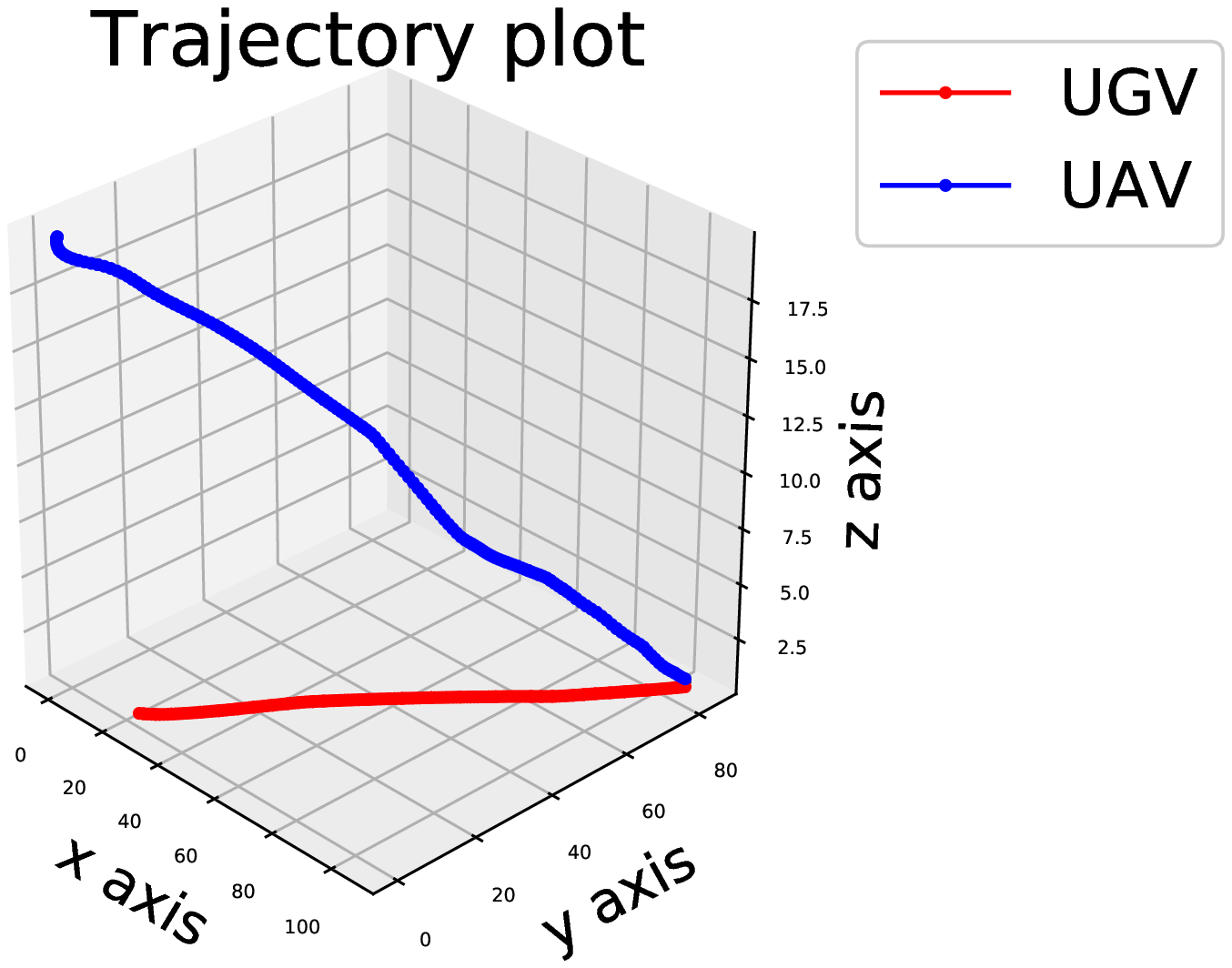}}%
\hfill 
\subcaptionbox{Constant Maneuvering Target\label{fig:Gcirculartraj}}{\includegraphics[width=0.225\textwidth,height=9.5cm,keepaspectratio,trim={0.9cm 0.6cm 0.0cm .08cm}]{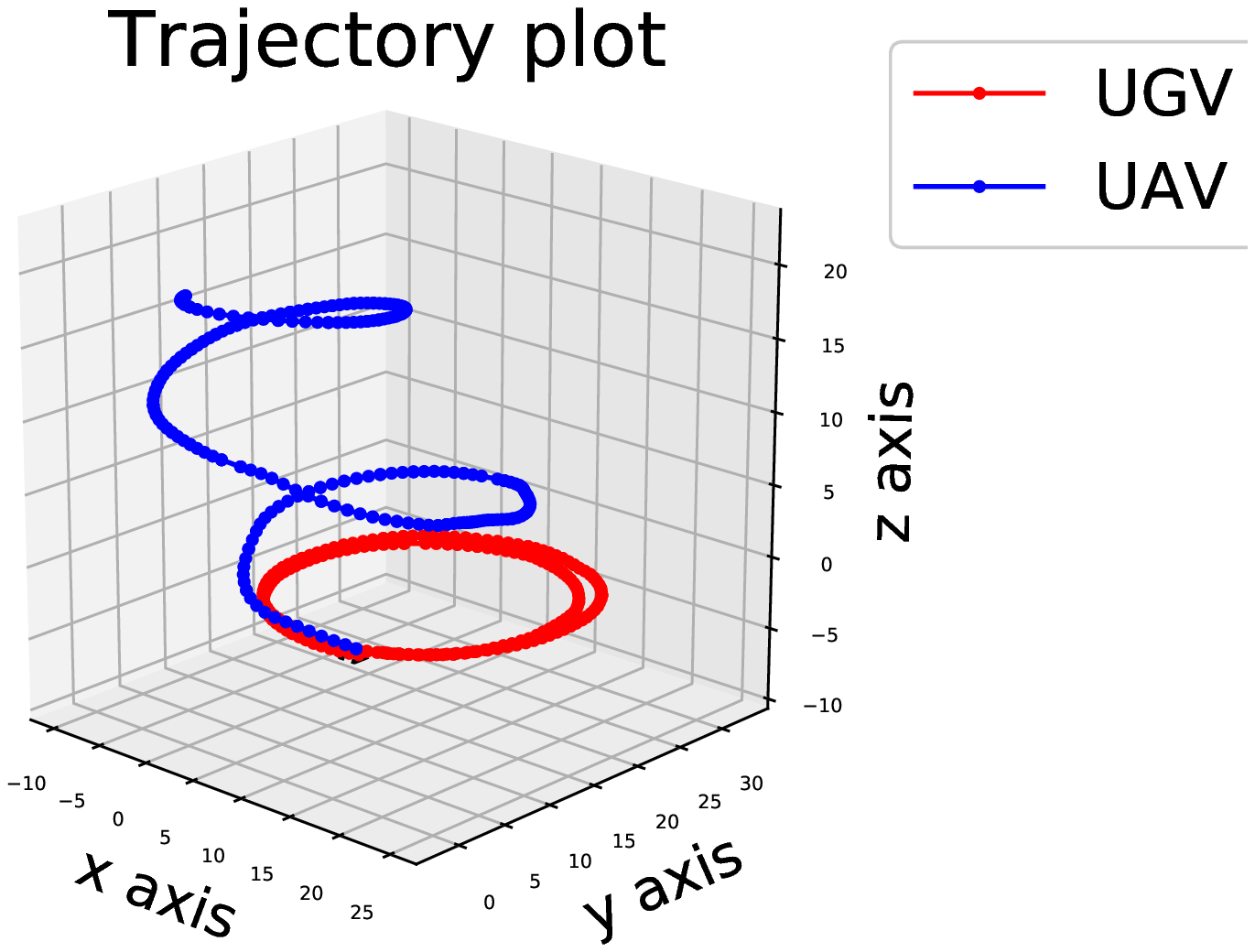}}%
\hfill 
\subcaptionbox{Sinusoidally Maneuvering Target\label{fig:Gsinusoidaltraj}}{\includegraphics[width=0.225\textwidth,height=9.5cm,keepaspectratio,trim={1.5cm 0.3cm 0.0cm .08cm}]{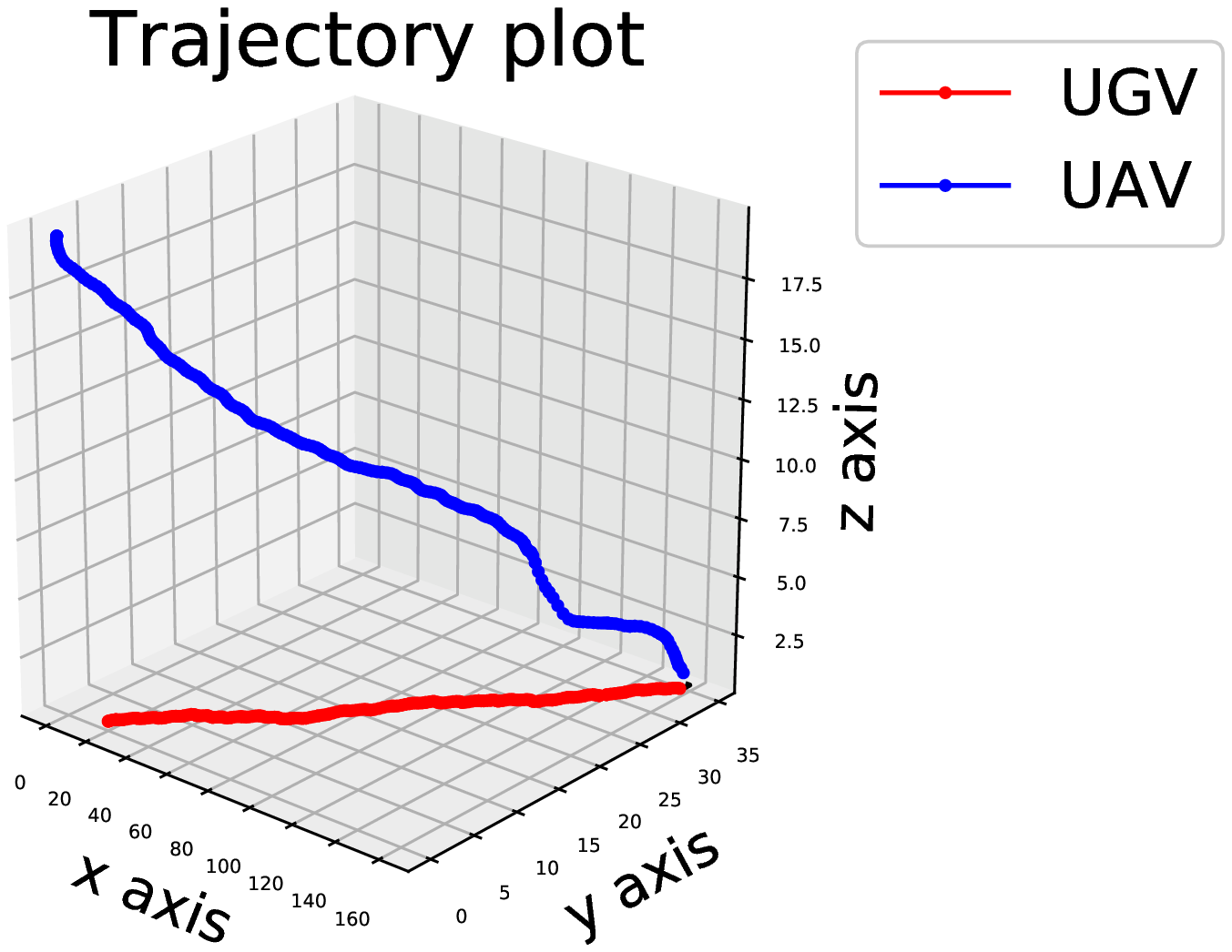}}%
\caption{Trajectory plots for UAV and target}
\label{fig:Gtrajs}
\end{figure*}
\begin{figure*}[h!]
\centering
\subcaptionbox{Stationary Target\label{fig:Gstationarydist}}{\includegraphics[width=.2\textwidth,height=9.5cm,keepaspectratio,trim={0.5cm 0.0cm 0.0cm .08cm}]{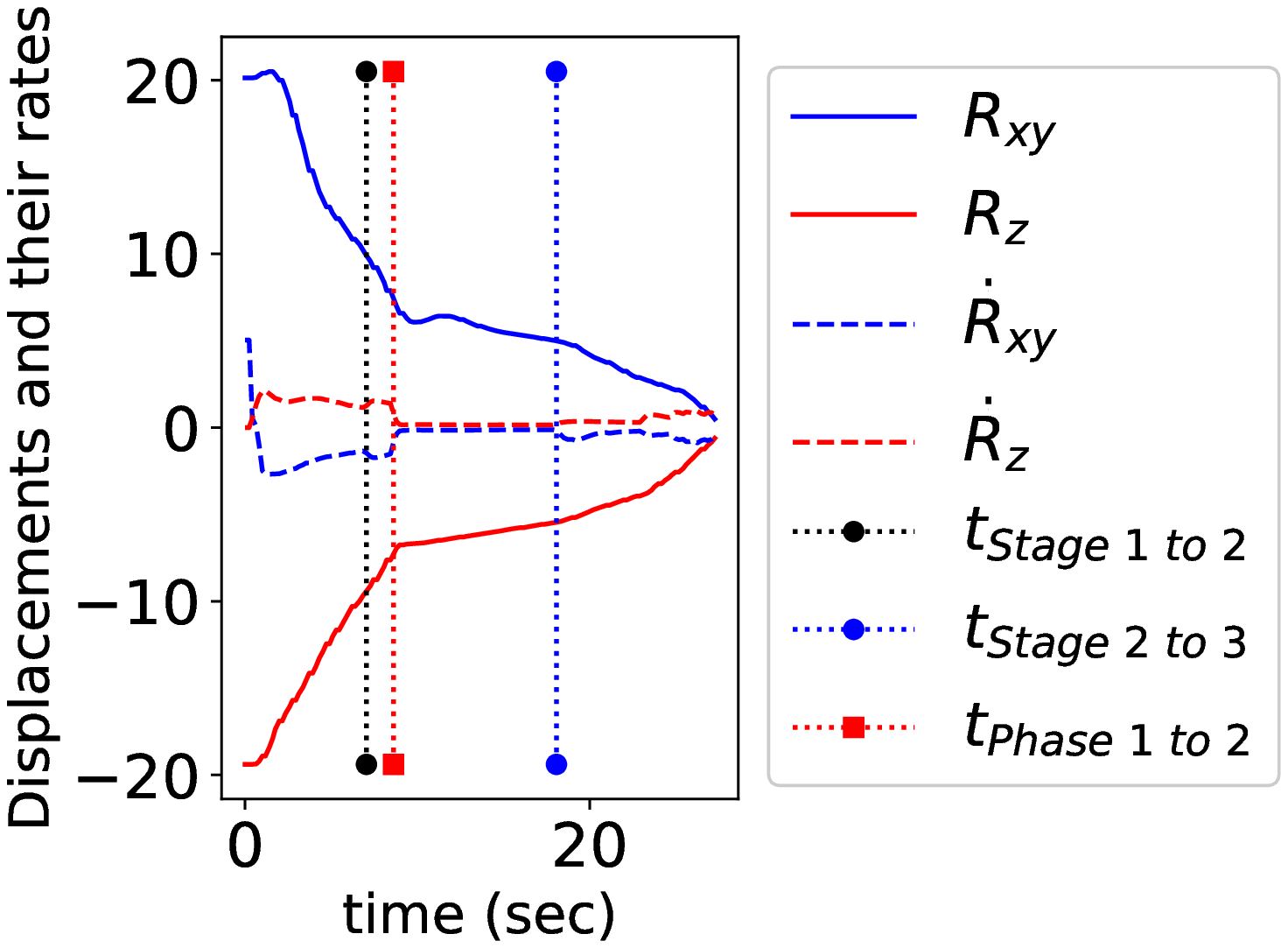}}%
\hfill 
\subcaptionbox{Non-Maneuvering Target\label{fig:Gslinedist}}{\includegraphics[width=.2\textwidth,height=9.5cm,keepaspectratio,trim={0.5cm 0.0cm 0.0cm .08cm}]{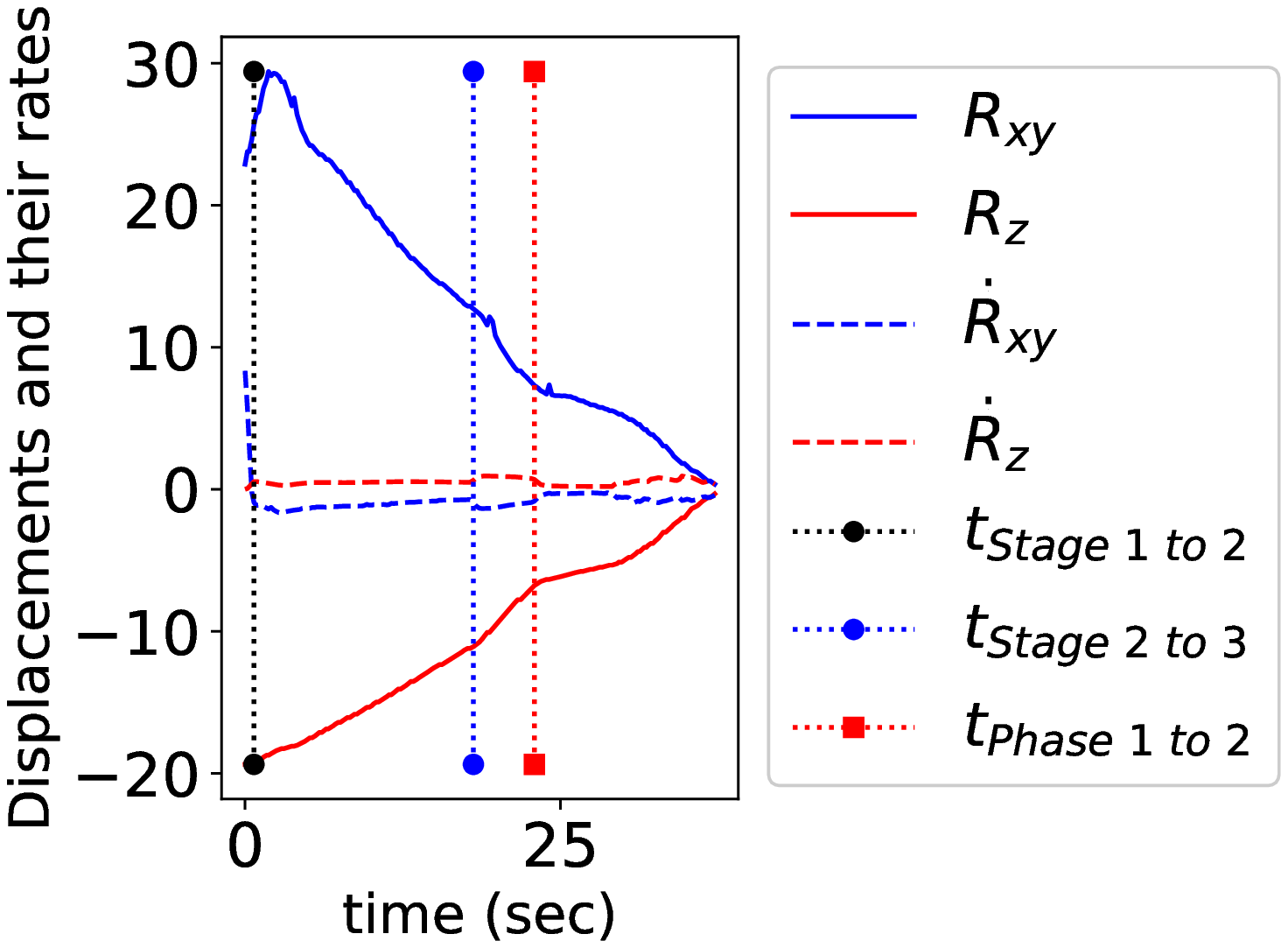}}%
\hfill 
\subcaptionbox{Constant Maneuvering Target\label{fig:Gciruclardist}}{\includegraphics[width=.2\textwidth,height=9.5cm,keepaspectratio,trim={0.5cm 0.0cm 0.0cm .08cm}]{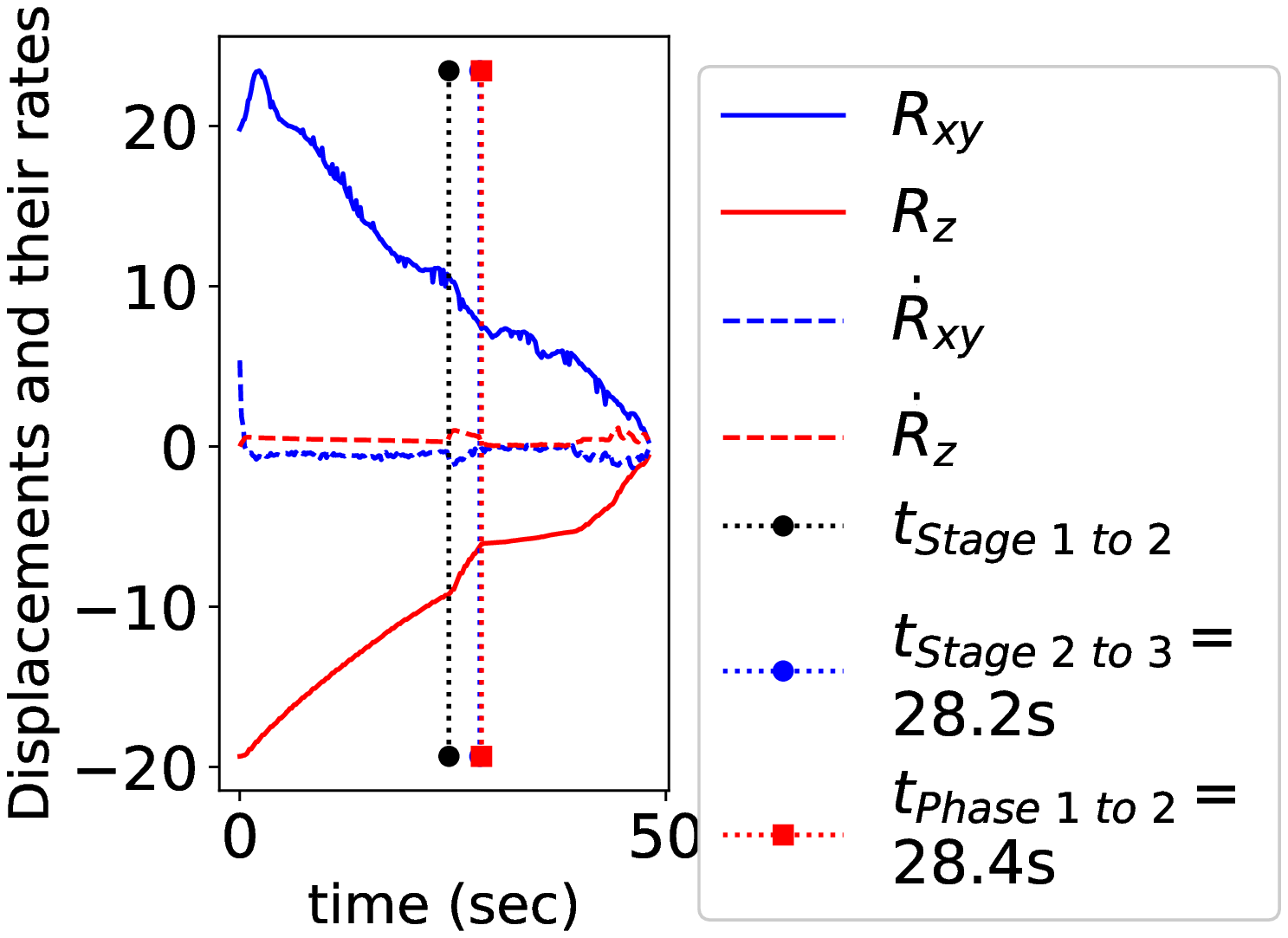}}%
\hfill 
\subcaptionbox{Sinusoidally Maneuvering Target\label{fig:Gsinusoidaldist}}{\includegraphics[width=.2\textwidth,height=9.5cm,keepaspectratio,trim={0.5cm 0.0cm 0.0cm .08cm}]{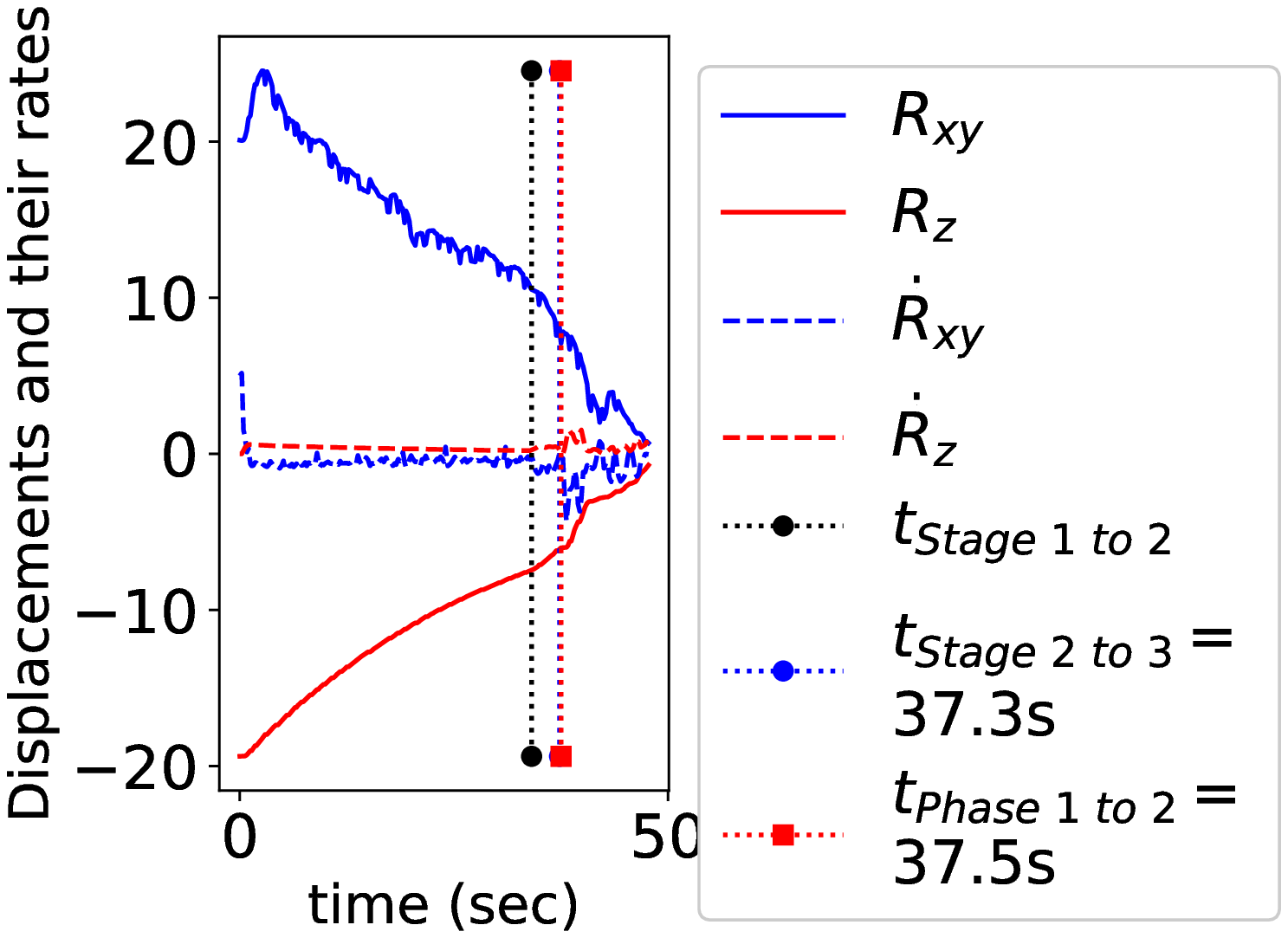}}%
\caption{Distance from target and it's projections on the xy-plane and along the z-axis}
\label{fig:Gdists}
\end{figure*}
\begin{figure*}[h!]
\centering
\subcaptionbox{Stationary Target\label{fig:Gstationaryinput}}{\includegraphics[width=.2\textwidth,height=9.5cm,keepaspectratio,trim={1cm 0.3cm 0.8cm .08cm}]{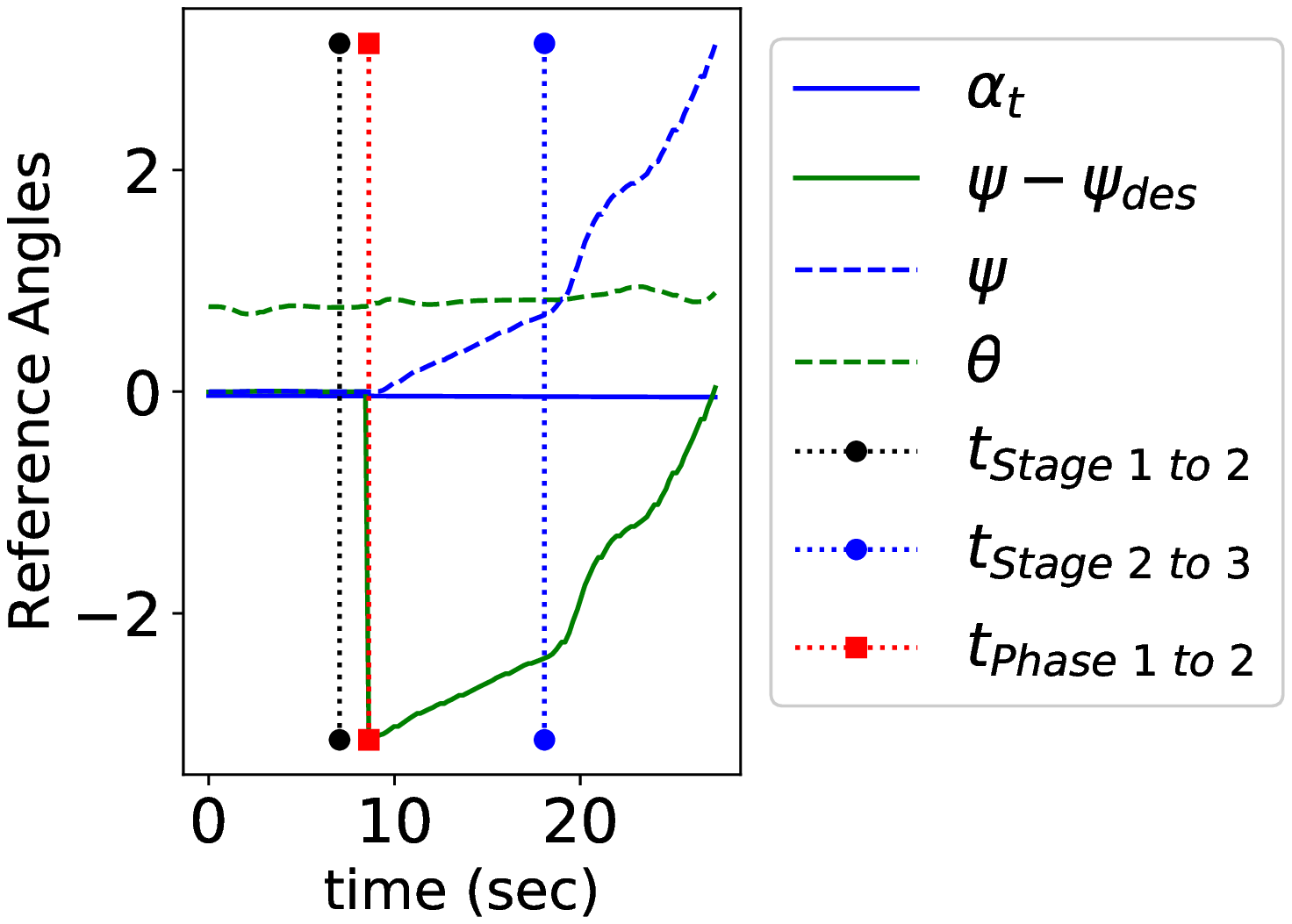}}%
\hfill 
\subcaptionbox{Non-Maneuvering Target\label{fig:Gslineinput}}{\includegraphics[width=.2\textwidth,height=9.5cm,keepaspectratio,trim={1cm 0.3cm 0.8cm .08cm}]{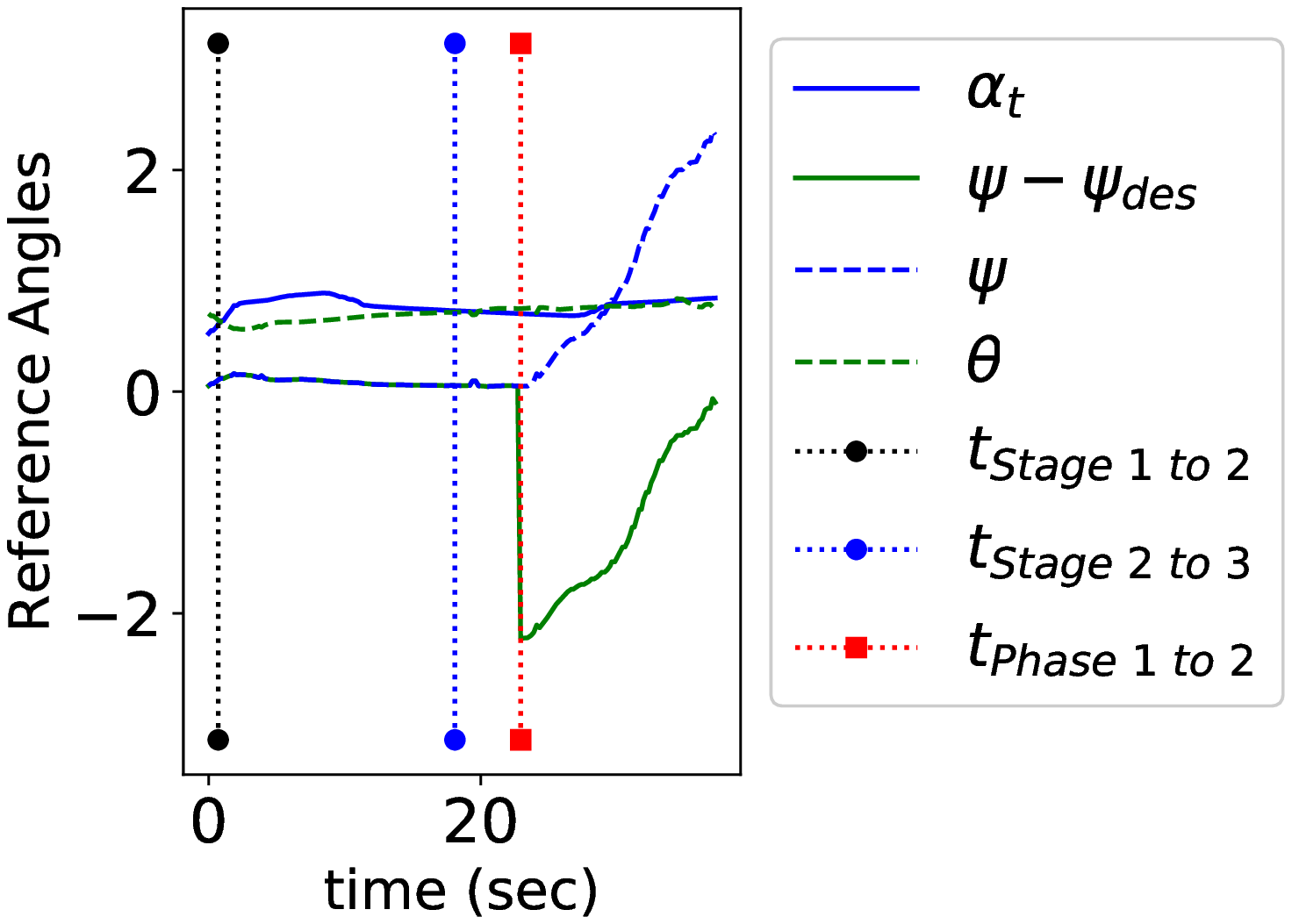}}%
\hfill 
\subcaptionbox{Constant Maneuvering Target\label{fig:Gcircularinput}}{\includegraphics[width=.2\textwidth,height=9.5cm,keepaspectratio,trim={1cm 0.3cm 0.8cm .08cm}]{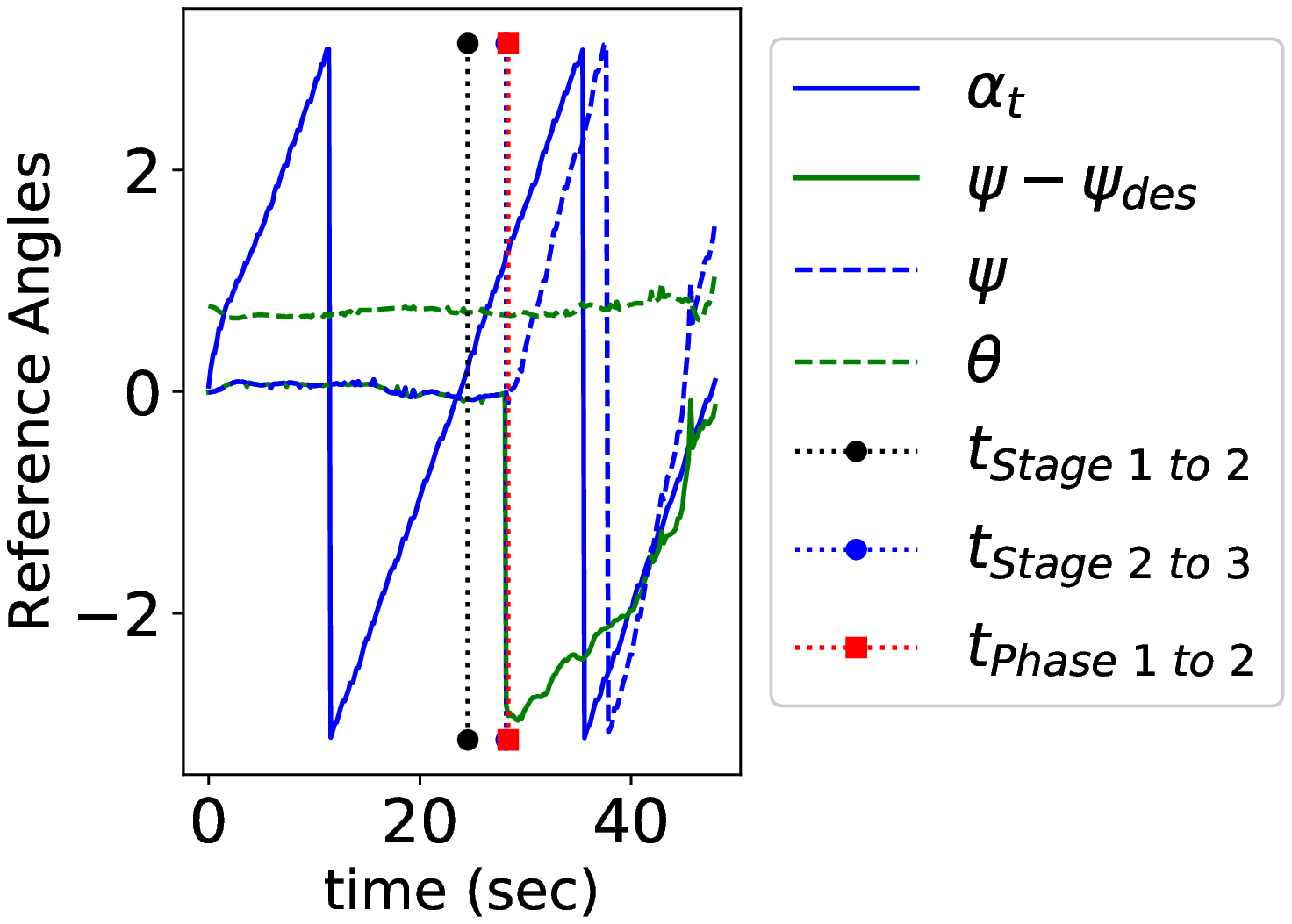}}%
\hfill 
\subcaptionbox{Sinusoidally Maneuvering Target\label{fig:Gsinusoidalinput}}{\includegraphics[width=.2\textwidth,height=9.5cm,keepaspectratio,trim={1cm 0.3cm 0.8cm .08cm}]{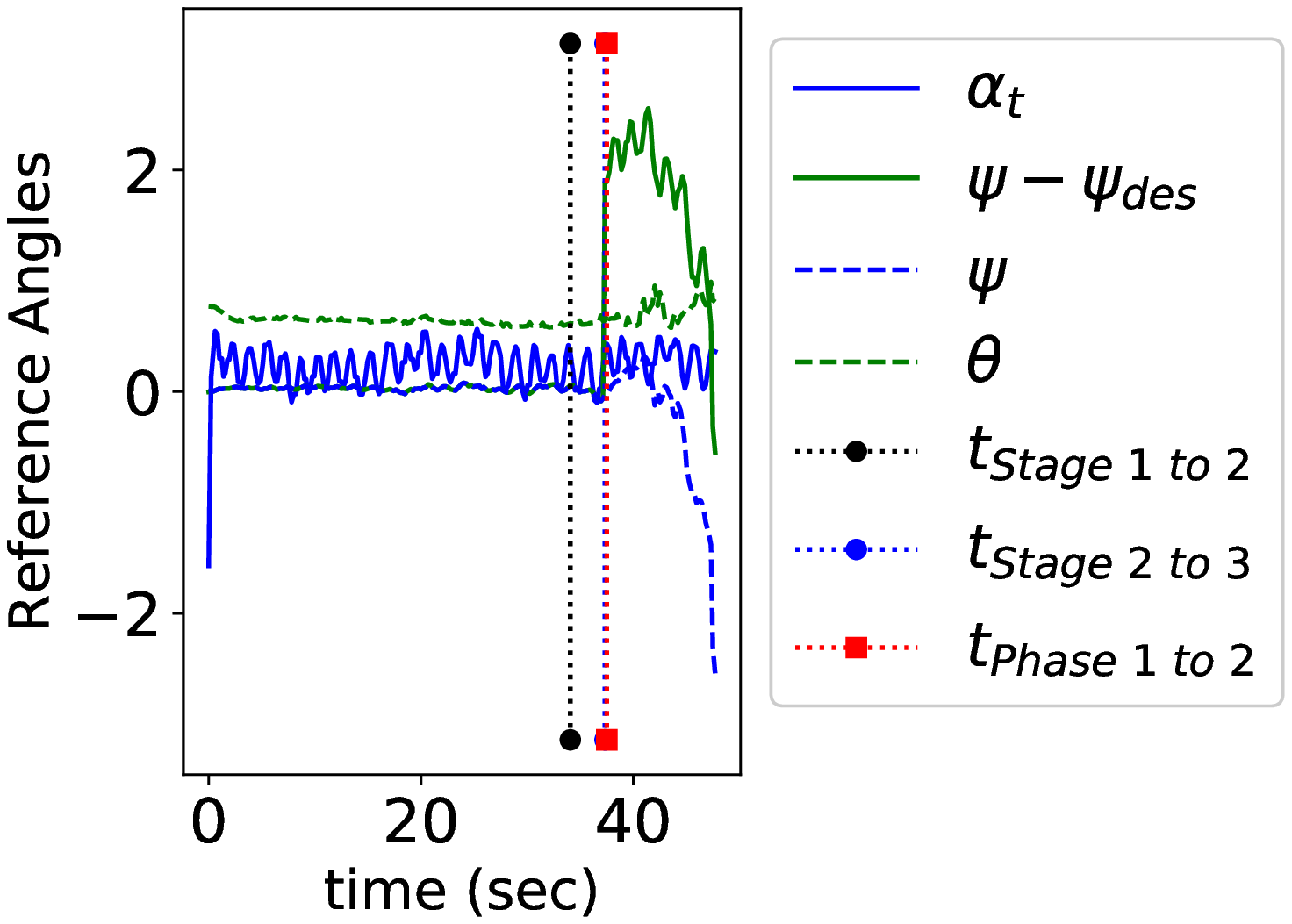}}%
\caption{Reference angles for UAV}
\label{fig:Greferenceangles}
\end{figure*}
\begin{figure*}[h!]
\centering
\subcaptionbox{Stationary Target\label{fig:Gstationaryheadspeed}}{\includegraphics[width=.2\textwidth,height=9.5cm,keepaspectratio,trim={1cm 0.3cm 0.0cm .08cm}]{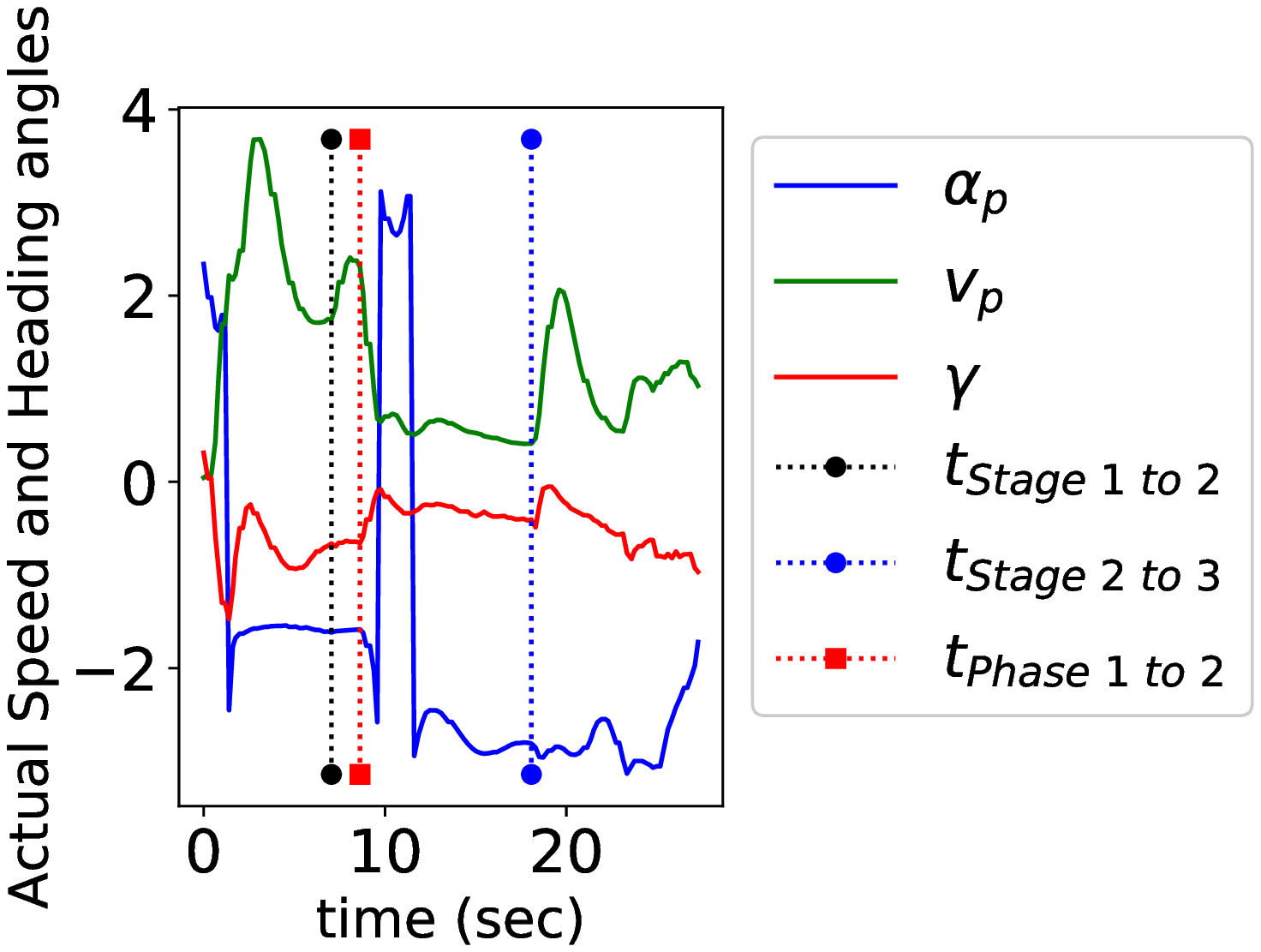}}%
\hfill 
\subcaptionbox{Non-Maneuvering Target\label{fig:Gslineheadspeed}}{\includegraphics[width=.2\textwidth,height=9.5cm,keepaspectratio,trim={1cm 0.3cm 0.0cm .08cm}]{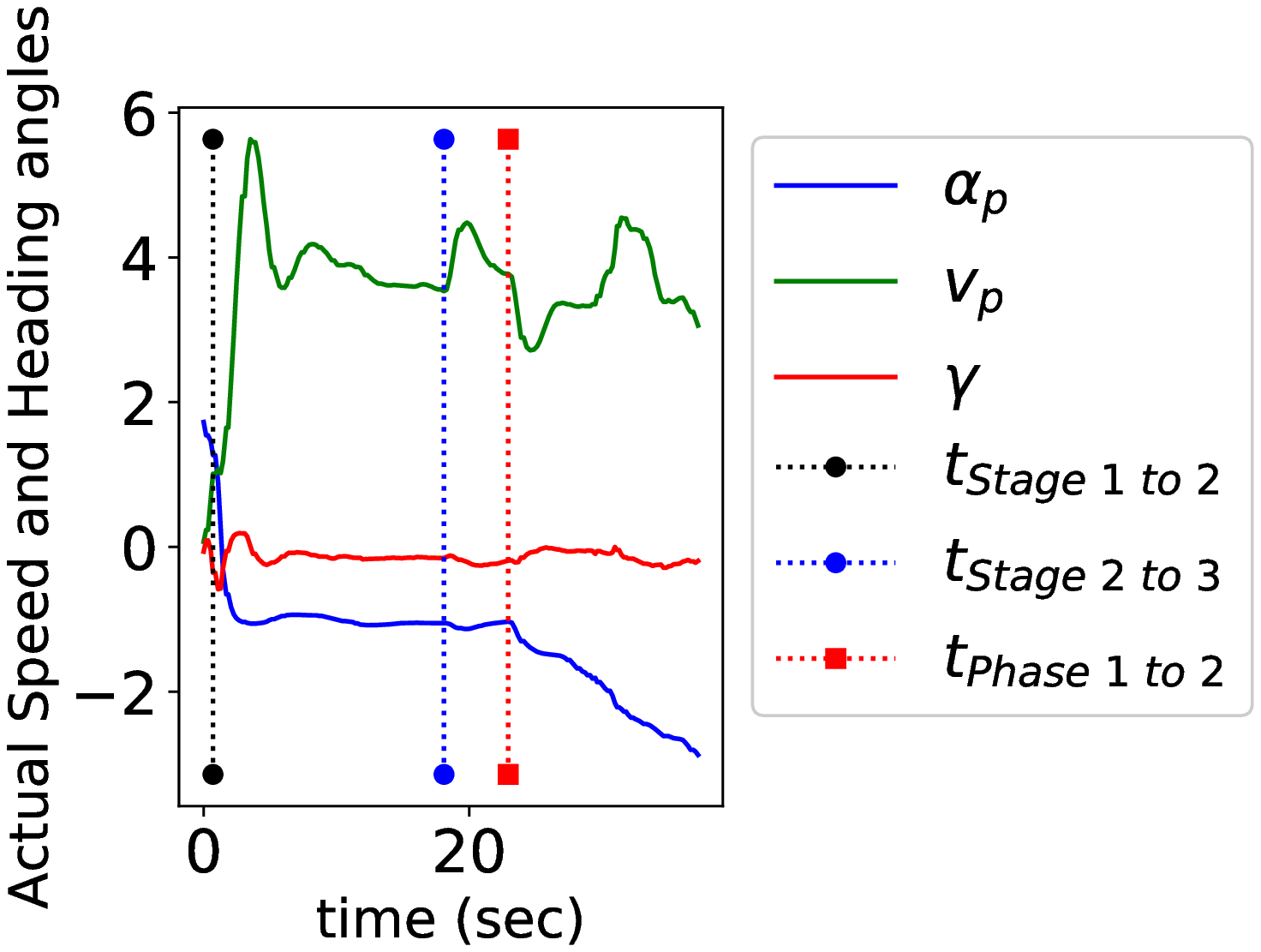}}%
\hfill 
\subcaptionbox{Constant Maneuvering Target\label{fig:Gcircularheadspeed}}{\includegraphics[width=.2\textwidth,height=9.5cm,keepaspectratio,trim={1cm 0.3cm 0.0cm .08cm}]{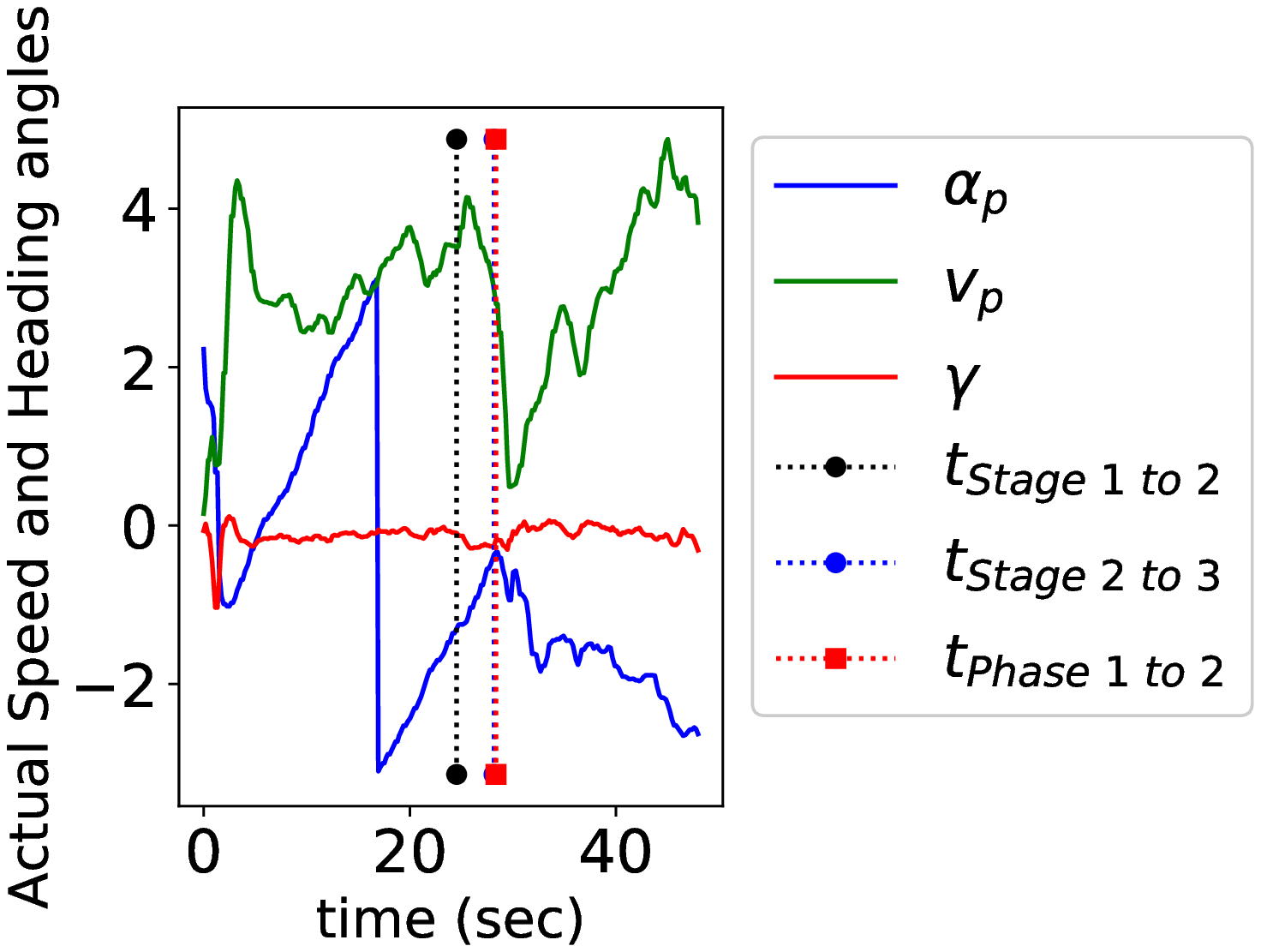}}%
\hfill 
\subcaptionbox{Sinusoidally Maneuvering Target\label{fig:Gsinusoidalheadspeed}}{\includegraphics[width=.2\textwidth,height=9.5cm,keepaspectratio,trim={1cm 0.3cm 0.0cm .08cm}]{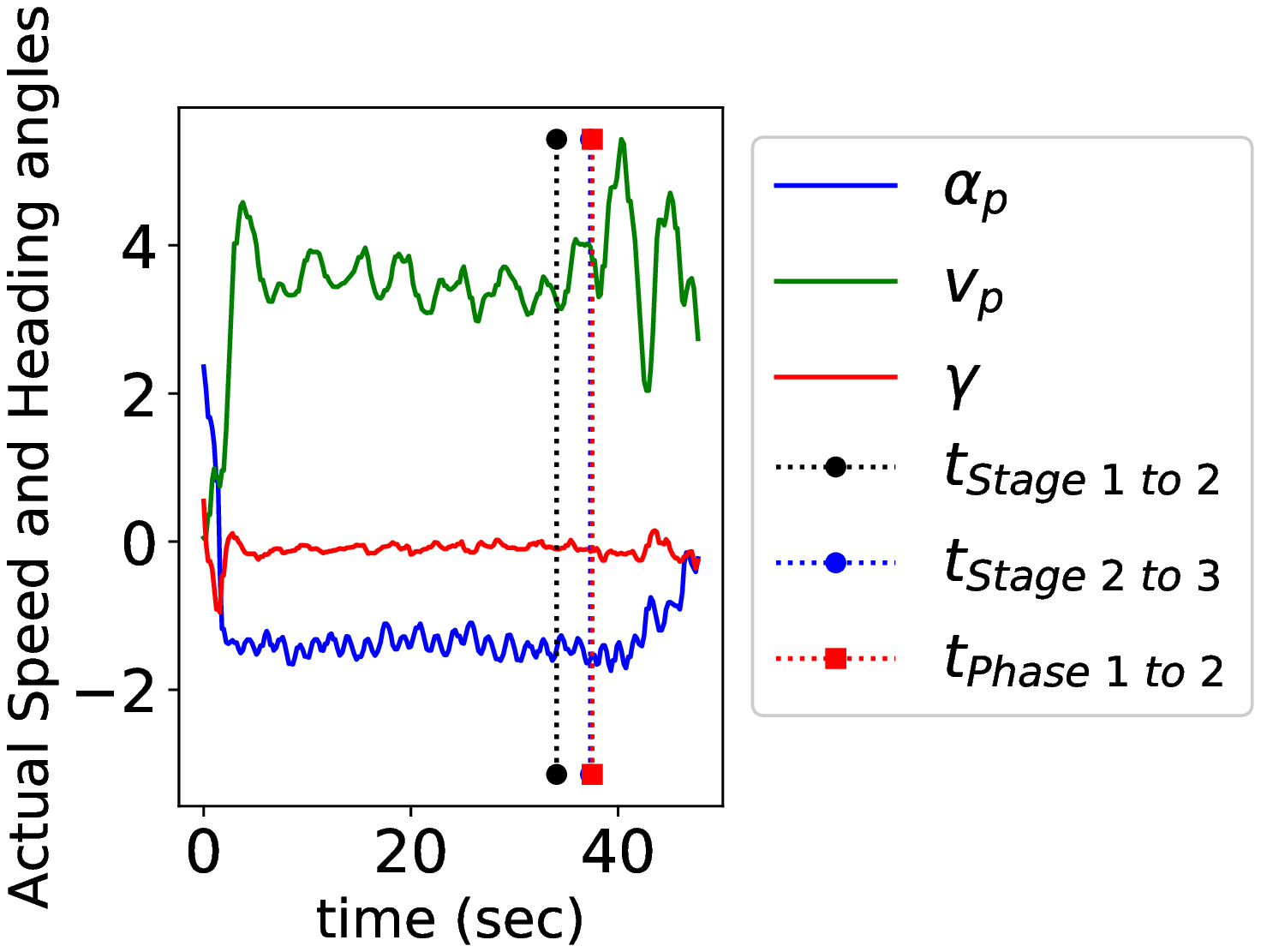}}%
\caption{UAV Speed and heading}
\label{fig:Gheadspeeds}
\end{figure*}
\begin{figure*}[h!]
\centering
\subcaptionbox{Stationary Target\label{fig:Gstationarytuning}}{\includegraphics[width=.2\textwidth,height=9.5cm,keepaspectratio,trim={1cm 0.3cm 0.7cm .08cm}]{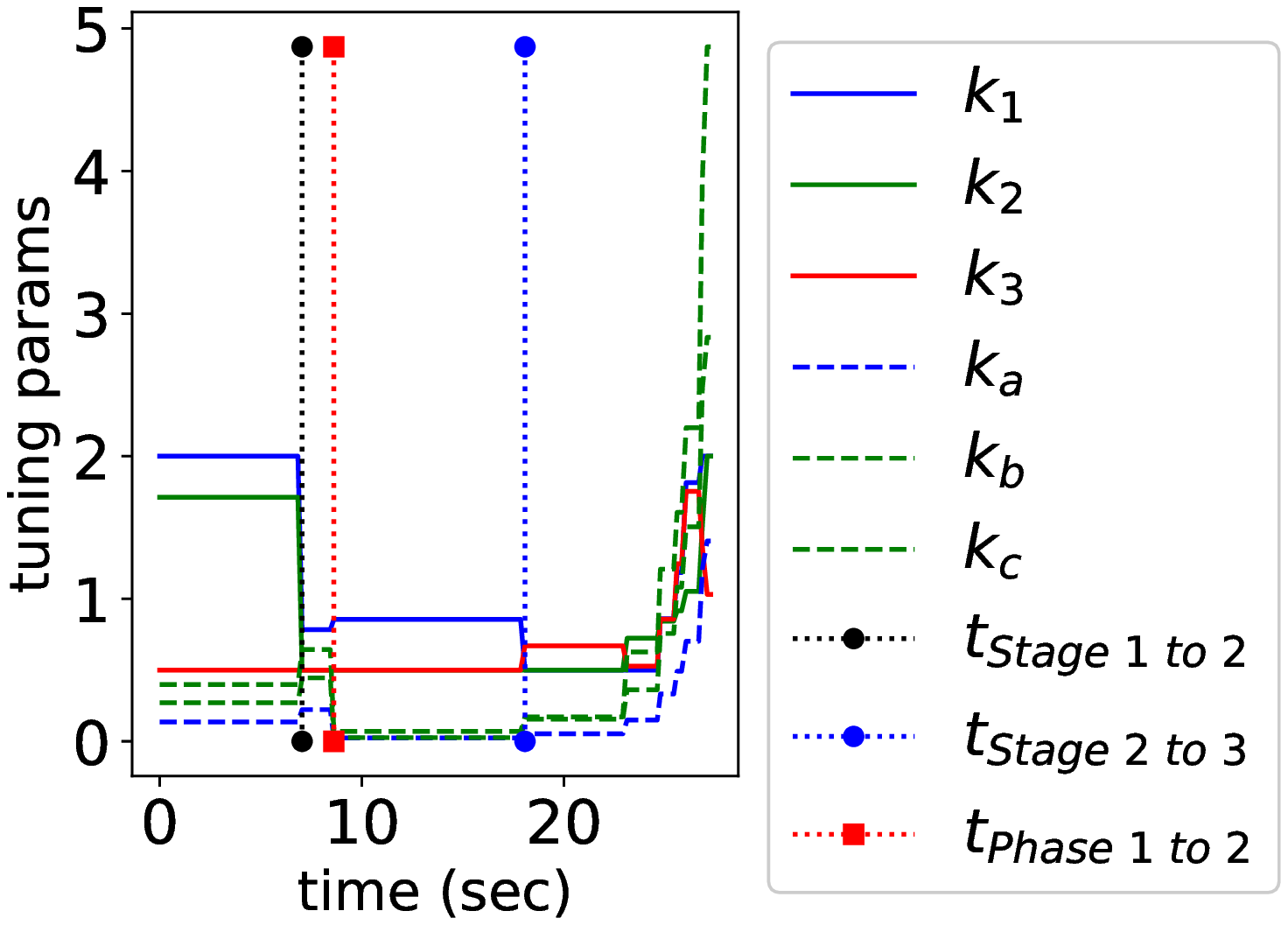}}%
\hfill
\subcaptionbox{Nonmaneuvering Target\label{fig:Gslinetuning}}{\includegraphics[width=.2\textwidth,height=9.5cm,keepaspectratio,trim={1cm 0.3cm 0.7cm .08cm}]{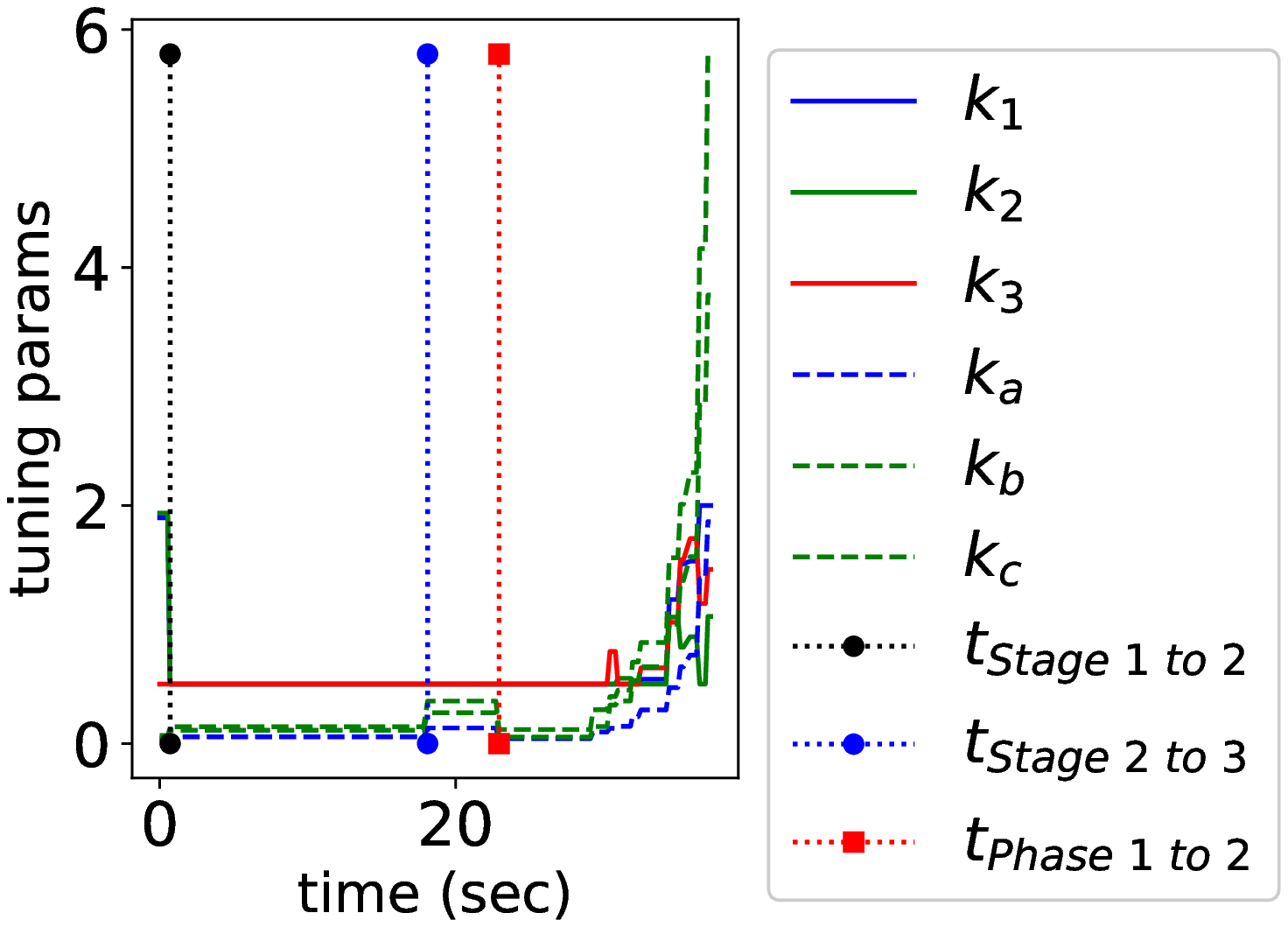}}%
\hfill 
\subcaptionbox{Constant-Maneuvering Target\label{fig:Gcirculartuning}}{\includegraphics[width=.2\textwidth,height=9.5cm,keepaspectratio,trim={1cm 0.3cm 0.7cm .08cm}]{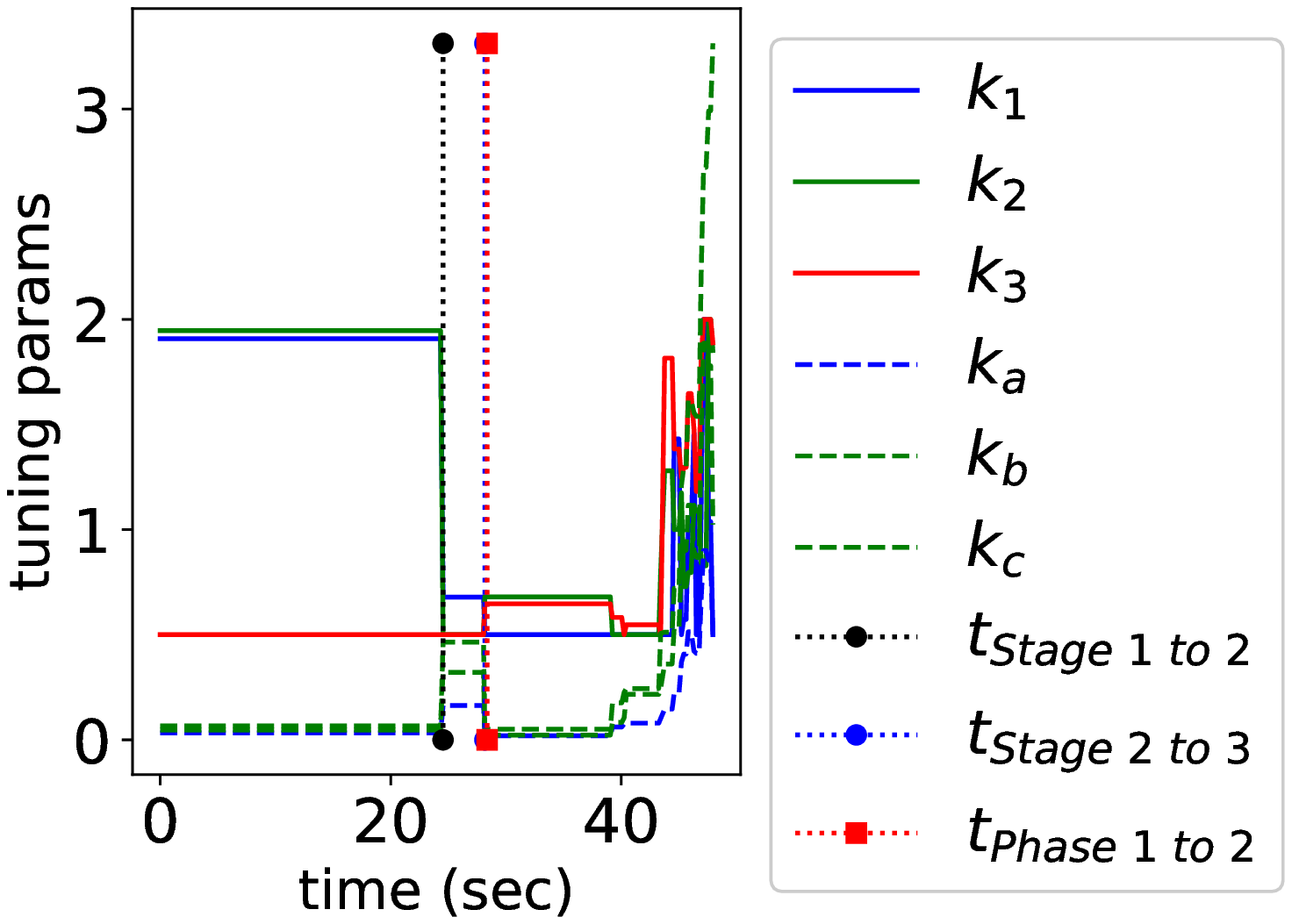}}%
\hfill 
\subcaptionbox{Sinusoidally Maneuvering Target\label{fig:Gsinusoidaltuning}}{\includegraphics[width=.2\textwidth,height=9.5cm,keepaspectratio,trim={1cm 0.3cm 0.7cm .08cm}]{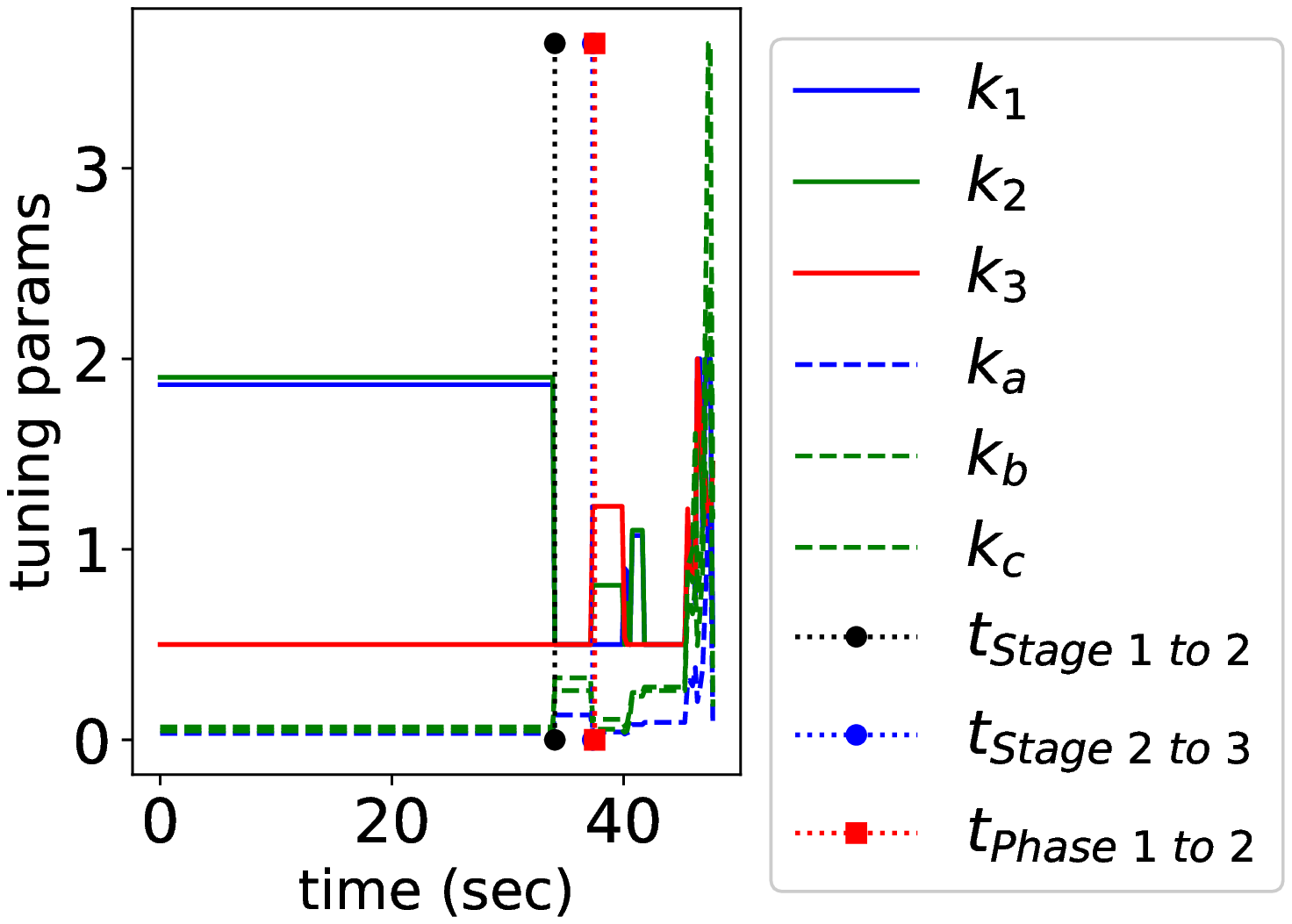}}%
\vspace{-1\baselineskip}
\caption{Guidance parameters}
\label{fig:Gtuningparams}
\end{figure*}

The results of simulations are presented in Figs. \ref{fig:Gtrajs} to \ref{fig:Gtuningparams}. 

Here, the trajectories of the UAV and the target (Fig. \ref{fig:Gtrajs}), the variation of the range between them along with its components and their rates (Fig. \ref{fig:Gdists}), the LOS angles (Fig. \ref{fig:Greferenceangles}), the UAV's velocity commands (Fig. \ref{fig:Gheadspeeds}) with time are depicted.
The guidance parameters tuned over multiple stages are depicted in Fig. \ref{fig:Gtuningparams}. The switching times from Stage 1 to 2 and Stage 2 to 3, and the switching time from Phase 1 to 2 (denoted by $t_{Stage \  1 \  to \  2}$, $t_{Stage \  2 \  to \  3}$, and $t_{Phase \  1 \  to \  2}$, respectively) are depicted by dotted lines plotted along the y-axis in Figs. \ref{fig:Gdists} to \ref{fig:Gtuningparams}.



\subsection{Inferences}
From the results presented above, it can be seen that the algorithm performed with good efficacy. In all cases presented, the UAV's speed ($V_p$), its heading angle ($\alpha_p$), and flight path angle($\gamma$) are not observed to change by a large magnitude in a short time, in any of the four cases presented, as can be seen in the Fig. \ref{fig:Gheadspeeds}. Also, $V_{p}<6$ m/s in all the cases.
Here, it should be noted that, the guidance scheme requires accurate information of $\ddot{\alpha_t}$ at each instant, while in this paper simple euler methods were used for the estimation of $\ddot{\alpha_t}$ from $\dot{\alpha_t}$. This leads to errors in tracking of $\psi_{des}=\alpha_t + \zeta_{des}$, which holds true for any non-constant maneuvering target. As can be seen from Fig. \ref{fig:sinusoidalinput}, $\psi-\psi_{des}$ oscillates at the end of the mission-time, rather than smoothly converging toward zero. However, even in presence of errors in state estimation, along with the inclusion of autopilot and system dynamics, the algorithm is able to nearly achieve desired approach angles ($\psi_{des}$ and $\theta_{des}$) in the four cases presented, as seen in Fig. \ref{fig:Greferenceangles}. Soft-landing on the target is also shown to be achieved in all four cases, since $R_{xy}, R_z,\dot{R}_{xy}, \dot{R}_z$ are close to zero at the end of the simulation as seen from Fig. \ref{fig:Gdists}. This shows that the guidance scheme can be easily implemented in a realistic scenario, that is, on an off-the-shelf autopilot system, without major refactoring.

\section{Conclusions and Future Work}\label{sec:conclusions}
A novel terminal angle-constrained guidance law inspired by the sliding mode control philosophy has been presented in this paper for the autonomous landing of a UAV on stationary, moving, and accelerating ground targets. Stability analysis and a detailed discussion on the selection of guidance parameters have also been presented. By numerical simulation studies conducted in two different ways, i.e., in the absence and the presence of an autopilot system, the guidance law has been shown to effectively achieve soft landing on stationary and maneuvering targets at desired approach angles(both the azimuth and the elevation angles). Due to the portability of the Ardupilot-ROS software, the setup used for implementing the guidance scheme in SITL simulations could be ported to real-world testing platforms with ease, i.e., without any major refactoring. Future works on the presented problem involve improvements in the estimation of the target state vector and experimental validation of the presented guidance algorithm on real test-beds.

\bibliographystyle{IEEEtran}
\bibliography{refs}

\end{document}